\DeclareMathAlphabet\mathbfcal{OMS}{cmsy}{b}{n}
\newcommand{\mbf}{\mathbf}
\newcommand{\mc}{\mathcal}
\newcommand{\bmx}{\begin{bmatrix}}
\newcommand{\emx}{\end{bmatrix}}
\newcommand{\vep}{\varepsilon}
\renewcommand{\l}{\left}
\renewcommand{\r}{\right}
\def\wh{\widehat}
\def\wt{\widetilde}
\newcommand{\E}[0]{\mathsf{E}}
\newcommand{\Var}[0]{\mathsf{Var}}
\newcommand{\tr}[0]{\mathsf{tr}}
\newcommand{\p}{\mathsf{P}}
\newcommand{\R}{\mathbb{R}}
\newcommand{\Z}{\mathbb{Z}}
\newcommand{\iid}{\mbox{\scriptsize{iid}}}
\newcommand{\nn}{\nonumber}
\renewcommand{\sc}{\text{SC}}
\newcommand{\cp}{\theta}  
\newcommand{\Cp}{\Theta}  
\renewcommand{\c}{k} 
\newcommand{\C}{\mc K} 
\theoremstyle{definition}
\newtheorem{theorem}{Theorem}[section]
\theoremstyle{definition}
\theoremstyle{definition}
\newtheorem{lem}[theorem]{Lemma}
\theoremstyle{definition}
\newtheorem{prop}[theorem]{Proposition}
\theoremstyle{definition}
\newtheorem{assumption}{Assumption}[section]
\theoremstyle{remark}
\newtheorem{remark}{Remark}[section]
\theoremstyle{definition}
\theoremstyle{definition}
\title{Multiple change point detection under serial dependence: 
Wild contrast maximisation and gappy Schwarz algorithm}
\author{Haeran Cho$^1$ \and Piotr Fryzlewicz$^2$}
\begin{document}

\maketitle

\footnotetext[1]{School of Mathematics, University of Bristol.
Email: \url{haeran.cho@bristol.ac.uk}.}

\footnotetext[2]{Department of Statistics, London School of Economics.
Email: \url{p.fryzlewicz@lse.ac.uk}.} 


\begin{abstract}
We propose a methodology for detecting multiple change points in the mean of an otherwise stationary, autocorrelated, linear time series. It combines solution path generation based on the wild contrast maximisation principle, and an information criterion-based model selection strategy termed gappy Schwarz algorithm. The former is well-suited to separating shifts in the mean from fluctuations due to serial correlations, while the latter simultaneously estimates the dependence structure and the number of change points without performing the difficult task of estimating the level of the noise as quantified e.g.\ by the long-run variance. We provide modular investigation into their theoretical properties and show that the combined methodology, named WCM.gSa, achieves consistency in estimating both the total number and the locations of the change points. The good performance of WCM.gSa is demonstrated via extensive simulation studies, and we further illustrate its usefulness by applying the methodology to London air quality data.
\end{abstract}

\noindent%
{\it Keywords:}  data segmentation, wild binary segmentation, information criterion, autoregressive time series
\section{Introduction}
\label{sec:intro}

This paper proposes a new methodology for 
detecting possibly multiple change points in the piecewise constant mean 
of an otherwise stationary, linear time series. 
This is a well-known difficult problem in multiple change point analysis, 
whose challenge stems from the fact that change points can mask as natural fluctuations 
in a serially dependent process and vice versa.
We briefly review the existing literature on multiple change point detection 
in the presence of serial dependence and situate our new proposed methodology in this context;
see also \cite{aue2013} for a review.

One line of research extends the applicability of the test statistics 
developed for independent data, such as the CUSUM \citep{csorgo1997} and 
moving sum (MOSUM, \citeauthor{huvskova2001}; \citeyear{huvskova2001}) statistics,
to time series setting.
Their performance depends on the estimated level of noise quantified e.g.\ by the long-run variance (LRV),
and the estimators of the latter in the presence of multiple change points
have been proposed \citep{tecuapetla2017, eichinger2018, dette2018}.
The estimation of the LRV, even when the mean changes are not present,
has long been noted as a difficult problem \citep{robbins2011};
the popularly adopted kernel estimator of LRV tends to incur downward bias \citep{den1997, chan2017}, 
and can even take negative values when the LRV is small \citep{huvskova2010}.
It becomes even more challenging in the presence of (possibly) multiple change points,
and the estimators may be sensitive to the choice of tuning parameters 
which are often related to the frequency of change points. 
Self-normalisation of test statistics avoids direct estimation of this nuisance parameter
\citep{shao2010, pevsta2018} 
but theoretical investigation into its validity is often limited to change point testing,
i.e.\ when there is at most a single change point, with the exception of \cite{wu2019} 
and \cite{zhao2021segmenting}, both of which adopt local window-based procedures.
Consistency of the methods utilising penalised least squares estimation \citep{lavielle2000} 
or Schwarz criterion \citep{cho2019} constructed without further parametric assumptions,
has been established under general conditions permitting serial dependence and heavy-tails.
Their consistency relies on the choice of the penalty,
which in turn depends on the noise level.

The second line of research utilises particular linear or non-linear time series models
such as the autoregressive (AR) model, 
and estimates the serial dependence and change point structures simultaneously.
AR($1$)-type dependence has often been adopted to describe the serial correlations
in this context: \cite{chakar2017} and \cite{romano2020} propose to
minimise the penalised cost function for detection of multiple change points
in the mean of AR($1$) processes via dynamic programming,
and \cite{fang2020} study a pseudo-sequential approach to
change point detection in the level or slope of the data.
\cite{lu2010mdl} investigate the problem of climate time series modelling
by allowing for multiple mean shifts and periodic AR noise.
\cite{fryzlewicz2020n} proposes to circumvent the need for accurate estimation of AR parameters
through the use of a multi-resolution sup-norm (rather than the ordinary least squares) 
in fitting the postulated AR model, but this is only possible because the goal of the method
is purely inferential and therefore different from ours.
We also mention that \cite{davis2006, davis2008, cho2012a, bardet2012, chan2014, yau2016, korkas2017}, among others,
study multiple change point detection under piecewise stationary, univariate time series models,
and \cite{safikhani2020, cho2021high, cho2022high}
under high-dimensional time series models.

We now describe our proposed methodology against this literature background
and summarise its novelty and main contributions of this paper.
\vspace{-5pt}
\begin{enumerate}[wide, labelwidth=!, labelindent=0pt, label = (\roman*)]
\item The first step of the proposed methodology 
constructs a sequence of candidate change point models
by adopting the Wild Contrast Maximisation (WCM) principle:
it iteratively locates the next most likely change point in the data
between the previously proposed change point estimators,
as the one maximising a given contrast (in our case, the absolute CUSUM statistic)
in the data sections 
over a collection of intervals of varying lengths and locations.
It produces a complete solution path to the change point detection problem
as a decreasing sequence of max-CUSUMs corresponding 
to the successively proposed change point candidates.
The WCM principle has successfully been applied to the problem of multiple change point detection
in the presence of i.i.d.\ noise \citep{fryzlewicz2014, fryzlewicz2019}.
We show that it is particularly useful under serial dependence
by generating a large gap between the max-CUSUMs attributed to change points and 
those attributed to the fluctuations due to serial correlations.
This motivates a new, `gappy' model sequence generation procedure which, by
considering only some of the candidate models along the solution path
that correspond to large drops in the decreasing sequence of max-CUSUMs as serious contenders,
systematically selects a small subset of model candidates.
We justify this gappy model sequence generation theoretically 
and further demonstrate numerically how it substantially facilitates the subsequent model selection step.

\item The second step performs model selection 
on the sequence of candidate change point models generated in the first step.
To this end, we propose a backward elimination strategy termed gappy Schwarz algorithm (gSa),
a new application of Schwarz criterion \citep{schwarz1978} constructed 
under a parametric, AR model assumption on the noise.
Information criteria have been widely adopted for model selection in change point problems
\citep{yao1988, kuhn2001}.
However, through its application on the gappy model sequence,
our proposal differs from the conventional use of an information criterion
in the change point literature which involve its 
global \citep{davis2006, killick2012, romano2020} or local \citep{chan2014, fryzlewicz2014} minimisation.
Rather than setting out to minimise Schwarz criterion,
the Schwarz algorithm starts from the largest model in consideration and
iteratively compares a pair of consecutive models
by evaluating the reduction of the cost due to newly introduced change point estimators,
offset by the increase of model complexity as measured by Schwarz criterion.
This has the advantage over the direct minimisation of the information criterion on a solution path
as it avoids the substantial technical challenges linked to 
dealing with under-specified models in the presence of serial dependence.
\end{enumerate}
\vspace{-5pt}

The two ingredients, WCM-based gappy model sequence generation
and model selection via Schwarz algorithm, make up the WCM.gSa methodology.
Throughout the paper, we highlight the important roles played by these two components
and argue that WCM.gSa offers state-of-the-art performance in the problem 
of multiple change point detection under serially dependent noise.
WCM.gSa is modular in the sense that each ingredient
can be combined with alternative model selection or model sequence generation procedures, respectively. 
We provide separate theoretical analyses of the two steps so that 
they can readily be fed into the analysis of such modifications,
as well as showing that the combined methodology, WCM.gSa, achieves consistency 
in estimating the total number and the locations of multiple change points.


The paper is organised as follows. 
In Sections~\ref{sec:cp} and~\ref{sec:ms}, we introduce the two ingredients of WCM.gSa
individually, and show its consistency
in multiple change point detection in the presence of serial dependence.
Section \ref{sec:simreal} summarises our numerical results and applies WCM.gSa to London air quality datasets.
The Supplementary Appendix contains 
comprehensive simulation studies, an additional data application 
to central England temperature data, and
the proofs of the theoretical results.
The R software implementing WCM.gSa is available from the R package {\tt breakfast} \citep{accf20}.

\section{Candidate model sequence generation via WCM principle}
\label{sec:cp}

\subsection{WCM principle and solution path generation}
\label{sec:cp:method}

We consider the canonical change point model
\begin{align} 
X_t = f_t + Z_t = f_0 + \sum_{j = 1}^{q} f^\prime_j \cdot  \mathbb{I}(t \ge \cp_j + 1) + Z_t,
\quad t = 1,\ldots, n.
\label{eq:model}
\end{align}
Under model~\eqref{eq:model}, 
the set $\Cp := \{\cp_1, \ldots, \cp_{q}\}$ with $\cp_j = \cp_{j, n}$, 
contains $q$ change points (with $\cp_0 = 0$ and $\cp_{q + 1} = n$)
at which the mean of $X_t$ undergoes changes of size $f^\prime_j$.
We assume that the number of change points $q$ does not vary with the sample size $n$, and
we allow serial dependence in the sequence of errors $\{Z_t\}_{t = 1}^n$ with $\E(Z_t) = 0$.

A large number of multiple change point detection methodologies 
have been proposed for a variant of model~\eqref{eq:model} 
in which the errors $\{Z_t\}_{t = 1}^n$ are independent.
In particular, a popular class of multiscale methods aim to isolate change points for their detection 
by drawing a large number of sub-samples of the data 
living on sub-intervals of $[1, n]$.
When a sufficient number of sub-samples are drawn, 
there exists at least one interval which is well-suited for the detection and localisation of 
each $\cp_j, \, j = 1, \ldots q$, whose location can be estimated as 
the maximiser of the series of CUSUM statistics computed on this interval.
Methods in this category include the 
Wild Binary Segmentation (WBS, \citeauthor{fryzlewicz2014}; \citeyear{fryzlewicz2014}),
the Seeded Binary Segmentation \citep{kovacs2020} and the WBS2 \citep{fryzlewicz2019}.
All of the above are
based on the WCM principle, i.e.\ the recursive maximisation of the contrast between the means of the data to the left and right of each putative change point as measured by the CUSUM statistic, over a large number of intervals, and their theoretical properties have been established assuming i.i.d.\ (sub-)Gaussianity on $\{Z_t\}_{t = 1}^n$.
We propose the term Wild Contrast Maximisation rather than, say, `wild CUSUM maximisation' since, in other change point detection problems, the WCM principle can be applied with statistics other than CUSUM, e.g.\ generalised likelihood ratio tests.

In the remainder of this paper, we focus on WBS2,
whose key feature is that for any given $0 \le s < e \le n$, we identify the sub-interval $\{s_\circ + 1, \ldots, e_\circ\} \subset \{s + 1, \ldots, e\}$ and its inner point $\c_\circ \in \{s_\circ + 1, \ldots, e_\circ - 1\}$, which obtains a local split of the data that 
yields the maximum CUSUM statistic.
More specifically, let $\mc R_{s, e}$ denote a subset of 
$\mc A_{s, e} := \{(\ell, r) \in \Z^2:\, s \le \ell < r \le e \text{ and } r - \ell > 1\}$,
selected either randomly or deterministically, with $\vert \mc R_{s, e} \vert = \min(R_n, \vert \mc A_{s, e} \vert)$ for some given $R_n \le n(n - 1)/2$.
Then, we identify $(s_\circ, e_\circ) \in \mc R_{s, e}$ that achieves the maximum absolute CUSUM statistic, as
\begin{align}
(s_\circ, \c_\circ, e_\circ) &= 
{\arg\max}_{\substack{(\ell, \c, r): \, \ell < \c < r \\ (\ell, r) \in \mc R_{s, e}}}
\l\vert \mc X_{\ell, \c, r} \r\vert, \quad \text{where} \nn
\\
\mc X_{\ell, \c, r} 
&= \sqrt{\frac{(\c - \ell)(r - \c)}{r - \ell}} \l(\frac{1}{\c - \ell} \sum_{t = \ell + 1}^{\c} X_t -  
\frac{1}{r - \c} \sum_{t = \c + 1}^r X_t\r).
\label{eq:cusum:def}
\end{align}
Starting with $(s, e) = (0, n)$, recursively repeating the above operation
over the segments defined by the thus-identified $\c_\circ$,
i.e.\ $\{s + 1, \ldots, \c_\circ\}$ and $\{\c_\circ + 1, \ldots, e\}$, generates a complete solution path
that attaches an order of importance to $\{1, \ldots, n - 1\}$
as change point candidates; see Algorithm~\ref{alg:cp} in Appendix~\ref{app:alg}
for the pseudo code of the WBS2 algorithm, and for 
how to to select $\mc R_{s, e}$ from $\mc A_{s, e}$ via deterministic sampling.
Later in Section~\ref{sec:ms}, we further assume that $\{Z_t\}_{t \in \Z}$ follows an AR model.
Under such a model, we may replace the CUSUM statistic with the likelihood ratio test statistic but this tends to numerical instabilities since~(i) the number of parameters to be estimated is greater for the likelihood ratio test statistic, while our interest lies in detecting mean shifts only, and (ii) the generation of the solution path involves computation of contrast statistics on short segments.

We denote by $\mc P_0$ the output generated by the WBS2:
each element of $\mc P_0$ contains 
the triplet of the beginning and the end of the interval and the break
that returns the maximum contrast (measured as in~\eqref{eq:cusum:def}) 
at a particular iteration, and the corresponding max-CUSUM statistic.
The order of the sorted max-CUSUMs (in decreasing order) 
provides a natural ordering of the candidate change points,
which gives rise to the following solution path
$\mc P := \{(s_{(m)}, \c_{(m)}, e_{(m)}, \mc X_{(m)}): \, m = 1, \ldots, P\}$,
where 
\begin{align}
\label{eq:cusum:sorted}
\mc X_{(m)} := \vert \mc X_{s_{(m)}, \c_{(m)}, e_{(m)}} \vert \quad \text{satisfying} \quad
\mc X_{(1)} \ge \mc X_{(2)} \ge \ldots \ge \mc X_{(P)} > 0;
\end{align}
if $\mc X_{(m)} = 0$ for some $m \le \vert \mc P_0 \vert$, then
$(s_{(m)}, \c_{(m)}, e_{(m)})$ is not associated with any change point
and thus such entries are excluded from the solution path~$\mc P$.

The WCM principle provides a good basis for model selection, i.e.\ 
selecting the correct number of change points.
This is due to the iterative identification of the local split with the maximum contrast, 
which helps separate the large max-CUSUMs attributed to mean shifts, from those which are not.
In the next section, we propose how to utilise the property of the solution path $\mc P$ generated according to the WCM principle. 


\subsection{Gappy candidate model sequence generation}
\label{sec:gappy}

\begin{figure}[htb]
\centering
\includegraphics[width = .7\textwidth]{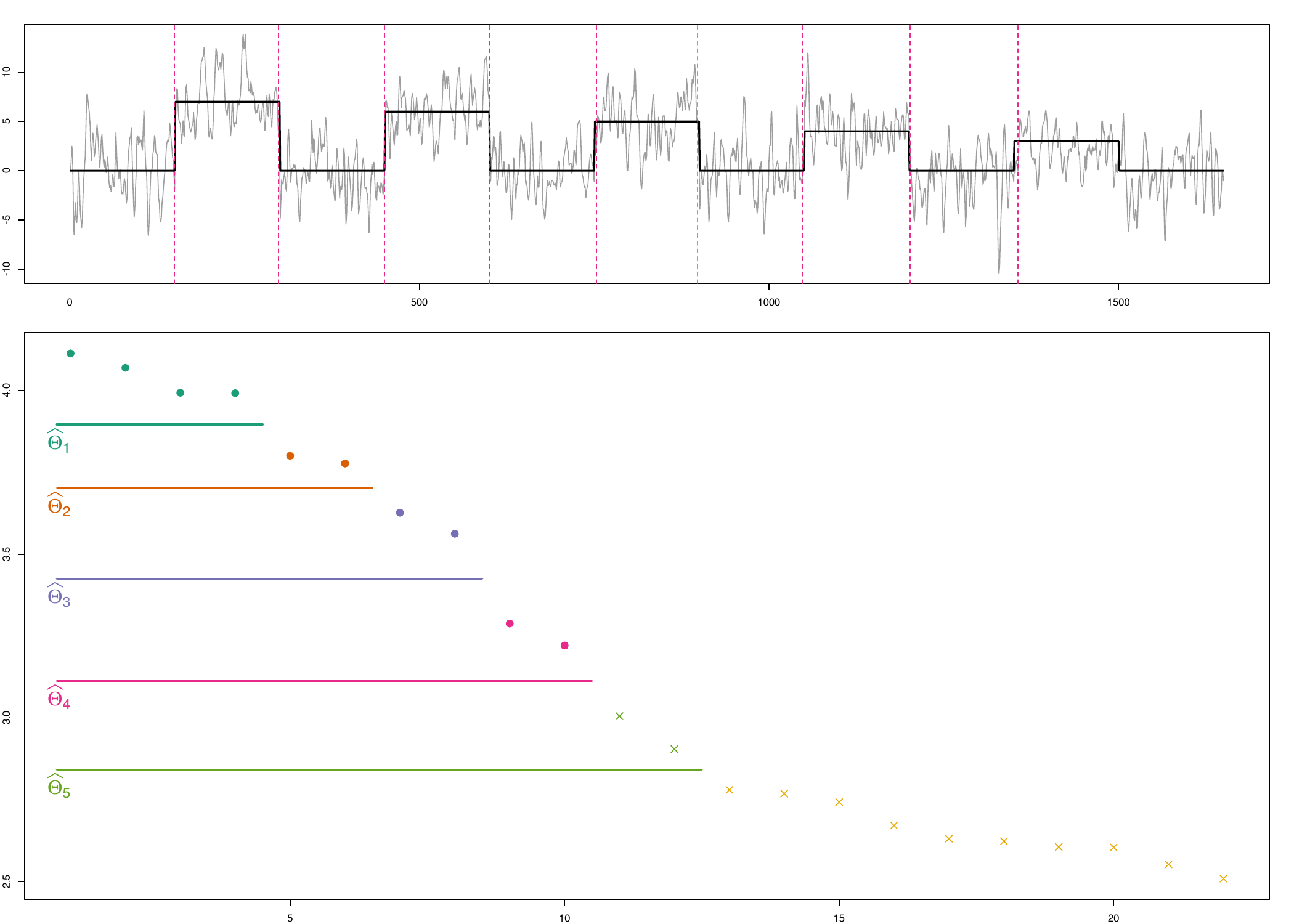}
\caption{Top: a realisation from~\ref{m-11} in Appendix~\ref{sec:add:sim} and the piecewise constant mean with $q = 10$ change points. Vertical lines denote the change point estimators contained in the candidate model $\wh{\Cp}_4$ which correctly estimates $\Cp$. Bottom: $\mc Y_{(m)}, \, m = 1, \ldots, 22$ (those associated with $\c_{(m)}$ corresponding to the $q$ change points are denoted by circles, the remainder by crosses), along with the sequence of nested models $\wh{\Cp}_l, \, l = 1, \ldots, 5$.}
\label{fig:toy}
\end{figure}

The solution path $\mc P$ consists of a sequence of candidate change point models
$\C_1 \subset \C_2 \subset \ldots$ with $\C_l := \{\c_{(1)}, \ldots, \c_{(l)}\}$, which estimate the total number and locations of the mean shifts in $f_t$.
In this section, we propose a `gappy' candidate model sequence generation step which
selects a subset of the above model sequence by discarding candidate models
that are not likely to be the final model.
More specifically, by the construction of WBS2, which iteratively identifies
the local split of the data with the most contrast (max-CUSUM),
we expect to observe a large gap between the CUSUM statistics $\mc X_{(m)}$
computed over those intervals $(s_{(m)}, e_{(m)})$ 
that contain change points well within their interior, and the remaining CUSUMs.
Therefore, for the purpose of model selection, 
we can exploit this large gap in $\mc X_{(m)}, \, 1 \le m \le P$,
or equivalently, in $\mc Y_{(m)} := \log(\mc X_{(m)})$;
we later show that under some assumptions on the size of changes and the level of noise,
the large log-CUSUMs $\mc Y_{(m)}$ attributed to change points scale as $\log(n)$ 
while the rest scale as $\log\log(n)$.

For the identification of the large gap in $\mc Y_{(1)} \ge \ldots \ge \mc Y_{(P)}$,
the simplest approach is to look for the largest difference 
$\mc Y_{(m)} - \mc Y_{(m + 1)}$.
However, this largest gap may not necessarily correspond to
the difference between the max-CUSUMs attributed to mean shifts
and spurious ones attributed to fluctuations in the errors,
but simply be due to the heterogeneity in the change points
(i.e.\ some changes being more pronounced and therefore easier to detect than others).
Figure~\ref{fig:toy} illustrates this phenomenon where, due to the presence of mean shifts of heterogeneous magnitudes, gaps as large as that between $\mc Y_{(q)}$ and $\mc Y_{(q + 1)}$ are observed between $\mc Y_{(m)}$ and $\mc Y_{(m + 1)}$ for $m < q$, although $\c_{(m)}$ and $\c_{(m + 1)}$ for both detect true change points.
Therefore, we identify the $M$ largest gaps from $\mc Y_{(m)} - \mc Y_{(m + 1)}, \, 1 \le m \le P - 1$,
and denote the corresponding indices by $g_1 < \ldots < g_M$
such that 
\begin{align*}
\mc Y_{(g_l)} - \mc Y_{(g_l + 1)} > \mc Y_{(m)} - \mc Y_{(m + 1)}
\quad \text{for all} \quad m \ne g_l, \, 1 \le l \le M.
\end{align*}
This returns a sequence of nested models
\begin{align}
\emptyset = \wh\Cp_0 \subset \wh\Cp_1 
\subset \ldots \subset \wh\Cp_M \subset \{0, \ldots, n - 1\} \quad \text{with} \quad 
\wh\Cp_l \setminus \wh\Cp_{l - 1} \ne \emptyset \quad \forall \,\, l = 1, \ldots, M,
\label{eq:nested}
\end{align}
with $\wh\Cp_l = \wh\Cp_{l - 1} \cup \{\c_{(g_{l - 1} + 1)}, \ldots, \c_{(g_l)}\}$.
Theorem~\ref{thm:cp} below shows that the model sequence in~\eqref{eq:nested} contains one which consistently detects all $q$ change points with high probability, as is the case in the toy example given in Figure~\ref{fig:toy}.
Typically, this gappy model sequence
is much sparser than the sequence of all possible models from the solution path
and therefore, intuitively, makes our model selection task easier than 
if we worked with the entire solution path of all nested models.
We confirm this point numerically 
in the simulation studies reported in Appendix~\ref{sec:add:sim}.

\subsection{Theoretical properties}
\label{sec:cp:theory}

In this section, we establish the theoretical properties 
of the sequence of nested change point models
obtained from combining WBS2 with the gappy model sequence generation
outlined in Sections~\ref{sec:cp:method}--\ref{sec:gappy}.
The following assumptions are, respectively, on 
the distribution of $\{Z_t\}_{t = 1}^n$
and the size of changes under $H_1: \, q \ge 1$.
\begin{assumption}  
\label{assum:error}
Let $\{Z_t\}_{t = 1}^n$ be a sequence of random variables satisfying
$\E(Z_t) = 0$ and $\Var(Z_t) = \sigma_Z^2$ with $\sigma_Z \in (0, \infty)$. 
Also, let $\p(\mc Z_n) \to 1$ with
$\zeta_n$ satisfying $\sqrt{\log(n)} = O(\zeta_n)$ and
$\zeta_n = O(\log^\kappa(n))$ for some $\kappa \in [1/2, \infty)$,
where
\begin{center}
$\mc Z_n = \l\{ \max_{0 \le s < e \le n}
(e-s)^{-1/2}
\Big\vert \sum_{t = s + 1}^e Z_t \Big\vert \le \zeta_n \r\}$.
\end{center}
\end{assumption}

\begin{remark}
\label{rem:assum:error}
Assumption~\ref{assum:error} permits $\{Z_t\}_{t = 1}^n$ to have 
heavier tails than sub-Gaussian such as 
sub-exponential 
or sub-Weibull \citep{vladimirova2019}.
Appendix~\ref{sec:assum} shows that linear time series 
with short-range dependence and sub-exponential innovations satisfy the assumption,
using the Nagaev-type inequality derived in \cite{zhang2017gaussian}.
Similar arguments can be made with the concentration inequalities shown in \cite{doukhan2007}
for weakly dependent time series
fulfilling $\E(\vert Z_t \vert^k) \le (k!)^\nu C^k$ for all $k \ge 1$ and 
some $\nu \ge 0$ and $C > 0$,
or in \cite{merlevede2011} for geometrically strong mixing sequences
with sub-exponential tails.
Alternatively, under the invariance principle, if there exists (possibly after enlarging the probability space) 
a standard Wiener process $W(\cdot)$ such that 
$\sum_{t = 1}^\ell Z_t - W(\ell) = O(\log^{\kappa^\prime}(\ell))$ a.s. 
with $\kappa^\prime \ge 1$,
then Assumption~\ref{assum:error} holds with $\zeta_n \asymp \log^{\kappa}(n)$
for any $\kappa > \kappa^\prime$,
where we denote by $a_n \asymp b_n$ 
to indicate that $a_n = O(b_n)$ and $b_n = O(a_n)$.
Such invariance principles have been derived for dependent data
under weak dependence such as mixing \citep{kuelbs1980}
and functional dependence measure \citep{berkes2014} conditions.
The increase in $\zeta_n$ due to strong serial correlations or heavier tail behaviour, results in a stronger condition on the size of changes for their detection (see Assumptions~\ref{assum:cp:one} below), as well as possible worsening of the accuracy in change point location estimation (see Theorem~\ref{thm:cp}~\ref{thm:cp:one}).
\end{remark}

\begin{assumption}
\label{assum:cp:one}
Let $\delta_j = \min(\cp_j - \cp_{j - 1}, \cp_{j + 1} - \cp_j)$ 
and recall that $f^\prime_j = f_{\cp_j + 1} - f_{\cp_j}$ for $j = 1, \ldots, q$.
Then, $\max_{1 \le j \le q} |f^\prime_j| = O(1)$. 
Also, there exists some $c_1 \in (0, 1)$ such that 
$\min_{1 \le j \le q} \delta_j \ge c_1 n$, and for some $\varphi > 0$, we have
$\zeta_n^2 / (\min_{1 \le j \le q} (f^\prime_j)^2 \delta_j) = O(n^{-\varphi})$.
\end{assumption}

Under Assumption~\ref{assum:cp:one}, we assume that there are finitely many change points with the spacing between the change points increasing linearly in $n$.
A similar condition can be found in the literature addressing the problems of change point detection in the presence of serial correlations, see e.g.\ in \cite{zhao2021segmenting}.
The upper bound on $\vert f^\prime_j \vert$ is a technical assumption made to distinguish the problem of detecting change points from that of outlier detection,
see \cite{cho2021data} for further discussions.

\begin{theorem}
\label{thm:cp}
Let Assumptions~\ref{assum:error} and~\ref{assum:cp:one} hold.
Suppose that $R_n$, the number of intervals at each iteration of WBS2, satisfies
\begin{align}
\label{cond:thm:cp}
R_n \ge \frac{9}{8}\l(\frac{n}{\min_{1 \le j \le q} \delta_j}\r)^2 + 1.
\end{align}
Then, on $\mc Z_n$,
the following statements hold for $n$ large enough
and some $c_2 \in (0, \infty)$.
\begin{enumerate}[noitemsep, wide, labelwidth=!, labelindent=0pt, label = (\roman*)]
\item \label{thm:cp:one} Let $\wh\Cp[q] = \{\wh\cp_j, \, 1 \le j \le q: \, \wh\cp_1 < \ldots < \wh\cp_q\}$
denote the set of $q$ change point location estimators corresponding to the $q$ largest max-CUSUMs 
$\mc X_{(m)}, \, 1 \le m \le q$, obtained as in~\eqref{eq:cusum:sorted}.
Then,
$\max_{1 \le j \le q} (f^\prime_j)^2 \vert \wh\cp_j - \cp_j \vert \le c_2\zeta_n^2$.

\item \label{thm:cp:two} The sorted log-CUSUMs $\mc Y_{(m)}$ satisfy
$\mc Y_{(m)} = \gamma_m \log(n)(1 + o(1))$ for $m = 1, \ldots, q$, while
$\mc Y_{(m)} \le \kappa_m \log\log(n)(1 + o(1))$ for $m \ge q + 1$,
where $\{\gamma_m\}_{m = 1}^{q}$ and $\{\kappa_m\}_{m \ge q + 1}$ are non-increasing
sequences 
with $0 < \gamma_m \le 1/2$.

%
\end{enumerate}
\end{theorem}

Theorem~\ref{thm:cp}~\ref{thm:cp:one} establishes that for the solution path 
$\mc P$ obtained according to the WCM principle,
the entries corresponding to the $q$ largest max-CUSUMs contain
the estimators of all $q$ change points $\cp_j$ and further,
the localisation rate attained by $\wh\cp_j$ is 
minimax optimal up to a logarithmic factor $\zeta_n^2$ (see e.g.\ \cite{fromont2020}).
Statement~\ref{thm:cp:two} shows that 
the $q$ largest log-CUSUMs are of order $\log(n)$ 
and are thus distinguished from the rest of the log-CUSUMs bounded as $O(\log\log(n))$.
In summary, Theorem~\ref{thm:cp} establishes that 
the sequence of nested change point models~\eqref{eq:nested}
contains the consistent model $\wh\Cp[q]$ as a candidate model
provided that $M$ is sufficiently large.
We emphasise that Theorem~\ref{thm:cp} is not (yet) a full consistency result for our complete change point estimation procedure -- this will be the objective of Section~\ref{sec:ms}. Theorem~\ref{thm:cp} merely indicates that the solution path we obtain contains the correctly estimated model, hence it is in principle possible to extract it with the right model selection tool. Section~\ref{sec:ms} proposes such a tool.

\section{Model selection with gSa}
\label{sec:ms}

In this section, we discuss how to consistently estimate the number and the locations of change points 
by choosing an appropriate change point model from the sequence of nested candidate models~\eqref{eq:nested}. 
We propose a new backward elimination-type procedure, 
referred to as `gappy Schwarz algorithm' (gSa),
which makes use of the Schwarz criterion constructed under
a parametric assumption imposing an AR structure on $\{Z_t\}_{t = 1}^n$.
The novelty of gSa is in the new way in which it applies Schwarz criterion,
rather than in the formulation of the information criterion itself.
We show the usefulness of gSa when change point model selection is performed simultaneously 
with the estimation of the serial dependence.
%

\subsection{Schwarz criterion in the presence of autoregressive errors}
\label{sec:sc}

We assume that $\{Z_t\}_{t \in \Z}$ in~\eqref{eq:model} 
is a stationary AR process of order $p$, i.e.\
\begin{align}
Z_t &= \sum_{i = 1}^p a_i Z_{t - i} + \vep_t \quad \text{such that}  \quad
X_t = (1 - a(B)) f_t + \sum_{i = 1}^p a_i X_{t - i} + \vep_t, \label{eq:ar}
\end{align}
where $a(B) = \sum_{i = 1}^p a_i B^i$ is defined with the backshift operator~$B$. 
The innovations $\{\vep_t\}_{t = 1}^n$ satisfy $\E(\vep_t) = 0$ 
and $\Var(\vep_t) = \sigma_\vep^2 \in (0, \infty)$,
and are assumed to have no serial correlations;
further assumptions on $\{\vep_t\}_{t = 1}^n$ are made in Assumption~\ref{assum:ar}.
We denote by $\mu^\circ_j := (1 - \sum_{i = 1}^p a_i) f_{\cp_j + 1}$
the effective mean level over each interval $\cp_j + p + 1 \le t \le \cp_{j + 1}$, for $j = 0, \ldots, q$,
and by $d_j = \mu^\circ_j - \mu^\circ_{j - 1}$
the effective size of the mean shift correspondingly.
Also recall that $\delta_j = \min(\cp_j - \cp_{j - 1}, \cp_{j + 1} - \cp_j)$.

In the model selection procedure, we do not assume that the AR order $p$ is known,
and its data-driven choice is incorporated into 
the model selection methodology as described later.
For now, suppose that it is set to be some integer $r \ge 0$,
and that a change point model is given by a set of candidate change point estimators
$\mc A = \{\c_j, \, 1 \le j \le m: \, \c_1 < \ldots < \c_m\} \subset \{1, \ldots, n\}$.
Then, Schwarz criterion \citep{schwarz1978} is defined as
\begin{align}
\label{eq:sc:cp}
\sc\l(\{X_t\}_{t = 1}^n, \mc A, r\r) &= 
\frac{n}{2} \log\l(\wh\sigma_n^2\l(\{X_t\}_{t = 1}^n, \mc A, r \r)\r) + (|\mc A| + r) \xi_n,
\end{align}
where $\wh\sigma_n^2(\{X_t\}_{t = 1}^n, \mc A, r)$ denotes 
a measure of goodness-of-fit (its precise definition is given below),
and a penalty is imposed on the model complexity determined by both the AR order and the number of change points;
the requirement on the penalty parameter $\xi_n$ in relation to the distribution of $\{\vep_t\}_{t \in \Z}$ is discussed in Assumption~\ref{assum:pen} below.

We adopt the residual sum of squares as $\wh\sigma_n^2(\{X_t\}_{t = 1}^n, \mc A, r)$, i.e.\
\begin{align}
& \wh\sigma_n^2(\{X_t\}_{t = 1}^n, \mc A, r) = \frac{1}{n} \Vert \mbf Y - \mbf X\wh{\bm\beta} \Vert^2, 
\quad \text{where} 
\quad
\mbf Y  = (X_1, \ldots, X_n)^\top \quad \text{and} 
\nn \\
& \mbf X = \mbf X(\mc A, r) = 
\bmx \underbrace{\mbf L(r)}_{n \times r} & 
\underbrace{\mbf R(\mc A)}_{n \times (m + 1)} \emx
 = 
\l[\begin{array}{cccccccc}
X_{0} & \cdots & X_{1 - r} & 1 & 0 & 0 & \cdots & 0 \\
\vdots & & & & & & \\
X_{\c_1 - 1} & \cdots & X_{\c_1 - r} & 1 & 0 & 0 & \cdots & 0 \\
X_{\c_1} & \cdots & X_{\c_1 - r + 1} & 0 & 1 & 0 & \cdots & 0 \\
\vdots & & & \vdots & & & \\
X_{n - 1} & \cdots & X_{n - r} & 0 & 0 & 0 & \cdots & 1 \\
\end{array}\r].
\label{eq:def:x}
\end{align}
For notational convenience, we assume that $X_0, \ldots, X_{-r + 1}$ are available
and their means remain constant such that $\E(X_t) = \E(X_1)$ for $t \le 0$;
in practice, we can simply omit the first $p_{\max}$ observations
when constructing $\mbf Y$ and $\mbf X$ above, 
where $p_{\max}$ denotes a pre-specified upper bound on the AR order.
The matrix $\mbf X$ is divided into the AR part contained in $\mbf L(r)$
and the deterministic part in $\mbf R(\mc A)$ for modelling mean shifts.
We propose to obtain the estimator of regression parameters
denoted by $\wh{\bm\beta} = \wh{\bm\beta}(\mc A, r) = 
(\wh{\bm\alpha}(r)^\top, \wh{\bm\mu}(\mc A)^\top)^\top$
via least squares estimation,
where $\wh{\bm\alpha}(r) \in \R^r$ denotes the estimator of the AR parameters
and $\wh{\bm\mu}(\mc A) \in \R^{\vert \mc A \vert + 1}$ that of the segment-specific levels.

We select the typically unknown AR order $p$ as follows:
AR models of varying orders $r \in \{0, \ldots, p_{\max}\}$,
are fitted to the data from which we estimate $p$ by
\begin{align}
\wh p = \wh p(\mc A) &= {\arg\min}_{r \in \{0, \ldots, p_{\max}\}} \,
\sc\l(\{X_t\}_{t = 1}^n, \mc A, r\r).
\label{eq:p:est}
\end{align}
In our theoretical analysis, we fully address that the estimator $\wh p(\mc A)$ is used rather than the true AR order $p$.

\subsection{gSa: sequential model selection}
\label{sec:seq}

To demonstrate the main idea, we first address the simpler problem of determining between a given change point model $\mc A$ and the null model without any change points, and then describe the full procedure for model selection from a sequence of candidate models.

Suppose that the number and locations of mean shifts are 
consistently estimated by (a subset of) $\mc A$ in the sense made clear in Assumption~\ref{assum:est} below,
which includes the case of no change point ($q = 0$) with the trivial subset $\emptyset \subset \mc A$.
Then, the estimator $\wh{\bm\beta}(\mc A, \wh p) = (\wh{\bm\alpha}(\wh p)^\top, \wh{\bm\mu}(\mc A)^\top)^\top$ can be shown to 
estimate the AR parameters sufficiently well
with $\wh p = \wh p(\mc A)$ returned by \eqref{eq:p:est}, and the criterion 
$\sc(\{X_t\}_{t = 1}^n, \mc A, \wh p)$
gives a suitable indicator of the goodness-of-fit of the change point model $\mc A$
offset by the increased model complexity.
On the other hand, if any change point is ignored in fitting an AR model, 
the resultant AR parameter estimators over-compensate for 
the under-specification of mean shifts.
In our numerical experiments (reported in Appendix~\ref{sec:d3}),
this often leads to 
$\sc(\{X_t\}_{t = 1}^n, \emptyset, \wh p(\emptyset))$
having a smaller value than $\sc(\{X_t\}_{t = 1}^n, \mc A, \wh p)$
such that their direct comparison returns the null model
even though there are multiple change points present and detected by $\mc A$.

Instead, we propose to compare $\sc(\{X_t\}_{t = 1}^n, \mc A, \wh p)$ against
\begin{align*}
\sc_0\l(\{X_t\}_{t = 1}^n, \wh{\bm\alpha}(\wh p)\r) 
:= \frac{n}{2}\log\l(\frac{\l\Vert (\mbf I - \bm\Pi_{\mbf 1}) 
\l(\mbf Y - \mbf L(\wh p)\wh{\bm\alpha}(\wh p)\r) \r\Vert^2}{n}\r) + \wh p \, \xi_n,
\end{align*}
where $\mbf I - \bm\Pi_{\mbf 1}$ denotes the projection matrix 
removing the sample mean from the right-multiplied vector.
By having the plug-in estimator $\wh{\bm\alpha}(\wh p)$
from $\wh{\bm\beta}(\mc A, \wh p)$ in its definition,
$\sc_0$ avoids the above-mentioned difficulty arising when evaluating Schwarz criterion
at a change point model that under-specifies the number of change points. 
We conclude that the data is better described by the change point model $\mc A$ if
\begin{align}
\label{eq:sc:comp}
\sc_0\l(\{X_t\}_{t = 1}^n, \wh{\bm\alpha}(\wh p)\r) > \sc(\{X_t\}_{t = 1}^n, \mc A, \wh p),
\end{align}
and if the converse holds, we prefer the null model over the change point model.

This Schwarz criterion-based model selection strategy is extended to be applicable with
a sequence of nested change point models
$\emptyset = \wh\Cp_0 \subset \wh\Cp_1 \subset \ldots \subset \wh\Cp_M$ as in~\eqref{eq:nested} 
even when $M > 1$.
Referred to as the gappy Schwarz algorithm (gSa) in the remainder of the paper,
the proposed methodology performs a backward search along the sequence 
from the largest model $\wh\Cp_l$ with $l = M$, 
sequentially evaluating whether the reduction in the goodness-of-fit
(i.e.\ increase in the residual sum of squares) by moving from $\wh\Cp_l$ to $\wh\Cp_{l - 1}$,
is sufficiently offset by the decrease in model complexity.
More specifically, let $s, e \in \wh\Cp_{l - 1} \cup \{0, n\}$
denote two candidates satisfying 
$\{s + 1, \ldots, e - 1\} \cap \wh\Cp_{l - 1} = \emptyset$,
and suppose that
$\mc A = \{s + 1, \ldots, e - 1\} \cap (\wh\Cp_l \setminus \wh\Cp_{l - 1})$ is not empty
(by definition, $\{s, e\} \subset \wh\Cp_l \cup \{0, n\}$).
In other words, $\mc A$ contains candidate estimators detected within the local environment 
$\{s + 1, \ldots, e - 1\}$,
which appear in $\wh\Cp_l$ but do not appear in the smaller models $\wh\Cp_{l'}, \, l' \le l - 1$.
Then, we compare
$\sc(\{X_t\}_{t = s + 1}^e, \mc A, \wh p_{s:e})$ against
$\sc_0(\{X_t\}_{t = s + 1}^e, \wh{\bm\alpha}_{s:e}(\wh p_{s:e}))$ as in~\eqref{eq:sc:comp},
with the least squares estimator of the AR parameters $\wh{\bm\alpha}_{s:e}(\wh p_{s:e})$ 
and its dimension $\wh p_{s:e}$ obtained locally by minimising $\sc(\{X_t\}_{t = s + 1}^e, \mc A, r)$ over $r$
(see~\eqref{eq:p:est}).
If $\sc(\{X_t\}_{t = s + 1}^e, \mc A, \wh p_{s:e}) <
\sc_0(\{X_t\}_{t = s + 1}^e, \wh{\bm\alpha}_{s:e}(\wh p_{s:e}))$,
the change point estimators in $\mc A$ are deemed as not being spurious;
if this is the case for {\it all} estimators in $\wh\Cp_l \setminus \wh\Cp_{l - 1}$,
we return $\wh\Cp_l$ as the final model. 

In our theoretical analysis, when $q \ge 1$, we assume that there exists some $1 \le l^* \le M$
such that $\wh\Cp_{l^*}$ correctly detects all change points and nothing else
(see Assumption~\ref{assum:est} below),
which is guaranteed by the gappy candidate model sequence generation method described in Section~\ref{sec:cp}.
Then with high probability, we have 
$\sc(\{X_t\}_{t = s + 1}^e, \mc A, \wh p_{s:e}) <
\sc_0(\{X_t\}_{t = s + 1}^e, \wh{\bm\alpha}_{s:e}(\wh p_{s:e}))$
simultaneously in all local regions $\{s + 1, \ldots, e\}$
overlapping with $\wh\Cp_{l^*} \setminus \wh\Cp_{l^* - 1}$.
On the other hand, when $l > l^*$, we expect to have $\sc(\{X_t\}_{t = s + 1}^e, \mc A, \wh p_{s:e}) \ge \sc_0(\{X_t\}_{t = s + 1}^e, \wh{\bm\alpha}_{s:e}(\wh p_{s:e}))$ in all such regions as they contain spurious estimators.
Therefore, sequentially examining the nested change point models from the largest model $\wh\Cp_M$, gSa returns $\wh\Cp_{l^*}$ as the final model with high probability.
In its implementation, in the unlikely event of disagreement 
across the regions containing $\wh\Cp_l \setminus \wh\Cp_{l - 1}$,
we take a conservative approach and conclude that $\wh\Cp_l$ contains spurious estimators,
and update $l \to l - 1$ to repeat the same procedure
until some $\wh\Cp_l, \, l \ge 1$, is selected as the final model,
or the null model $\wh\Cp_0 = \emptyset$ is reached.
The full algorithmic description of gSa is provided in Appendix~\ref{app:gsc}.

In summary, gSa does not directly minimise Schwarz criterion
but starting from the largest model, searches for the first largest model $\wh\Cp_l$
in which all candidate estimators in $\wh\Cp_l \setminus \wh\Cp_{l - 1}$
are deemed important as described above.
By adopting $\sc_0$ for model comparison, it avoids evaluating Schwarz criterion at a candidate model that under-estimates the number of change points (which may lead to loss of power).
We show that gSa achieves model selection consistency in the next section.

\subsection{Theoretical properties}
\label{sec:ms:theor}

For the theoretical analysis of gSa, we make a set of assumptions 
and remark on their relationship to those made in Section~\ref{sec:cp:theory}.
Assumption~\ref{assum:ar} is imposed on the stochastic part of model~\eqref{eq:ar}.
\begin{assumption}  
\label{assum:ar}
\begin{enumerate}[noitemsep, wide, labelwidth=!, labelindent=0pt, label = (\roman*)]
\item \label{assum:ar:one} The characteristic polynomial $a(z) = 1 - \sum_{i = 1}^p a_i z^i$
has all of its roots outside the unit circle $|z| = 1$.

\item \label{assum:ar:two} $\{\vep_t\}_{t \in \Z}$ is an ergodic and stationary martingale difference sequence with respect to an increasing sequence of
$\sigma$-fields $\mc F_t$, such that $\vep_t$ and $X_t$ are $\mc F_t$-measurable
and $\E(\vep_t \vert \mc F_{t - 1})~=~0$.

\item \label{assum:ar:three} There exists some $\Delta > 0$ such that
$\sup_t \E( \vert \vep_t \vert^{2 + \Delta} \vert \mc F_{t - 1}) < \infty$~a.s.

\item \label{assum:ar:four} Let $\p(\mc E_n) \to 1$ with $\omega_n$ satisfying 
$\sqrt{\log(n)} = O(\omega_n)$ and $\omega_n^2 = O(\min_{1 \le j \le q} \delta_j)$, where $\delta_j = \min(\cp_j - \cp_{j - 1}, \cp_{j + 1} - \cp_j)$ and
\begin{center}
$\mc E_{n} = \l\{ \max_{0 \le s < e \le n} (e - s)^{-1/2}
\Big\vert \sum_{t = s + 1}^e \vep_t \Big\vert \le \omega_n \r\}$.
\end{center}

\end{enumerate}
\end{assumption}

Assumption~\ref{assum:ar}~\ref{assum:ar:one}--\ref{assum:ar:three} are taken from 
\cite{lai1982b, lai1982, lai1983},
where the strong consistency in stochastic regression problems is established.
In particular, Condition~\ref{assum:ar:one} implies that $\{Z_t\}_{t = 1}^n$ is a short-memory linear process.
The term $\omega_n$ in Condition~\ref{assum:ar:four} 
gives a lower bound on the penalty parameter $\xi_n$ of Schwarz criterion, see Assumption~\ref{assum:pen}.
Theorem~1.2A of \cite{pena1999} derives a Bernstein-type inequality 
for a martingale difference sequence 
satisfying $\E(\vert \vep_t \vert^k) \le (k!/2) c_\vep^k \E(\vep_t^2)$ 
for all $k \ge 3$ and some $c_\vep \in (0, \infty)$,
from which we readily obtain $\omega_n \asymp \log(n)$.
Under a more stringent condition that $\{\vep_t\}_{t \in \Z}$ 
is a sequence of i.i.d.\ sub-Gaussian random variables, 
it suffices to set $\omega_n \asymp \sqrt{\log(n)}$
(e.g.\ see Proposition~2.1~(a) of \cite{cho2019});
Appendix~\ref{sec:assum} considers i.i.d.\ sub-exponential $\{\vep_t\}_{t \in \Z}$
for which $\omega_n \asymp \log(n)$.

\begin{remark}[Links between Assumptions~\ref{assum:error},~\ref{assum:cp:one} and~\ref{assum:ar}]
\label{rem:assum:error:two}
Assumption~\ref{assum:error} does not impose any parametric condition 
on the dependence structure of $\{Z_t\}_{t = 1}^n$.
For linear, short memory processes (implied by Assumption~\ref{assum:ar}~\ref{assum:ar:one}),
\cite{peligrad2006} show that the invariance principle for the linear process
is inherited from that of the innovations.
Then, as discussed in Remark~\ref{rem:assum:error}, 
a logarithmic bound $\omega_n \asymp \log^\kappa(n)$ follows from
$\sum_{t = 1}^\ell \vep_t - W(\ell) = O(\log^{\kappa^\prime}(n))$ 
for some $\kappa^\prime \in [1, \kappa)$, which in turn leads to $\zeta_n \asymp \omega_n$.
In view of Assumptions~\ref{assum:error} and~\ref{assum:cp:one}, 
the condition that $\omega_n^2 = O(\min_{1 \le j \le q} \delta_j)$ is a mild one.
\end{remark}

We impose the following assumption on 
the sequence of nested candidate models $\wh\Cp_0 \subset \ldots \subset \wh\Cp_M$,
where $\wh\Cp_l = \{\wh\cp_{l, j}, \, 1 \le j \le \wh q_l: \,
\wh\cp_{l, 1} < \ldots < \wh\cp_{l, \wh q_l}\}$ for $l \ge 1$.
Recall that $d_j$ denotes the effective size of change defined below~\eqref{eq:ar}.
\begin{assumption}
\label{assum:est}
We assume that $\p(\mc M_n) \to 1$ where $\mc M_n$ denotes the following event: for a given penalty $\xi_n$, we have
$\xi_n (\min_{0 \le j \le \wh{q}_M} (\wh\cp_{M, j + 1} - \wh\cp_{M, j}))^{-1} = o(1)$
and $\wh q_M = \vert \wh\Cp_M \vert$ is fixed for all $n$. 
Additionally, there exists some $\rho_n \ge 1$ satisfying $(\min_{1 \le j \le q} d_j^2 \delta_j)^{-1} \rho_n \to 0$, such that under $H_1: \, q \ge 1$, there exists $l^* \in \{1, \ldots, M\}$ with
\begin{align}
\label{eq:best:model}
\wh{q}_{l^*} = q \quad \text{and} \quad 
\max_{1 \le j \le q} d_j^2 \l\vert \wh{\cp}_{l^*, j} - \cp_j \r\vert \le \rho_n.
\end{align}
\end{assumption}
By Theorem~\ref{thm:cp}, we have the condition~\eqref{eq:best:model}
satisfied by the gappy model sequence generated as in~\eqref{eq:nested}
with $\rho_n \asymp \zeta_n^2$, where $\zeta_n$ is defined in Assumption~\ref{assum:error}.
We state this result as an assumption so that if gSa were to be applied with an alternative solution path algorithm other than WBS2, its statistical guarantee is still applicable if the latter satisfied Assumption~\ref{assum:est}.
Since the serial dependence structure is learned from the data
by fitting an AR model to each segment, 
the requirement on the minimum spacing of the largest model $\wh\Cp_M$ is a natural one
and it can be hard-wired into the solution path generation step.

Assumption~\ref{assum:cp:two} is on the size of changes determined by the effective magnitude of the mean shift $d_j$ under~\eqref{eq:ar} and the distance between the change points $\delta_j$,
and Assumption~\ref{assum:pen} on the choice of the penalty parameter $\xi_n$.
In particular, the choice of $\xi_n$ connects the detectability of change points
with the level of noise remaining in the data after accounting for the autoregressive dependence structure.
\begin{assumption}
\label{assum:cp:two}
$\max_{1 \le j \le q} |d_j| = O(1)$ 
and $D_n := \min_{1 \le j \le q} d_j^2 \, \delta_j \to \infty$ as $n \to \infty$.
\end{assumption}

\begin{assumption}
\label{assum:pen}
$\xi_n$ satisfies
$D_n^{-1} \xi_n = o(1)$ and
$\xi_n^{-1} \max(\omega_n^2, \rho_n) = o(1)$. 
\end{assumption}

By Assumption~\ref{assum:ar}~\ref{assum:ar:one}, 
the effective mean shift size $d_j$ is of the same order as $f^\prime_j = f_{\cp_j + 1} - f_{\cp_j}$ 
since $d_j = (1 - \sum_{i = 1}^p a_i) f^\prime_j$.
Therefore, Assumption~\ref{assum:cp:two} on the 
detection lower bound formulated with $d_j$, together with Assumption~\ref{assum:pen},
is closely related to Assumption~\ref{assum:cp:one} 
formulated with $f^\prime_j$.
In fact, we can select $\xi_n$ such that
Assumption~\ref{assum:pen} follows immediately from Assumption~\ref{assum:cp:one},
recalling that the rate of localisation attained by the latter is $\rho_n \asymp \zeta_n^2$
and $\omega_n = O(\zeta_n)$.

\begin{theorem}
\label{thm:sc}
Let Assumptions~\ref{assum:ar}--\ref{assum:pen} hold.
Then, on $\mc E_n \cap \mc M_n$, gSa returns 
$\wh\Cp = \{\wh\cp_j, \, 1 \le j \le \wh q: \, \wh\cp_1 < \ldots < \wh\cp_{\wh q}\}$
satisfying
\begin{center}
$\wh q = q \quad \text{and} \quad 
\max_{1 \le j \le q} d_j^2\l\vert \wh\cp_j - \cp_j \r\vert \le \rho_n$
\end{center}
for $n$ large enough.
\end{theorem}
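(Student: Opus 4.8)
The plan is to show that, on the event $\mc E_n \cap \mc M_n$, the backward search described in Section~\ref{sec:seq} stops exactly at the model $\wh\Cp_{l^*}$ guaranteed by Assumption~\ref{assum:est}, i.e.\ (a) it does not stop at any model $\wh\Cp_l$ with $l > l^*$, and (b) once it reaches $\wh\Cp_{l^*}$ it accepts it and does not descend further. Since $\wh q_{l^*} = q$ and $\max_j d_j^2 |\wh\cp_{l^*,j} - \cp_j| \le \rho_n$ by~\eqref{eq:best:model}, this immediately yields the conclusion. The whole argument is driven by a single deterministic comparison: for a local window $(s,e)$ with endpoints in $\wh\Cp_{l-1} \cup \{0,n\}$ and $\mc A = (s,e) \cap (\wh\Cp_l \setminus \wh\Cp_{l-1}) \ne \emptyset$, we must estimate the sign of $\sc_0(\{X_t\}_{t=s+1}^e, \wh{\bm\alpha}_{s:e}(\wh p_{s:e})) - \sc(\{X_t\}_{t=s+1}^e, \mc A, \wh p_{s:e})$.

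First I would assemble the deterministic building blocks on $\mc E_n \cap \mc M_n$. (i) \emph{AR order and parameter estimation.} Invoke the strong-consistency results of \cite{lai1982b, lai1982, lai1983} (justified by Assumption~\ref{assum:ar}\ref{assum:ar:one}--\ref{assum:ar:three}) applied to the longest segment within the current window; because that segment has length $\gtrsim n$ and contains no change point once we are at or above $\wh\Cp_{l^*}$ — here Assumption~\ref{assum:est}'s minimum-spacing condition $\xi_n^{-1}\min_j(\wh\cp_{M,j+1}-\wh\cp_{M,j}) = o(1)$ and $c_1 n$ spacing of true change points are used — the penalised criterion~\eqref{eq:p:est} selects $\wh p \ge p$ eventually and $\wh{\bm\alpha}(\wh p)$ is consistent at the usual parametric rate. (ii) \emph{Fit of the correct model.} When $\mc A$ already contains the true change points in $(s,e)$ up to localisation error $\le \rho_n / d_j^2$, a standard bias--variance decomposition of the residual sum of squares $\|\mathbf Y - \mathbf X\wh{\bm\beta}\|^2$ over each sub-segment shows $\wh\sigma_n^2(\mc A, \wh p) = \sigma_\vep^2(1 + o(1))$ with the deviation controlled by $\omega_n^2/(e-s)$ from $\mc E_n$ and by $\rho_n/(e-s)$ from the localisation error. (iii) \emph{Fit of an under-specified model.} If $\mc A$ misses a genuine change of effective size $d_j$ lying well inside $(s,e)$, the ignored mean shift forces the residual sum of squares up by an amount of order $d_j^2 \delta_j \ge D_n$ — this is the standard lower bound for the squared error of fitting a constant (plus AR terms) to a piecewise-constant-plus-AR signal; crucially, because $\sc_0$ uses the \emph{plug-in} consistent $\wh{\bm\alpha}(\wh p)$ rather than re-estimating the AR part on a mean-contaminated segment, this contamination is not absorbed into spurious AR coefficients, which is exactly the advantage of the backward scheme noted after~\eqref{eq:sc:comp} (cf.\ \cite{fang2020}).

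Next I would run the two halves of the comparison. For (a), suppose the search is currently at some $\wh\Cp_l$ with $l > l^*$; then $\wh\Cp_l \supsetneq \wh\Cp_{l^*}$, so $\wh\Cp_l \setminus \wh\Cp_{l-1}$ contains at least one superfluous estimator $\wh\cp$ that is not close to any true change point. Restricting to the window $(s,e)$ around $\wh\cp$, removing it changes the residual sum of squares by at most $O(\omega_n^2 + \rho_n)$ (a spurious split of noise plus negligible leakage from far-away genuine changes), while it decreases the model complexity penalty by $\xi_n$. Expanding $\tfrac{n}{2}\log(\cdot)$ around $\sigma_\vep^2$, the goodness-of-fit term moves by $O(\omega_n^2 + \rho_n) = o(\xi_n)$ by Assumption~\ref{assum:pen}, so the penalty term dominates and $\sc_0 < \sc$, i.e.\ the estimators in $\mc A$ are rejected and the search moves to $l-1$; iterating, it cannot halt above $l^*$. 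For (b), at $l = l^*$ every $\wh\cp \in \wh\Cp_{l^*} \setminus \wh\Cp_{l^*-1}$ is (by Assumption~\ref{assum:est}) a well-localised estimator of a true change point with effective jump $|d_j|$; in its window $(s,e)$, deleting it inflates the residual sum of squares by $\gtrsim d_j^2 \delta_j \ge D_n$ against a complexity saving of only $\xi_n = o(D_n)$, so after the $\log$-expansion the fit term wins and $\sc < \sc_0$, i.e.\ $\mc A$ is retained; since this holds for every such $\wh\cp$, the search accepts $\wh\Cp_{l^*}$ and terminates. One also has to dispense with the sub-case where no true change point lies in $(s,e)$ but the window happens to straddle several models — there $\mc A$ is empty by definition and the window is simply skipped — and the null-hypothesis case $q = 0$, where the same $o(\xi_n)$-versus-$\xi_n$ comparison forces rejection all the way down to $\wh\Cp_0 = \emptyset$.

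The main obstacle is step (iii) combined with the need to control everything \emph{uniformly} over the at most $\wh q_M = O(1)$ windows visited during the backward pass, and in particular making the under-specified-model lower bound genuinely robust to (1) the estimation error in $\wh{\bm\alpha}(\wh p)$ plugged into $\sc_0$, (2) the $O(\rho_n)$ localisation slack in the endpoints $s,e$ themselves (they are \emph{estimated} change points, not true ones), and (3) the boundary effects over the $p$ time points after each change where $(1-a(B))f_t$ is not exactly piecewise constant. Each of these perturbs the residual sums of squares by at most $O(\max(\omega_n^2,\rho_n)) = o(\xi_n) = o(D_n)$, so none of them can overturn the sign of the comparison, but verifying this cleanly — especially propagating the plug-in error of $\wh{\bm\alpha}(\wh p)$ through the projection $\mathbf I - \bm\Pi_{\mathbf 1}$ in $\sc_0$ and through the segment-wise OLS in $\sc$ — is where the real work lies. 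Everything else is a routine $\log$-linearisation and an application of Assumptions~\ref{assum:cp:two} and~\ref{assum:pen} to separate the scales $\xi_n \ll D_n$ and $\max(\omega_n^2,\rho_n) \ll \xi_n$.
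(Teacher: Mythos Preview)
Your plan is correct and follows essentially the same route as the paper's proof: reduce to the local $\sc$-versus-$\sc_0$ comparison on each window, show the residual-sum-of-squares gap is $O(\omega_n^2 \vee \rho_n)$ when $\mc A$ contains only spurious estimators (so the penalty $|\mc A|\xi_n$ wins) and is $\gtrsim D_n$ when $\mc A$ contains a well-localised genuine change (so the fit term wins), with AR-order consistency handled separately on the longest segment. Two small corrections: the $c_1 n$ spacing of \emph{true} change points (Assumption~\ref{assum:cp:one}) is not among the hypotheses of Theorem~\ref{thm:sc} --- the longest segment being of order $n$ comes instead from $\wh q_M$ being fixed (Assumption~\ref{assum:est}) via $n \le (\wh q_M + 1)N$ --- and the order-selection criterion~\eqref{eq:p:est} actually yields $\wh p = p$ exactly (not merely $\wh p \ge p$), which the paper proves by a two-sided comparison; neither point affects the validity of your overall argument.
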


Theorem~\ref{thm:sc} establishes that gSa achieves model selection consistency.
Together, Theorems~\ref{thm:cp} and~\ref{thm:sc}
lead to the consistency of WCM.gSa, the methodology combining
WCM-based gappy model sequence generation and Schwarz criterion-based model selection steps.
Once the number of change points and their locations are consistently estimated,
we can further improve the location estimators in $\wh\Cp$;
Appendix~\ref{sec:refine} discusses a simple refinement procedure
which achieves the minimax optimal localisation rate.

\section{Numerical results}
\label{sec:simreal}

\subsection{Simulation results}
\label{sec:sim}

Appendix~\ref{sec:tuning} discusses in detail
the choice of the tuning parameters for WCM.gSa.
We investigate the performance of WCM.gSa on simulated datasets,
in comparison with DeCAFS \citep{romano2020},
DepSMUCE \citep{dette2018} and SNCP \citep{zhao2021segmenting}
(the latter two applied with significance level $\alpha = 0.05$).
Here, we present the results from three representative settings and defer 
the descriptions of the full simulation results 
(from thirteen scenarios with varying $n$, change point and serial dependence structures) and the competing methodologies
to Appendix~\ref{sec:add:sim},
where we include DepSMUCE and SNCP applied with different choices of $\alpha$
as well as MACE proposed in \cite{wu2019}.
There, we also present additional numerical experiments motivating the use of gappy candidate model sequence generation, and investigating the case of very strong autocorrelations.

We generate $1000$ realisations under each setting
where $\vep_t \sim_{\iid} \mc N(0, 1)$.
In addition to when $f_t$ undergoes mean shifts as described below, 
we also consider the case where $f_t$ remains constant to evaluate the size control performance.
\begin{enumerate}[noitemsep, wide, labelwidth=!, labelindent=0pt, label = (M\arabic*)]
\item \label{m-6} $f_t$ undergoes $q = 5$ change points at
$(\cp_1, \cp_2, \cp_3, \cp_4, \cp_5) = (100, 300, 500, 550, 750)$
with $n = 1000$ and $(f_0, f_1^\prime, f_2^\prime, f_3^\prime, f_4^\prime, f_5^\prime)
= (0, 1, -1, 2, -2, -1)$, and $\{Z_t\}_{t \in \Z}$ follows an MA($1$) model 
$Z_t = \vep_t + b_1 \vep_{t - 1}$ with $b_1 = -0.9$.

\item \label{m-8} $f_t$ undergoes $q = 5$ change points $\cp_j$ as in~\ref{m-6}
with $n = 1000$ and $(f_0, f_1^\prime, f_2^\prime, f_3^\prime, f_4^\prime, f_5^\prime)
= (0, 5, -3, 6, -7, -3)$, and
$\{Z_t\}_{t \in \Z}$ follows an ARMA($2$, $6$) model:
$Z_t = 0.75Z_{t - 1} - 0.5Z_{t - 2} + \vep_t + 0.8\vep_{t - 1} + 0.7\vep_{t - 2} 
+ 0.6\vep_{t - 3} + 0.5\vep_{t - 4} + 0.4\vep_{t - 5} + 0.3\vep_{t - 6}$.

\item \label{m-10} $f_t$ undergoes $q = 15$ change points 
at $\cp_j = \lceil n j /16 \rceil$ with $n = 2000$,
where the level parameters $f_{\cp_j + 1}$ are generated uniformly as
$(-1)^j \cdot f_{\cp_j + 1} \sim_{\iid} \mc U(1, 2)$
for each realisation.
$\{Z_t\}_{t \in \Z}$ follows an AR($1$) model:
$Z_t = a_1Z_{t - 1} + \sqrt{1 - a_1^2} \vep_t$ with $a_1 = 0.9$.
\end{enumerate}

Table~\ref{table:sim} summarises the simulation results;
see Table~\ref{table:sim:h1} in Appendix for the full results 
where the exact definitions of RMSE and $d_H$ can be found.
Overall, across the various scenarios, WCM.gSa performs well 
both when $q = 0$ and $q \ge 1$.
In particular, the proportion of the realisations where
WCM.gSa detects spurious estimators in the absence of any mean shift is close to $0$.
Controlling for the size, especially in the presence of serial correlations, is a difficult task
and as shown below, competing methods fail to do so by a large margin in some scenarios. 
When $q \ge 1$, 
WCM.gSa performs well in most scenarios according to a variety of criteria,
such as model selection accuracy measured by $\vert \wh q - q \vert$
or the localisation accuracy measured by $d_H$.
We highlight the importance of 
the gappy model sequence generation step of Section~\ref{sec:gappy}:
see the results reported under `no gap' which refers to a procedure
that omits this step from WCM.gSa
and applies the Schwarz criterion-based model selection procedure
directly to the model sequence consisting of consecutive entries from the WBS2-generated solution path.
It suffers from having to perform a large number of model comparison steps
and tends to over-estimate the number of change points in some scenarios.

DepSMUCE occasionally suffers from a calibration issue;
in order not to detect spurious change points,
it requires $\alpha$ to be set conservatively
but for improved detection power, a larger $\alpha$ is better.
In addition, the estimator of the LRV proposed therein tends to under-estimate the LRV 
when it is close to zero as in~\ref{m-6},
or when there are strong autocorrelations as in~\ref{m-10},
thus incurring a large number of falsely detected change points.
Similar sensitivity to the choice of $\alpha$ is observable from SNCP.
In addition, it tends to return spurious change point estimators 
when $q = 0$ in the presence of strong autocorrelations as in~\ref{m-10},
while under-detecting change points when $q \ge 1$ in some scenarios.

DeCAFS operates under the assumption that $\{Z_t\}_{t = 1}^n$ is an AR($1$) process.
Therefore, it is applied under model mis-specification in some scenarios,
but still performs reasonably well in not returning false positives.
The exception is~\ref{m-10} where, in the presence of strong autocorrelations, 
it returns spurious estimators over $50\%$ of realisations
even though the model is correctly specified in this scenario.
Its detection accuracy suffers under model mis-specification in some scenarios
such as~\ref{m-6} and~\ref{m-8} when compared to WCM.gSa,
but DeCAFS tends to attain good MSE. 

\begin{table}[htbp]
\setlength{\tabcolsep}{2pt}
\caption{\small We report the proportion of returning $\wh q \ge 1$ when $q = 0$ (size)
and the summary of estimated change points when $q > 1$ according to
the distribution of $\wh q - q$, relative MSE (RMSE) and the Hausdorff distance ($d_H$) over $1000$ realisations.
Methods that control the size at $0.05$,
and that achieve the best performance when $q > 1$ according to different criteria,
are highlighted in {\bf bold} for each scenario.
}
\label{table:sim}
\centering
{\small
\begin{tabular}{cc c ccccccc  cc}
\toprule											
&	&	&	\multicolumn{7}{c}{$\wh q - q$} 									\\	
Model &	Method &	Size &	$\ge -3$ &	$-2$ &	$-1$ &	$0$ &	$1$ &	$2$ &	$3 \le$ &	RMSE &	$d_H$	\\	\cmidrule(lr){1-2} \cmidrule(lr){3-3} \cmidrule(lr){4-10} \cmidrule(lr){11-12}

\ref{m-6} &	
WCM.gSa &	{\bf 0.000} &	0.000 &	0.000 &	0.000 &	{\bf 1.000} &	0.000 &	0.000 &	0.000 & 68.720 &	1.988	\\	
\cmidrule(lr){3-3} \cmidrule(lr){4-10} \cmidrule(lr){11-12}
&	no gap &	{\bf 0.000} &	0.000 &	0.000 &	0.000 &	{\bf 1.000} &	0.000 &	0.000 &	0.000 &	68.720 &	1.988	\\	
\cmidrule(lr){3-3} \cmidrule(lr){4-10} \cmidrule(lr){11-12}
&	DepSMUCE &	1.000 &	0.000 &	0.000 &	0.000 &	0.485 &	0.167 &	0.163 &	0.185 &	219.196 &	48.359	\\	
&	DeCAFS &	0.064 &	0.000 &	0.006 &	0.029 &	0.742 &	0.148 &	0.053 &	0.022 &	304.694 &	26.274	\\	
&	SNCP &	{\bf 0.000} &	0.000 &	0.000 &	0.000 &	{\bf 1.000} &	0.000 &	0.000 &	0.000 &	{\bf 35.512} &	{\bf 1.06}	\\	
\cmidrule(lr){1-2} \cmidrule(lr){3-3} \cmidrule(lr){4-10} \cmidrule(lr){11-12}

\ref{m-8} &	
WCM.gSa &	{\bf 0.001} &	0.000 &	0.000 &	0.019 &	{\bf 0.873} &	0.092 &	0.014 &	0.002 &	4.907 &	{\bf 34.627}	\\	
\cmidrule(lr){3-3} \cmidrule(lr){4-10} \cmidrule(lr){11-12}
&	no gap &	{\bf 0.020} &	0.002 &	0.002 &	0.012 &	0.178 &	0.024 &	0.037 &	0.745 &	11.030 &	148.765	\\	
\cmidrule(lr){3-3} \cmidrule(lr){4-10} \cmidrule(lr){11-12}
&	DepSMUCE &	{\bf 0.031} &	0.052 &	0.385 &	0.429 &	0.134 &	0.000 &	0.000 &	0.000 &	18.567 &	145.406	\\	
&	DeCAFS &	0.099 &	0.006 &	0.035 &	0.137 &	0.773 &	0.049 &	0.000 &	0.000 &	{\bf 3.891} &	61.517	\\	
&	SNCP &	0.084 &	0.117 &	0.293 &	0.372 &	0.215 &	0.002 &	0.001 &	0.000 &	15.428 &	166.724	\\	
\cmidrule(lr){1-2} \cmidrule(lr){3-3} \cmidrule(lr){4-10} \cmidrule(lr){11-12}

\ref{m-10} & 
WCM.gSa &	{\bf 0.000} &	0.087 &	0.177 &	0.233 &	0.319 &	0.076 &	0.041 &	0.067 &	3.184 &	86.139	\\	
\cmidrule(lr){3-3} \cmidrule(lr){4-10} \cmidrule(lr){11-12}
&	no gap &	0.058 &	0.000 &	0.000 &	0.000 &	0.000 &	0.000 &	0.000 &	1.000 &	4.498 &	92.759	\\	
\cmidrule(lr){3-3} \cmidrule(lr){4-10} \cmidrule(lr){11-12}
&	DepSMUCE &	0.936 &	0.767 &	0.153 &	0.070 &	0.010 &	0.000 &	0.000 &	0.000 &	8.655 &	139.298	\\	
&	DeCAFS &	0.565 &	0.000 &	0.004 &	0.019 &	{\bf 0.755} &	0.203 &	0.017 &	0.002 &	{\bf 1.065} &	{\bf 19.751}	\\	
&	SNCP &	0.258 &	0.956 &	0.034 &	0.007 &	0.003 &	0.000 &	0.000 &	0.000 &	11.698 &	290.266	\\	
\bottomrule
\end{tabular}}
\end{table}

\subsection{Nitrogen oxides concentrations in London}
\label{sec:real}

NO$_x$ is a generic term for the nitrogen oxides that are the most relevant for air pollution, 
namely nitric oxide (NO) and nitrogen dioxide (NO$_2$).
The main anthropogenic sources of NO$_x$ are mobile and stationary combustion sources,
and its acute and chronic health effects have been well-documented \citep{kampa2008}.
We analyse the daily average concentrations of NO$_2$ and NO$_x$
measured (in $\mu$g$/$m$^3$) at Marylebone Road in London, U.K.,
from September 1, 2000 to September 30, 2020;
the datasets were retrieved from Defra (\url{https://uk-air.defra.gov.uk/}).
The concentration measurements are positive integers
and exhibit seasonality and weekly patterns as well as distinguished behaviour on bank holidays, 
since road traffic is the principal outdoor source of NO$_x$ in a busy London road.
To correct for possible heavy-tailedness of the raw measurements, 
we take the square root transform and further
remove seasonal and weekly trends and bank holiday effects from the transformed data
using a model trained on the observations from January 2004 to December 2010;
for details of the pre-processing steps, see Appendix~\ref{app:nox}.
The resulting time series are plotted in Figure~\ref{fig:no2},
where it is also seen that the thus-transformed data exhibit persistent autocorrelations.

We analyse the transformed time series from NO$_2$ and NO$_x$ concentrations 
for change points in the level,
with the tuning parameters for WCM.gSa chosen as recommended in Appendix~\ref{sec:tuning}
apart from $M$, the number of candidate models considered;
given the large number of observations ($n = 7139$), we allow for $M = 10$
instead of the default choice $M = 5$.
The change points detected by WCM.gSa 
are plotted in Figure~\ref{fig:no2}.
For comparison, we also report the change points estimated by DepSMUCE
and DeCAFS, see Table~\ref{table:air:main}.

Figure~\ref{fig:no2} shows that a good deal of autocorrelations remain in the data
after removing the estimated mean shifts, 
but the persistent autocorrelations are no longer observed.
This supports the hypothesis that the (de-trended and transformed)
NO$_2$ and NO$_x$ concentrations over the period in consideration, 
can plausibly be accounted for by a model with short-range dependence and multiple mean shifts;
we refer to \cite{mikosch2004}, \cite{berkes2006} \cite{yau2012likelihood} and \cite{norwood2018}
for discussions on how weakly dependent time series with mean shifts 
may appear as a long-range dependent time series.
In Appendix~\ref{sec:no2}, we further validate the set of change point estimators 
detected by WCM.gSa from the NO$_2$ time series,
by attempting to remove the bulk of serial dependence from the data
and then applying an existing procedure for change point detection for uncorrelated data.

\begin{figure}[htbp]
\centering
\includegraphics[width = .9\linewidth]{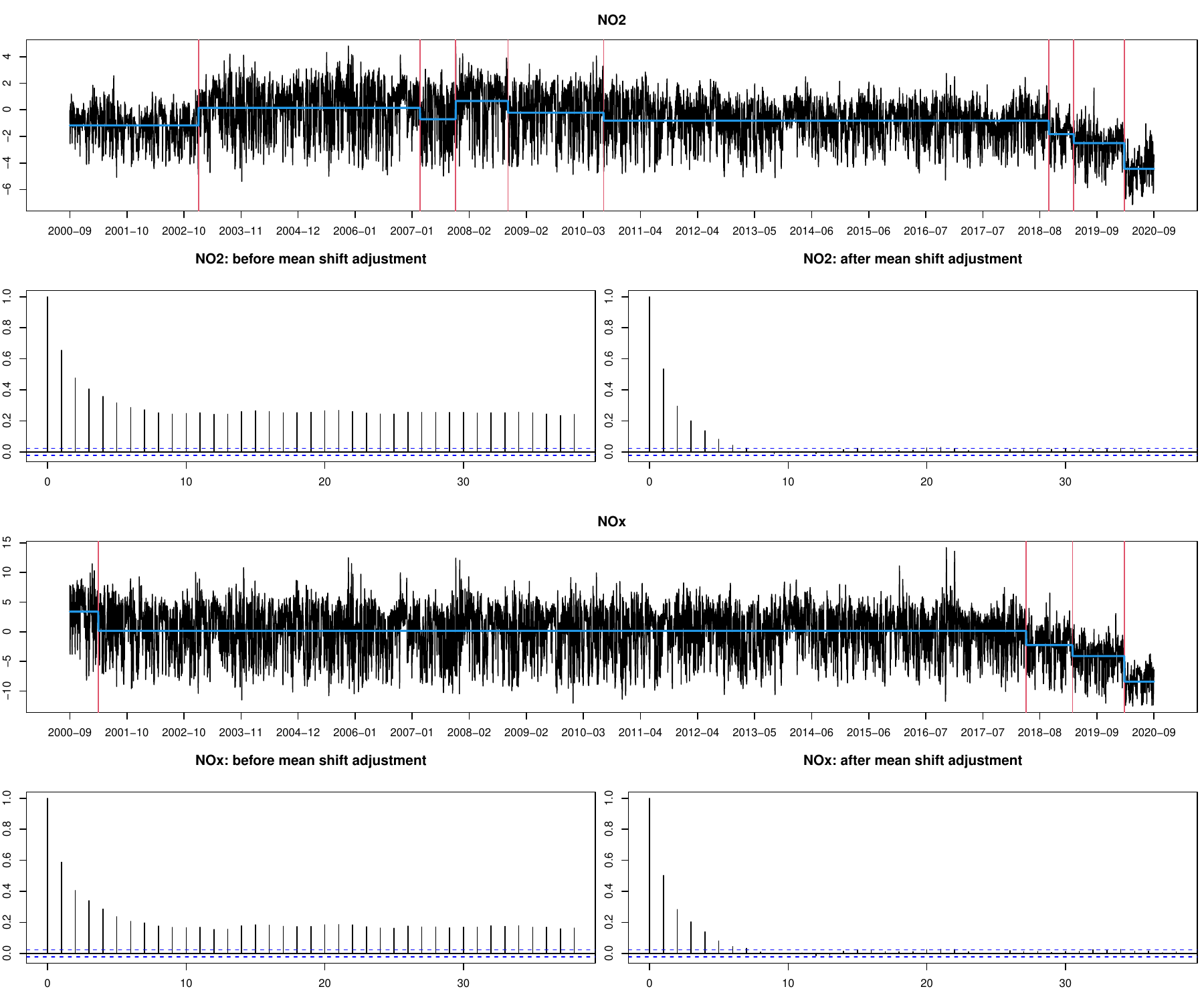}
\caption{First (third) panel: daily average concentrations of NO$_2$ (NO$_x$) 
after transformation and de-trending, 
plotted together with the change points detected by WCM.gSa (vertical lines) and 
estimated piecewise constant mean (bold lines).
Second (fourth) panel: 
autocorrelation function of transformed and de-trended NO$_2$ (NO$_x$) 
without (left) and with (right) the time-varying mean adjusted.}
\label{fig:no2}
\end{figure}

\begin{table}[h!t!b!p!]
\centering
\caption{Change points detected from the daily average concentrations of NO$_2$ and NO$_x$
measured at Marylebone Road in London from September 1, 2000 to September 30, 2020.
Any location estimators commonly detected from both NO$_2$ and NO$_x$ concentrations
(within $10$ days from one another) by each method
are highlighted in bold.
For DepSMUCE, parameterised by the significance level $\alpha$, identical estimators are returned with either of $\alpha \in \{0.05, 0.2\}$.
}
\label{table:air:main}
{\small
\setlength{\tabcolsep}{2pt}
\begin{tabular}{c  l  l }
\toprule
Method & NO$_2$ & NO$_x$ \\ 
\cmidrule{1-1}\cmidrule{2-2}\cmidrule{3-3}
WCM.gSa & 2003-01-31, 2007-03-17, 2007-11-15, 
& 2001-03-15, 2018-05-13, \\
& 2008-10-26, 2010-07-25, 2018-10-13, & {\bf 2019-03-22}, {\bf 2020-03-18} \\
& {\bf 2019-03-30}, {\bf 2020-03-18} \\
\midrule
DepSMUCE & 2003-01-31, 2010-07-25, & 2001-03-15, 2018-05-13, \\
& 2018-10-14, {\bf 2020-03-18} & {\bf 2020-03-18} \\
\midrule
DeCAFS & 2003-02-05, {\bf 2005-12-11}, {\bf 2005-12-17} & 2001-11-07, 2001-11-09, 2005-12-08\\
& 2007-04-25, 2007-05-05, 2007-12-10 & {\bf 2005-12-11}, {\bf 2005-12-17}, 2008-12-06\\
& 2008-03-03, 2008-03-04, 2009-09-08 & 2008-12-08, 2018-05-13, {\bf 2020-03-18}\\
& 2009-09-20, 2012-10-20, 2012-10-27 & \\
& 2018-10-14, {\bf 2020-03-18} & \\
\bottomrule
\end{tabular}}
\end{table}


In February 2003, a programme of traffic management measures
was introduced in central London including the installation of particulate traps on
most London buses and other heavy duty diesel vehicles,
which convert NO in the exhaust stream to NO$_2$
and thus bring in the increase of primary NO$_2$ emissions from such vehicles \citep{defra}.
This accounts for the prominent increase in the concentration of NO$_2$ 
detected around January 2003 by WCM.gSa (also by DepSMUCE and DeCAFS)
which, however, is not observed from NO$_x$,
since the latter contains the combined concentrations of NO and NO$_2$.
The two series share the common change point detected at the end of March 2019
(not detected by DepSMUCE or DeCAFS).
The Ultra Low Emission Zone in central London was launched on 8 April 2019,
which includes Marylebone Road where the measurements were taken,
and its introduction coincides with the decline in the concentrations of both NO$_2$ and NO$_x$.
Another common change point is detected on March 18, 2020
(also detected by DepSMUCE and DeCAFS)
which confirms that the nation-wide COVID-19 lockdown on March 23, 2020
led to the substantial reduction of NO$_x$ levels across the country \citep{higham2020}.

\bibliographystyle{apalike}
\bibliography{fbib}

\begin{thebibliography}{}

\bibitem[{Air Quality Expert Group}, 2004]{defra}
{Air Quality Expert Group} (2004).
\newblock Nitrogen dioxide in the {U}nited {K}ingdom.
\newblock
  \url{https://uk-air.defra.gov.uk/library/assets/documents/reports/aqeg/nd-chapter2.pdf}.
\newblock Accessed: 2020-11-04.

\bibitem[Anastasiou et~al., 2020]{accf20}
Anastasiou, A., Chen, Y., Cho, H., and Fryzlewicz, P. (2020).
\newblock {\em breakfast: {M}ethods for {F}ast {M}ultiple {C}hange-{P}oint
  {D}etection and {E}stimation}.
\newblock R package version 2.1.

\bibitem[Anastasiou and Fryzlewicz, 2020]{af20}
Anastasiou, A. and Fryzlewicz, P. (2020).
\newblock Detecting multiple generalized change-points by isolating single
  ones.
\newblock {\em Preprint}.

\bibitem[Aue and Horv{\'a}th, 2013]{aue2013}
Aue, A. and Horv{\'a}th, L. (2013).
\newblock Structural breaks in time series.
\newblock {\em Journal of Time Series Analysis}, 34:1--16.

\bibitem[Bardet et~al., 2012]{bardet2012}
Bardet, J.-M., Kengne, W., and Wintenberger, O. (2012).
\newblock Multiple breaks detection in general causal time series using
  penalized quasi-likelihood.
\newblock {\em Electronic Journal of Statistics}, 6:435--477.

\bibitem[Berkes et~al., 2006]{berkes2006}
Berkes, I., Horv{\'a}th, L., Kokoszka, P., Shao, Q.-M., et~al. (2006).
\newblock On discriminating between long-range dependence and changes in mean.
\newblock {\em The Annals of Statistics}, 34:1140--1165.

\bibitem[Berkes et~al., 2014]{berkes2014}
Berkes, I., Liu, W., and Wu, W.~B. (2014).
\newblock {K}oml{\'o}s-{M}ajor-{T}usn{\'a}dy approximation under dependence.
\newblock {\em The Annals of Probability}, 42:794--817.

\bibitem[Chakar et~al., 2017]{chakar2017}
Chakar, S., Lebarbier, E., L{\'e}vy-Leduc, C., and Robin, S. (2017).
\newblock A robust approach for estimating change-points in the mean of an
  {AR}(1) process.
\newblock {\em Bernoulli}, 23:1408--1447.

\bibitem[Chan and Yau, 2017]{chan2017}
Chan, K.~W. and Yau, C.~Y. (2017).
\newblock High-order corrected estimator of asymptotic variance with optimal
  bandwidth.
\newblock {\em Scandinavian Journal of Statistics}, 44:866--898.

\bibitem[Chan et~al., 2014]{chan2014}
Chan, N.~H., Yau, C.~Y., and Zhang, R.-M. (2014).
\newblock Group {LASSO} for structural break time series.
\newblock {\em Journal of the American Statistical Association}, 109:590--599.

\bibitem[Cho and Fryzlewicz, 2012]{cho2012a}
Cho, H. and Fryzlewicz, P. (2012).
\newblock Multiscale and multilevel technique for consistent segmentation of
  nonstationary time series.
\newblock {\em Statistica Sinica}, 22:207--229.

\bibitem[Cho and Kirch, 2021]{cho2021data}
Cho, H. and Kirch, C. (2021).
\newblock Data segmentation algorithms: Univariate mean change and beyond.
\newblock {\em Econometrics and Statistics (in press)}.

\bibitem[Cho and Kirch, 2022]{cho2019}
Cho, H. and Kirch, C. (2022).
\newblock Two-stage data segmentation permitting multiscale change points,
  heavy tails and dependence.
\newblock {\em Annals of the Institute of Statistical Mathematics},
  74:653--684.

\bibitem[Cho and Korkas, 2022]{cho2021high}
Cho, H. and Korkas, K.~K. (2022).
\newblock {High-dimensional GARCH process segmentation with an application to
  Value-at-Risk}.
\newblock {\em Econometrics and Statistics}, 23:187--203.

\bibitem[Cho et~al., 2022]{cho2022high}
Cho, H., Maeng, H., Eckley, I.~A., and Fearnhead, P. (2022).
\newblock High-dimensional time series segmentation via factor-adjusted vector
  autoregressive modelling.
\newblock {\em arXiv preprint arXiv:2204.02724}.

\bibitem[Cs{\"o}rg{\H o} and Horv{\'a}th, 1997]{csorgo1997}
Cs{\"o}rg{\H o}, M. and Horv{\'a}th, L. (1997).
\newblock {\em Limit Theorems in Change-point Analysis}, volume~18.
\newblock John Wiley \& Sons Inc.

\bibitem[Davis et~al., 2006]{davis2006}
Davis, R., Lee, T., and Rodriguez-Yam, G. (2006).
\newblock Structural break estimation for non-stationary time series.
\newblock {\em Journal of the American Statistical Association}, 101:223--239.

\bibitem[Davis et~al., 2008]{davis2008}
Davis, R., Lee, T., and Rodriguez-Yam, G. (2008).
\newblock Break detection for a class of nonlinear time series models.
\newblock {\em Journal of Time Series Analysis}, 29:834--867.

\bibitem[{De la Pe\~{n}a}, 1999]{pena1999}
{De la Pe\~{n}a}, V.~H. (1999).
\newblock A general class of exponential inequalities for martingales and
  ratios.
\newblock {\em The Annals of Probability}, 27:537--564.

\bibitem[{den Haan} and Levin, 1997]{den1997}
{den Haan}, W.~J. and Levin, A.~T. (1997).
\newblock A practitioner's guide to robust covariance matrix estimation.
\newblock {\em Handbook of Statistics}, 15:299 -- 342.

\bibitem[Dette et~al., 2020]{dette2018}
Dette, H., Sch{\"u}ler, T., and Vetter, M. (2020).
\newblock Multiscale change point detection for dependent data.
\newblock {\em Scandinavian Journal of Statistics}, 47:1243--1274.

\bibitem[Doukhan and Neumann, 2007]{doukhan2007}
Doukhan, P. and Neumann, M.~H. (2007).
\newblock Probability and moment inequalities for sums of weakly dependent
  random variables, with applications.
\newblock {\em Stochastic Processes and their Applications}, 117:878--903.

\bibitem[Eichinger and Kirch, 2018]{eichinger2018}
Eichinger, B. and Kirch, C. (2018).
\newblock A {MOSUM} procedure for the estimation of multiple random change
  points.
\newblock {\em Bernoulli}, 24:526--564.

\bibitem[Fang and Siegmund, 2020]{fang2020}
Fang, X. and Siegmund, D. (2020).
\newblock Detection and estimation of local signals.
\newblock {\em arXiv preprint arXiv:2004.08159}.

\bibitem[Fearnhead and Rigaill, 2020]{fr20}
Fearnhead, P. and Rigaill, G. (2020).
\newblock Relating and comparing methods for detecting changes in mean.
\newblock {\em Stat}, 9:e291.

\bibitem[Frick et~al., 2014]{frick2014}
Frick, K., Munk, A., and Sieling, H. (2014).
\newblock Multiscale change point inference.
\newblock {\em Journal of the Royal Statistical Society: Series B (Statistical
  Methodology)}, 76:495--580.

\bibitem[Fryzlewicz, 2014]{fryzlewicz2014}
Fryzlewicz, P. (2014).
\newblock {Wild Binary Segmentation} for multiple change-point detection.
\newblock {\em The Annals of Statistics}, 42:2243--2281.

\bibitem[Fryzlewicz, 2020a]{fryzlewicz2019}
Fryzlewicz, P. (2020a).
\newblock Detecting possibly frequent change-points: {Wild Binary Segmentation}
  2 and steepest-drop model selection.
\newblock {\em Journal of the Korean Statistical Society}, pages 1--44.

\bibitem[Fryzlewicz, 2020b]{fryzlewicz2020n}
Fryzlewicz, P. (2020b).
\newblock Narrowest {S}ignificance {P}ursuit: inference for multiple
  change-points in linear models.
\newblock {\em Preprint}.

\bibitem[Higham et~al., 2020]{higham2020}
Higham, J., Ram{\'\i}rez, C.~A., Green, M., and Morse, A. (2020).
\newblock {UK COVID-19 lockdown: 100 days of air pollution reduction}.
\newblock {\em Air Quality, Atmosphere \& Health}, pages 1--8.

\bibitem[Horn and Johnson, 1985]{horn1985}
Horn, R.~A. and Johnson, C.~R. (1985).
\newblock {\em Matrix Analysis}.
\newblock Cambridge University Press.

\bibitem[Hu{\v{s}}kov{\'a} and Kirch, 2010]{huvskova2010}
Hu{\v{s}}kov{\'a}, M. and Kirch, C. (2010).
\newblock A note on studentized confidence intervals for the change-point.
\newblock {\em Computational Statistics}, 25:269--289.

\bibitem[Hu{\v{s}}kov{\'a} and Slab{\'y}, 2001]{huvskova2001}
Hu{\v{s}}kov{\'a}, M. and Slab{\'y}, A. (2001).
\newblock Permutation tests for multiple changes.
\newblock {\em Kybernetika}, 37:605--622.

\bibitem[Kampa and Castanas, 2008]{kampa2008}
Kampa, M. and Castanas, E. (2008).
\newblock Human health effects of air pollution.
\newblock {\em Environmental Pollution}, 151:362--367.

\bibitem[Killick et~al., 2012a]{killick2012}
Killick, R., Fearnhead, P., and Eckley, I.~A. (2012a).
\newblock Optimal detection of changepoints with a linear computational cost.
\newblock {\em Journal of the American Statistical Association},
  107:1590--1598.

\bibitem[Killick et~al., 2012b]{changepoint}
Killick, R., Nam, C., Aston, J., and Eckley, I. (2012b).
\newblock changepoint.info: The changepoint repository.
\newblock \url{http://changepoint.info/}.

\bibitem[Kirch, 2006]{kirch2006}
Kirch, C. (2006).
\newblock {\em Resampling methods for the change analysis of dependent data}.
\newblock PhD thesis, Universit{\"a}t zu K{\"o}ln.

\bibitem[Korkas and Fryzlewicz, 2017]{korkas2017}
Korkas, K.~K. and Fryzlewicz, P. (2017).
\newblock Multiple change-point detection for non-stationary time series using
  wild binary segmentation.
\newblock {\em Statistica Sinica}, 27:287--311.

\bibitem[Kov{\'a}cs et~al., 2023]{kovacs2020}
Kov{\'a}cs, S., Li, H., B{\"u}hlmann, P., and Munk, A. (2023).
\newblock Seeded binary segmentation: A general methodology for fast and
  optimal change point detection.
\newblock {\em Biometrika}, 110:249--256.

\bibitem[Kuelbs and Philipp, 1980]{kuelbs1980}
Kuelbs, J. and Philipp, W. (1980).
\newblock Almost sure invariance principles for partial sums of mixing
  {$B$}-valued random variables.
\newblock {\em The Annals of Probability}, pages 1003--1036.

\bibitem[K{\"u}hn, 2001]{kuhn2001}
K{\"u}hn, C. (2001).
\newblock An estimator of the number of change points based on a weak
  invariance principle.
\newblock {\em Statistics \& Probability Letters}, 51:189--196.

\bibitem[Lai and Wei, 1982a]{lai1982b}
Lai, T. and Wei, C. (1982a).
\newblock Asymptotic properties of projections with applications to stochastic
  regression problems.
\newblock {\em Journal of Multivariate Analysis}, 12:346--370.

\bibitem[Lai and Wei, 1982b]{lai1982}
Lai, T. and Wei, C. (1982b).
\newblock Least squares estimates in stochastic regression models with
  applications to identification and control of dynamic systems.
\newblock {\em The Annals of Statistics}, 10:154--166.

\bibitem[Lai and Wei, 1983]{lai1983}
Lai, T. and Wei, C. (1983).
\newblock Asymptotic properties of general autoregressive models and strong
  consistency of least-squares estimates of their parameters.
\newblock {\em Journal of Multivariate Analysis}, 13:1--23.

\bibitem[Lavielle and Moulines, 2000]{lavielle2000}
Lavielle, M. and Moulines, E. (2000).
\newblock Least-squares estimation of an unknown number of shifts in a time
  series.
\newblock {\em Journal of Time Series Analysis}, 21:33--59.

\bibitem[Lu et~al., 2010]{lu2010mdl}
Lu, Q., Lund, R., and Lee, T.~C. (2010).
\newblock {An MDL approach to the climate segmentation problem}.
\newblock {\em The Annals of Applied Statistics}, 4(1):299--319.

\bibitem[Merlev{\`e}de et~al., 2011]{merlevede2011}
Merlev{\`e}de, F., Peligrad, M., and Rio, E. (2011).
\newblock A {B}ernstein type inequality and moderate deviations for weakly
  dependent sequences.
\newblock {\em Probability Theory and Related Fields}, 151:435--474.

\bibitem[Mikosch and St{\u{a}}ric{\u{a}}, 2004]{mikosch2004}
Mikosch, T. and St{\u{a}}ric{\u{a}}, C. (2004).
\newblock Nonstationarities in financial time series, the long-range
  dependence, and the {IGARCH} effects.
\newblock {\em The Review of Economics and Statistics}, 86:378--390.

\bibitem[Norwood and Killick, 2018]{norwood2018}
Norwood, B. and Killick, R. (2018).
\newblock Long memory and changepoint models: a spectral classification
  procedure.
\newblock {\em Statistics and Computing}, 28:291--302.

\bibitem[Parker et~al., 1992]{parker1992}
Parker, D.~E., Legg, T.~P., and Folland, C.~K. (1992).
\newblock A new daily central {E}ngland temperature series, 1772--1991.
\newblock {\em International Journal of Climatology: A Journal of the Royal
  Meteorological Society}, 12:317--342.

\bibitem[Peligrad and Utev, 2006]{peligrad2006}
Peligrad, M. and Utev, S. (2006).
\newblock Invariance principle for stochastic processes with short memory.
\newblock In {\em High Dimensional Probability, IMS Lecture Notes Monograph
  Series}, volume~51, pages 18--32. Institute of Mathematical Statistics.

\bibitem[Pe{\v{s}}ta and Wendler, 2020]{pevsta2018}
Pe{\v{s}}ta, M. and Wendler, M. (2020).
\newblock Nuisance parameters free changepoint detection in non-stationary
  series.
\newblock {\em TEST}, 29(2):379--408.

\bibitem[Reid et~al., 2016]{reid2016}
Reid, P.~C., Hari, R.~E., Beaugrand, G., Livingstone, D.~M., Marty, C.,
  Straile, D., Barichivich, J., Goberville, E., Adrian, R., Aono, Y., et~al.
  (2016).
\newblock Global impacts of the 1980s regime shift.
\newblock {\em Global change Biology}, 22:682--703.

\bibitem[Robbins et~al., 2011]{robbins2011}
Robbins, M., Gallagher, C., Lund, R., and Aue, A. (2011).
\newblock Mean shift testing in correlated data.
\newblock {\em Journal of Time Series Analysis}, 32:498--511.

\bibitem[Romano et~al., 2020]{decafs}
Romano, G., Rigaill, G., Runge, V., and Fearnhead, P. (2020).
\newblock {\em DeCAFS: Detecting Changes in Autocorrelated and Fluctuating
  Signals}.
\newblock R package version 3.2.3.

\bibitem[Romano et~al., 2022]{romano2020}
Romano, G., Rigaill, G., Runge, V., and Fearnhead, P. (2022).
\newblock Detecting abrupt changes in the presence of local fluctuations and
  autocorrelated noise.
\newblock {\em Journal of the American Statistical Association},
  117(54):2147--2162.

\bibitem[Safikhani and Shojaie, 2022]{safikhani2020}
Safikhani, A. and Shojaie, A. (2022).
\newblock {Joint structural break detection and parameter estimation in
  high-dimensional non-stationary VAR models}.
\newblock {\em Journal of the American Statistical Association},
  117(537):251--264.

\bibitem[Schwarz, 1978]{schwarz1978}
Schwarz, G. (1978).
\newblock Estimating the dimension of a model.
\newblock {\em The Annals of Statistics}, 6(2):461--464.

\bibitem[Shao and Zhang, 2010]{shao2010}
Shao, X. and Zhang, X. (2010).
\newblock Testing for change points in time series.
\newblock {\em Journal of the American Statistical Association},
  105:1228--1240.

\bibitem[Tecuapetla-G{\'o}mez and Munk, 2017]{tecuapetla2017}
Tecuapetla-G{\'o}mez, I. and Munk, A. (2017).
\newblock Autocovariance estimation in regression with a discontinuous signal
  and $m$-dependent errors: a difference-based approach.
\newblock {\em Scandinavian Journal of Statistics}, 44:346--368.

\bibitem[Venkatraman, 1992]{venkatraman1992}
Venkatraman, E. (1992).
\newblock Consistency results in multiple change-point problems.
\newblock {\em Technical Report No. 24, Department of Statistics, Stanford
  University}.

\bibitem[Vershynin, 2018]{vershynin2020}
Vershynin, R. (2018).
\newblock {\em High-dimensional Probability: An Introduction with Applications
  in Data Science}, volume~47.
\newblock Cambridge University Press.

\bibitem[Verzelen et~al., 2020]{fromont2020}
Verzelen, N., Fromont, M., Lerasle, M., and Reynaud-Bouret, P. (2020).
\newblock Optimal change-point detection and localization.
\newblock {\em arXiv preprint arXiv:2010.11470}.

\bibitem[Vladimirova et~al., 2020]{vladimirova2019}
Vladimirova, M., Girard, S., Nguyen, H., and Arbel, J. (2020).
\newblock {Sub-Weibull distributions: Generalizing sub-Gaussian and
  sub-Exponential properties to heavier tailed distributions}.
\newblock {\em Stat}, 9(1):e318.

\bibitem[Wang and Samworth, 2018]{wang2018}
Wang, T. and Samworth, R.~J. (2018).
\newblock High dimensional change point estimation via sparse projection.
\newblock {\em Journal of the Royal Statistical Society: Series B (Statistical
  Methodology)}, 80:57--83.

\bibitem[Wu and Zhou, 2020]{wu2019}
Wu, W. and Zhou, Z. (2020).
\newblock Multiscale jump testing and estimation under complex temporal
  dynamics.
\newblock {\em arXiv preprint arXiv:1909.06307}.

\bibitem[Yao, 1988]{yao1988}
Yao, Y.-C. (1988).
\newblock Estimating the number of change-points via {S}chwarz' criterion.
\newblock {\em Statistics \& Probability Letters}, 6:181--189.

\bibitem[Yau and Davis, 2012]{yau2012likelihood}
Yau, C.~Y. and Davis, R.~A. (2012).
\newblock Likelihood inference for discriminating between long-memory and
  change-point models.
\newblock {\em Journal of Time Series Analysis}, 33(4):649--664.

\bibitem[Yau and Zhao, 2016]{yau2016}
Yau, C.~Y. and Zhao, Z. (2016).
\newblock Inference for multiple change points in time series via likelihood
  ratio scan statistics.
\newblock {\em Journal of the Royal Statistical Society: Series B (Statistical
  Methodology)}, 78:895--916.

\bibitem[Zhang and Wu, 2017]{zhang2017gaussian}
Zhang, D. and Wu, W.~B. (2017).
\newblock Gaussian approximation for high dimensional time series.
\newblock {\em The Annals of Statistics}, 45(5):1895--1919.

\bibitem[Zhao et~al., 2022]{zhao2021segmenting}
Zhao, Z., Jiang, F., and Shao, X. (2022).
\newblock Segmenting time series via self-normalization.
\newblock {\em Journal of the Royal Statistical Society: Series B (Statistical
  Methodology)}, 84(5):1699--1725.

\end{thebibliography}

\clearpage

\numberwithin{equation}{section}
\numberwithin{figure}{section}
\numberwithin{table}{section}

\appendix

\section{Algorithms}
\label{app:alg}

\subsection{Wild Binary Segmentation 2 algorithm}
\label{app:alg:wbs2}

Algorithm~\ref{alg:cp} provides a pseudo code 
for the Wild Binary Segmentation~2 (WBS2) algorithm
proposed in \cite{fryzlewicz2019}.

We remark that WBS2 as defined in \cite{fryzlewicz2019} 
uses random sampling in line~7 of Algorithm~\ref{alg:cp},
but our preference is for deterministic sampling 
as it generates reproducible results without having to fix a random seed.
To obtain at least $\wt{R}$ intervals over an equispaced 
(or almost equispaced, if exactly equal spacing is not possible) grid on a generic interval $[s, e]$, 
we firstly select the smallest integer $\wt{K}$ for which the number of all intervals with
start- and end-points in the set $\{1, \ldots, \wt{K}\}$ equals or exceeds $\wt{R}$. 
Next, we map (linearly with rounding) the integer grid $[1, \wt{K}]$ 
onto an integer grid within $[s, e]$, as $j \to [\frac{e-s}{\wt{K}-1} j + s - \frac{e-s}{\wt{K}-1}]$
for each $j \in \{1, \ldots, \wt{K}\}$,
where $[\cdot]$ represents rounding to the nearest integer. 
We then use all start- and end-points on the resulting grid 
to obtain the required collection $(s_m, e_m)$ in line~7 of Algorithm~\ref{alg:cp}.

\begin{algorithm}[htp]
\caption{Wild Binary Segmentation~2}
\label{alg:cp}
\DontPrintSemicolon
\SetAlgoLined
\SetKwData{wbs}{wbs2}
\SetKwData{return}{return}

\SetKwProg{Fn}{Function}{:}{}

\KwIn{Data $\{X_t\}_{t = 1}^n$, the number of intervals $R_n$}
\BlankLine

\Fn{$\wbs(\{X_t\}_{t = 1}^n, R_n, s, e)$}{
	\lIf{\textup{$e - s \le 1$}}{$\return \, \emptyset$}
	
	Let $\mc A_{s, e} \leftarrow \{(\ell, r) \in \Z^2:\, s \le \ell < r \le e \text{ and } r - \ell > 1\}$
	
	\uIf{\textup{$|\mc A_{s, e}| \le R_n$}}{
	$\wt R \leftarrow |\mc A_{s, e}|$ and set $\mc R_{s, e} \leftarrow \mc A_{s, e}$
	}
	\Else{$\wt R \leftarrow R_n$ and draw $\wt R$ elements from $\mc A_{s, e}$
	deterministically over an equispaced grid, to form $\mc R_{s, e} = \{1 \le m \le \wt R:\, (s_m, e_m)\}$}
	
	Identify $(s_\circ, \c_\circ, e_\circ) = \arg\max_{(s_m, \c, e_m): \, 1 \le m \le \wt R, \, s_m < \c < e_m} 
	\vert \mc X_{s_m, \c, e_m} \vert$
	
\BlankLine
	
$\return$ \, $(s_\circ, \c_\circ, e_\circ, \vert \mc X_{s_\circ, \c_\circ, e_\circ} \vert)
\cup  \wbs(\{X_t\}_{t = 1}^n, R_n, s, \c_\circ) 
\cup \wbs(\{X_t\}_{t = 1}^n, R_n, \c_\circ, e)$
}
\BlankLine

$\mc P_0 \leftarrow \wbs(\{X_t\}_{t = 1}^n, R_n, 0, n)$
\BlankLine

\KwOut{$\mc P_0$}
\end{algorithm}

\subsection{Gappy Schwarz algorithm}
\label{app:gsc}

For each $l \ge 1$, we denote 
$\wh\Cp_l = \{\wh\cp_{l, j}, \, 1 \le j \le \wh q_l: \,
\wh\cp_{l, 1} < \ldots < \wh\cp_{l, \wh q_l}\}$,
and adopt the notational convention that
$\wh\cp_{l, 0} = 0$ and $\wh\cp_{l, \wh{q}_l + 1} = n$.
Initialised with $l = M$, gSa performs the following steps.

\begin{enumerate}[label = {\bf Step \arabic*:}, itemindent = 20pt]
\item \label{seqsc:one} We identify $u \in \{0, \ldots, \wh{q}_{l - 1}\}$
with $\{\wh\cp_{l - 1, u} + 1, \ldots, \wh\cp_{l - 1, u + 1} - 1\} \cap \wh\Cp_l  \ne \emptyset$;
that is, the segment $\{\wh\cp_{l - 1, u} + 1, \ldots, \wh\cp_{l - 1, u + 1} - 1\}$ 
defined by the consecutive elements of $\wh\Cp_{l - 1}$,
has additional change points detected in $\wh\Cp_l$
such that 
$\{\wh\cp_{l - 1, u} + 1, \ldots, \wh\cp_{l - 1, u + 1} - 1\} \cap (\wh\Cp_l \setminus \wh\Cp_{l - 1})
\ne \emptyset$.
By construction, the set of such indices, $\mc I_l := \{u_1, \ldots, u_{q^\prime_l}\}$,
satisfies $\vert \mc I_l \vert \ge 1$.
For each $u_v, \, v = 1, \ldots, q^\prime_l$, we repeat the following steps
with a logical vector of length $q^\prime_l$, 
$\mbf F \in \{\text{\tt TRUE}, \text{\tt FALSE}\}^{q^\prime_l}$,
initialised as $\mbf F = (\text{\tt TRUE}. \ldots, \text{\tt TRUE})$.

\begin{enumerate}[label = {\bf Step 1.\arabic*:}, itemindent = 20pt]
\item Setting 
$\mc A = \{\wh\cp_{l - 1, u_v} + 1, \ldots, \wh\cp_{l - 1, u_v + 1} - 1\} \cap \wh\Cp_l$,
obtain $\wh p$
that returns the smallest $\sc(\{X_t\}_{t = \wh\cp_{l - 1, u_v} + 1}^{\wh\cp_{l - 1, u_v + 1}}, \mc A, r)$ over $r  \in \{0, \ldots, p_{\max}\}$
as outlined in~\eqref{eq:p:est},
and the corresponding AR parameter estimator $\wh{\bm\alpha}(\wh p)$
via least squares estimation.

\item If
$\sc(\{X_t\}_{t = \wh\cp_{l - 1, u_v} + 1}^{\wh\cp_{l - 1, u_v + 1}}, \mc A, \wh p) <
\sc_0(\{X_t\}_{t = \wh\cp_{l - 1, u_v} + 1}^{\wh\cp_{l - 1, u_v + 1}}, \wh{\bm\alpha}(\wh p))$,
update $F_v \leftarrow {\tt FALSE}$.
\end{enumerate}

\item \label{seqsc:two} 
If some elements of $\mbf F$ satisfy $F_v = \text{\tt TRUE}$ and $l > 1$, 
update $l \leftarrow l - 1$ and go to Step~1.
If $F_v = \text{\tt FALSE}$ for all $v = 1, \ldots, q^\prime_l$, 
return $\wh\Cp_l$ as the set of change point estimators.
Otherwise, return $\wh\Cp_0 = \emptyset$.
\end{enumerate}

Theorem~\ref{thm:sc} shows that 
we have either $F_v = \text{\tt FALSE}$ for all $v = 1, \ldots, q^\prime_l$ 
when the corresponding $\wh\Cp_l = \wh\Cp_{l^*}$ (see Assumption~\ref{assum:est} for the definition of $\wh\Cp_{l^*}$),
or $F_v = \text{\tt TRUE}$ for all $v$ 
when $l > l^*$ and thus all $\wh\Cp_l \setminus \wh\Cp_{l - 1}$ are spurious estimators.
In implementing the methodology, we take a conservative approach 
in the above Step~2,
to guard against the unlikely event where the output $\mbf F$ contains mixed results.

\section{Refinement of change point estimators}
\label{sec:refine}

Throughout this section, we condition on the event that
$\wh\Cp[q]$ is chosen at the model selection step,
and discuss how the location estimators can further be refined;
consistent model selection based on the estimators of change point locations
returned directly by WBS2 (without any additional refinement), is discussed in Section~\ref{sec:ms}.

By Theorem~\ref{thm:cp} and Assumption~\ref{assum:cp:one}, 
each $\wh\cp_j, \, 1 \le j \le q$, is sufficiently close to the corresponding change point $\cp_j$
in the sense that $\vert \wh\cp_j - \cp_j \vert \le (f_j^\prime)^{-2} \rho_n \le c\delta_j$
for some $c \in (0, 1/6)$ with probability tending to one, for $n$ large enough.
Defining $\ell_1 = 0$, $r_{q} = n$,
\begin{align*}
& \ell_j = \l\lfloor \frac{2}{3}\wh\cp_{j - 1} + \frac{1}{3}\wh\cp_j \r\rfloor,  \quad j = 2, \ldots, q, 
\quad \text{and} \quad
r_j = \l\lfloor \frac{1}{3}\wh\cp_j + \frac{2}{3}\wh\cp_{j + 1} \r\rfloor, \quad j = 1, \ldots, q - 1,
\end{align*}
we have each interval $(\ell_j, r_j)$ sufficiently large
and contain a single change point $\cp_j$ well within its interior, i.e.\
\begin{align}
\min(\cp_j - \ell_j, r_j - \cp_j) &\ge (2/3 - c) \delta_j > \delta_j/2, \quad \text{and} 
\label{eq:ref:min:dist}
\\
\min(\ell_j - \cp_{j - 1}, \cp_{j + 1} - r_j) &\ge (1/3 - c) \delta_j > 0.
\label{eq:ref:one:cp}
\end{align}
Then, we propose to further refine the location estimator $\wh\cp_j$ by
$\check{\cp}_j = {\arg\max}_{\ell_j < \c < r_j} \l\vert \mc X_{\ell_j, \c, r_j} \r\vert$,
which generally improves the localisation rate.
To see this, we impose the following assumption on the error distribution which,
by its formulation, trivially holds under Assumption~\ref{assum:error} with $\wt\zeta_n = \zeta_n$.
However, we often have the assumption met with a much tighter bound 
as discussed in Remark~\ref{rem:assum:error:loc}, 
which leads to the improvement in the localisation rate of the refined estimators $\check\theta_j$
as shown in Proposition~\ref{prop:refine}.

\begin{assumption}  
\label{assum:error:loc}
For any sequence $1 \le a_n \le \min_{1 \le j \le q} (f_j^\prime)^2\delta_j$ 
and some $\wt\zeta_n$ satisfying $\wt\zeta_n = O(\zeta_n)$ 
(with $\zeta_n$ as in Assumption~\ref{assum:error}),
let $\p(\wt{\mc Z}_n) \to 1$ where 
\begin{align*}
\wt{\mc Z}_n &= \l\{ \max_{1 \le j \le q} \; \max_{(f_j^\prime)^{-2} a_n \le \ell \le \cp_j - \cp_{j - 1}}
\frac{\sqrt{(f_j^\prime)^{-2} a_n}}{\ell} \l\vert \sum_{t = \cp_j - \ell + 1}^{\cp_j} Z_t \r\vert
\le \wt{\zeta}_n \r\} \\
& \qquad \bigcap \l\{ \max_{1 \le j \le q} \; \max_{(f_j^\prime)^{-2} a_n \le \ell \le \cp_{j + 1} - \cp_j}
\frac{\sqrt{(f_j^\prime)^{-2} a_n}}{\ell} \l\vert \sum_{t = \cp_j + 1}^{\cp_j + \ell} Z_t \r\vert
\le \wt{\zeta}_n \r\}.
\end{align*}
\end{assumption}

\begin{prop}
\label{prop:refine}
Let the assumptions of Theorem~\ref{thm:cp} and Assumption~\ref{assum:error:loc} hold.
Then, there exists $c_3 \in (0, \infty)$ such that
\begin{align*}
\p\l( \max_{1 \le j \le q} (f_j^\prime)^2 \vert \check\cp_j - \cp_j \vert \le c_3 (\wt\zeta_n)^2 \r)
\ge \p\l(\mc Z_n \cap \wt{\mc Z}_n \r) \to 1.
\end{align*}
\end{prop}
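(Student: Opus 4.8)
The plan is to follow the standard ``isolated single change point'' argument applied separately on each interval $(\ell_j, r_j)$, $j = 1, \ldots, q$, using the geometric properties~\eqref{eq:ref:min:dist}--\eqref{eq:ref:one:cp} already established, and then to quantify the localisation error using the tighter noise bound $\wt\zeta_n$ supplied by Assumption~\ref{assum:error:loc}. First I would condition on $\mc Z_n \cap \wt{\mc Z}_n$, which has probability tending to one; on this event, by Theorem~\ref{thm:cp}\ref{thm:cp:one} each preliminary estimator $\wh\cp_j$ is within $c\delta_j$ of $\cp_j$ for some $c \in (0, 1/6)$, so~\eqref{eq:ref:min:dist}--\eqref{eq:ref:one:cp} hold and $(\ell_j, r_j)$ is a ``clean'' interval: it has length $\gtrsim \delta_j$, contains the single change point $\cp_j$ at distance $> \delta_j/2$ from each endpoint, and contains no other change point.

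Fix $j$ and write $\check\cp_j = \arg\max_{\ell_j < \c < r_j} |\mc X_{\ell_j, \c, r_j}|$. Decompose $\mc X_{\ell_j, \c, r_j} = \mc X^f_{\ell_j, \c, r_j} + \mc X^Z_{\ell_j, \c, r_j}$ into the deterministic (signal) part and the stochastic (noise) part of the CUSUM. Since $(\ell_j, r_j)$ contains exactly one change point, the signal CUSUM $|\mc X^f_{\ell_j, \c, r_j}|$ is maximised at $\c = \cp_j$ with value of order $|f^\prime_j|\sqrt{\delta_j}$, and for $\c$ on one side of $\cp_j$ it decays linearly: a routine computation gives $|\mc X^f_{\ell_j, \cp_j, r_j}| - |\mc X^f_{\ell_j, \c, r_j}| \gtrsim (f^\prime_j)^2 |\c - \cp_j| / \sqrt{\delta_j}$ for $|\c - \cp_j|$ not too large. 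The defining inequality $|\mc X_{\ell_j, \check\cp_j, r_j}| \ge |\mc X_{\ell_j, \cp_j, r_j}|$ then forces
\[
(f^\prime_j)^2 |\check\cp_j - \cp_j| / \sqrt{\delta_j} \lesssim |\mc X^Z_{\ell_j, \check\cp_j, r_j}| + |\mc X^Z_{\ell_j, \cp_j, r_j}|.
\]
The second noise term is $O(\wt\zeta_n)$ by Assumption~\ref{assum:error} (or~\ref{assum:error:loc} with $a_n$ chosen as a constant). For the first, set $\ell = |\check\cp_j - \cp_j|$ and $a_n = (f^\prime_j)^2 \ell$; one checks $1 \le a_n \le (f^\prime_j)^2 \delta_j$ is within the range allowed in Assumption~\ref{assum:error:loc} (using the preliminary bound on $\ell$), so the one-sided partial sums $|\sum_{t = \cp_j + 1}^{\cp_j + \ell} Z_t|$ and $|\sum_{t = \cp_j - \ell + 1}^{\cp_j} Z_t|$ are bounded by $\wt\zeta_n \ell / \sqrt{(f^\prime_j)^{-2} a_n} = |f^\prime_j| \sqrt{\ell}\,\wt\zeta_n$; translating this into a bound on $|\mc X^Z_{\ell_j, \check\cp_j, r_j}|$ via the CUSUM weights and the fact that $r_j - \ell_j \asymp \delta_j$ yields $|\mc X^Z_{\ell_j, \check\cp_j, r_j}| \lesssim |f^\prime_j| \sqrt{\ell}\,\wt\zeta_n / \sqrt{\delta_j}$ up to the contribution of the partial sums over the segments $(\ell_j, \cp_j]$ and $(\cp_j, r_j]$ not adjacent to $\check\cp_j$, which are again $O(\wt\zeta_n)$. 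Combining these bounds and rearranging gives $(f^\prime_j)^2 \ell / \sqrt{\delta_j} \lesssim |f^\prime_j|\sqrt{\ell}\,\wt\zeta_n/\sqrt{\delta_j} + \wt\zeta_n$, i.e.\ a quadratic inequality in $\sqrt{\ell}$ whose solution is $(f^\prime_j)^2 \ell \lesssim (\wt\zeta_n)^2$; taking the maximum over $j$ (noting the implicit constants are uniform since $\max_j |f^\prime_j| = O(1)$ and $\min_j \delta_j \ge c_1 n$) gives the claim with some $c_3 \in (0, \infty)$.

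The main obstacle is the bootstrapping aspect of the noise bound: the self-normalising quantity $\sqrt{(f^\prime_j)^{-2} a_n}/\ell$ in Assumption~\ref{assum:error:loc} is tailored to exactly the scale $\ell = |\check\cp_j - \cp_j|$ we are trying to bound, so one must first use the crude bound $|\check\cp_j - \cp_j| \le c\delta_j$ (or a slightly refined intermediate bound) to certify that $a_n = (f^\prime_j)^2 |\check\cp_j - \cp_j|$ falls in the admissible range $[1, (f^\prime_j)^2\delta_j]$ before the assumption can be invoked, and then feed the resulting sharper bound back in. Care is also needed at the lower end: if $(f^\prime_j)^2|\check\cp_j - \cp_j|$ would be below $1$ there is nothing to prove, so one may harmlessly assume $a_n \ge 1$. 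The rest is the bookkeeping of the CUSUM decomposition on a clean interval, which is entirely analogous to standard arguments (e.g.\ the localisation analysis in \cite{fryzlewicz2014, baranowski2019}) and presents no conceptual difficulty.
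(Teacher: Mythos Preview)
Your sketch has the right architecture but a real gap in how Assumption~\ref{assum:error:loc} is deployed. Two intertwined problems. First, $a_n$ there is a \emph{deterministic} sequence: the event $\wt{\mc Z}_n$ appearing in the proposition is defined for one fixed choice of $a_n$, so you cannot set $a_n = (f'_j)^2|\check\cp_j - \cp_j|$, which is random. The paper takes $a_n = \wt\rho_n = c_3(\wt\zeta_n)^2$ --- the target bound itself --- and argues by contradiction on the range $|\check\cp_j - \cp_j| \ge (f'_j)^{-2}\wt\rho_n$. Second, and more seriously, the additive triangle inequality $|\mc F_{\cp_j}| - |\mc F_{\check\cp_j}| \le |\mc Z_{\check\cp_j}| + |\mc Z_{\cp_j}|$ (writing $\mc F_\c, \mc Z_\c$ for the signal and noise CUSUMs on $(\ell_j,r_j)$) is too coarse. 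Each $|\mc Z_\c|$ involves partial sums over segments of length $\asymp\delta_j$ and is only $O(\zeta_n)$ on $\mc Z_n$; Assumption~\ref{assum:error:loc} cannot improve this to $O(\wt\zeta_n)$, because for a fixed $a_n$ it gives a bound \emph{linear} in the window length, not the $\sqrt{\cdot}$ scaling your quadratic-inequality step requires. With the $O(\wt\zeta_n)$ claim for the ``non-adjacent'' contributions replaced by the honest $O(\zeta_n)$, your inequality degrades to $|f'_j|\,|\check\cp_j - \cp_j| \lesssim \zeta_n\sqrt{\delta_j}$, which does not yield $(f'_j)^2|\check\cp_j - \cp_j| \lesssim (\wt\zeta_n)^2$. (A minor slip: the signal decay is $\asymp |f'_j|\,|\c - \cp_j|/\sqrt{\delta_j}$, one power of $|f'_j|$, not two.)

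The fix, which the paper implements, is to work with the \emph{difference} $\mc Z_{\cp_j} - \mc Z_{\c}$ via the factorisation $\mc X_\c^2 - \mc X_{\cp_j}^2 = -D_1(\c)D_2(\c)\{1+A_1(\c)/D_1(\c)\}\{1+A_2(\c)/D_2(\c)\}$, where $D_1 = \mc F_{\cp_j} - \mc F_\c$, $D_2 = \mc F_{\cp_j} + \mc F_\c$, $A_1 = \mc Z_{\cp_j} - \mc Z_\c$, $A_2 = \mc Z_{\cp_j} + \mc Z_\c$. In $A_1$ the long partial sums over $(\ell_j,\cp_j]$ and $(\cp_j,r_j]$ cancel up to a small weighting correction of size $O((\cp_j - \c)\zeta_n/\delta_j)$, leaving essentially $\delta_j^{-1/2}\bigl|\sum_{t=\c+1}^{\cp_j} Z_t\bigr|$; this local sum \emph{is} controlled by $\wt{\mc Z}_n$ with $a_n = c_3(\wt\zeta_n)^2$ and yields $|A_1/D_1| \le C/\sqrt{c_3}$. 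The sum $A_2$ stays $O(\zeta_n)$, but is compared against $D_2 \asymp |f'_j|\sqrt{\delta_j} \gg \zeta_n$ (Assumption~\ref{assum:cp:one}), so $A_2/D_2 = o(1)$. This separation --- local noise against the small signal difference, global noise against the large signal sum --- is precisely what delivers the $(\wt\zeta_n)^2$ rate; your additive route mixes the two scales and loses it.
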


\begin{remark}
\label{rem:assum:error:loc}
When the number of change points $q$ is bounded, 
Assumption~\ref{assum:error:loc} holds with $\wt{\zeta}_n$ diverging at an arbitrarily slow rate,
provided that 
\begin{align}
\label{eq:ck:cond}
\E\l\vert \sum_{t = l + 1}^r Z_t \r\vert^\nu \le C(r - l)^{\nu/2} 
\quad \text{for any} \quad -\infty < l < r < \infty
\end{align}
for some constant $C > 0$ and $\nu > 2$, see Proposition~2.1~(c.ii) of \cite{cho2019}.
The condition~\eqref{eq:ck:cond} is satisfied
by many time series models, see Appendix~B.2 in \cite{kirch2006}
and the references therein.
On the other hand, Theorem~1 of \cite{shao2010} indicates that
the lower bound $\sqrt{\log(n)} = O(\zeta_n)$ cannot be improved.
Therefore, Proposition~\ref{prop:refine} shows that the extra step indeed 
improves upon the localisation rate attained by the WBS2
reported in Theorem~\ref{thm:cp}~(i).
In fact, for time series models satisfying~\eqref{eq:ck:cond}, 
the refinement leads to $(f_j^\prime)^2 \vert \check\cp_j - \cp_j \vert = O_p(1)$,
thus matching the minimax optimal rate of multiple change point localisation
(see Proposition~6 of \cite{fromont2020}).
\end{remark}

\section{Implementation and the choice of tuning parameters}
\label{sec:tuning}

In line with the condition~\eqref{cond:thm:cp} and Assumption~\ref{assum:est},
we set $Q_n = \lfloor \log^{1.9}(n) \rfloor$, which imposes an upper bound
on the number of change points.
In simulation studies where test signals with $n \ge 2000$ are considered, we select $M = 5$, i.e.\ we generate a sequence of $M = 5$ nested change point models
(in addition to the null model) to be considered by the model selection methodology.
In real data analysis in Section~\ref{sec:real} with $n \approx 7000$, we select $M = 10$.
Generally, with greater $M$, there is more chance for the second stage gSa to `make a mistake', since there are more candidate models in consideration.
On the other hand, if $M$ is chosen too small, we may not have a candidate model that fulfils Assumption~\ref{assum:est} as discussed in Section~\ref{sec:gappy} when motivating the gappy model sequence generation.
In view of this, we recommend to select $M$ based on the length of the data.
By default, the number of intervals drawn by the deterministic sampling 
in Algorithm~\ref{alg:cp} is set at $R_n = 100$,
and the maximum AR order is set at $p_{\max} = 10$
unless stated otherwise when input time series is short.
To ensure that there are enough observations over each interval defined by 
two adjacent candidate change point estimators for numerical stability,
we set the minimum spacing to be $\max(20, p_{\max} + \lceil \log(n) \rceil)$
and feed this into Algorithm~\ref{alg:cp} in the solution path generation.
Finally, the penalty of $\sc$ is given by $\xi_n = \log^{1.01}(n)$
which is in accordance with Assumption~\ref{assum:pen}
when the innovations $\{\vep_t\}$ are distributed as (sub-)Gaussian random variables
such that $\omega_n \asymp \sqrt{\log(n)}$ fulfils
Assumption~\ref{assum:ar}~\ref{assum:ar:four}.

\section{Complete simulation studies}
\label{sec:add:sim}

In this section, we present the complete simulation results
summarised in Section~\ref{sec:sim} of the main text.

\subsection{Set-up}
\label{sec:add:sim:setup}

We consider a variety of data generating processes for $\{X_t\}$;
in the following, we assume $\vep_t \sim_{\iid} \mc N(0, \sigma_\vep^2)$
with $\sigma_\vep = 1$ unless stated otherwise.
In addition to~\ref{m-6}--\ref{m-10}, we simulate datasets under the following scenarios.
We also consider the case where $f_t = 0$ in each setting, 
to evaluate the size control performance
of the methods considered in the comparative simulation study
(their descriptions are given below the list of data generating processes).

\begin{enumerate}[label = (M\arabic*), itemsep = -2pt]
\setcounter{enumi}{3}
\item \label{m-1} $f_t$ undergoes $q = 5$ change points at
$(\cp_1, \cp_2, \cp_3, \cp_4, \cp_5) = (100, 300, 500, 550, 750)$
with $n = 1000$ and $(f_0, f_1^\prime, f_2^\prime, f_3^\prime, f_4^\prime, f_5^\prime)
= (0, 1, -1, 2, -2, -1)$, and $Z_t = \vep_t$.

\item \label{m-2} $f_t$ undergoes $q = 2$ change points at
$(\cp_1, \cp_2) = (75, 125)$
with $n = 200$ and $(f_0, f_1^\prime, f_2^\prime) = (0, 2.5, -2.5)$, 
and $\{Z_t\}_{t \in \Z}$ follows an ARMA($1, 1$) model:
$Z_t = a_1Z_{t - 1} + \vep_t + b_1 \vep_t$ with $a_1 = 0.5$, $b_1 = 0.3$
and $\sigma_\vep = 1/2.14285$.

\item \label{m-3} $f_t$ undergoes $q = 2$ change points at
$(\cp_1, \cp_2) = (50, 100)$
with $n = 150$ and $(f_0, f_1^\prime, f_2^\prime) = (0, 2.5, -2.5)$, 
and $\{Z_t\}_{t \in \Z}$ follows an AR($1$) model:
$Z_t = a_1Z_{t - 1} + \vep_t$ with $a_1 = 0.5$ and $\sigma_\vep = \sqrt{1 - a_1^2}$.

\item \label{m-4} $f_t$ undergoes $q = 2$ change points at
$(\cp_1, \cp_2) = (100, 200)$ with $n = 300$ and $(f_0, f_1^\prime, f_2^\prime) = (0, 1, -1)$,
and $\{Z_t\}_{t \in \Z}$ follows an ARMA($1$, $1$) model:
$Z_t = a_1 Z_{t - 1} + \vep_t + b_1\vep_{t - 1}$
with the ARMA parameters are generated as
$a_1, b_1 \sim_{\iid} \mc U(-0.9, 0.9)$ for each realisation,
and $\sigma_\vep = \sqrt{(1 - a_1^2)/(1 + a_1b_1 + b_1^2)}$.

\item \label{m-5} $f_t$ undergoes $q = 5$ change points at
$(\cp_1, \cp_2, \cp_3, \cp_4, \cp_5) = (100, 300, 500, 550, 750)$
with $n = 1000$ and $(f_0, f_1^\prime, f_2^\prime, f_3^\prime, f_4^\prime, f_5^\prime)
= (0, 1, -1, 2, -2, -1)$, and $\{Z_t\}_{t \in \Z}$ follows an MA($1$) model 
$Z_t = \vep_t + b_1 \vep_{t - 1}$ with $b_1 = 0.3$.


\item \label{m-7} $f_t$ undergoes $q = 5$ change points as in~\ref{m-1}
with $n = 1000$ and $(f_0, f_1^\prime, f_2^\prime, f_3^\prime, f_4^\prime, f_5^\prime)
= (0, 3, -3, 4, -4, -3)$, and
$\{Z_t\}_{t \in \Z}$ follows an MA($4$) model:
$Z_t = \vep_t + 0.9\vep_{t - 1} + 0.8\vep_{t - 2} + 0.7\vep_{t - 3} + 0.6\vep_{t - 4}$.


\item \label{m-9} $f_t$ undergoes $q = 15$ change points 
at $\cp_j = \lceil n j /16 \rceil$, $j = 1, \ldots, 15$ with $n = 2000$,
where the level parameters $f_{\cp_j + 1}$ are generated uniformly as
$(-1)^j \cdot f_{\cp_j + 1} \sim_{\iid} \mc U(1, 2), \, j  = 0, \ldots, 15$,
for each realisation.
$\{Z_t\}_{t \in \Z}$ follows an AR($1$) model as in~\ref{m-3} with $a_1 = 0.5$.

%
\item \label{m-11} $f_t$ undergoes $q = 10$ change points at $\cp_j = 150 j$, $j = 1, \ldots, 10$
with $n = 1650$ and
$(f_0, f^\prime_1, f^\prime_2, f^\prime_3, f^\prime_4, f^\prime_5, f^\prime_6, f^\prime_7, 
f^\prime_8, f^\prime_9, f^\prime_{10}) = (0, 7, -7, 6, -6, 5, -5, 4, -4, 3, -3)$, and
$\{Z_t\}_{t \in \Z}$ follows an ARMA($2$, $6$) model as in~\ref{m-8}.

\item \label{m-12} $f_t$ is as in~\ref{m-1} and 
$\{Z_t\}_{t \in \Z}$ follows a time-varying AR($1$) model:
$Z_t = a_1(t) Z_{t - 1} + \sigma(t)\vep_t$ with $a_1(t) = 0.5 - 0.2\cos(2\pi t/n)$
and $\sigma(t) = \sqrt{1 - a_1(t)^2}$.

\item \label{m-13} $f_t$ is as in~\ref{m-1} and 
$\{Z_t\}_{t \in \Z}$ follows a time-varying AR($1$) model:
$Z_t = a_1(t) Z_{t - 1} + \sigma(t) \vep_t$ where $a_1(t)$ is piecewise constant
with change points at $\cp_j, \, j = 1, \ldots, q$ such that
$a_1(t) = 
0.3 \mathbb I_{t \le \cp_1} + 0.4 \mathbb I_{\cp_1 < t \le \cp_2} + 0.6 \mathbb I_{\cp_2 < t \le \cp_3}
+ 0.7 \mathbb I_{\cp_3 < t \le \cp_4} + 0.5 \mathbb I_{\cp_4 < t \le \cp_5}
+ 0.3 \mathbb I_{t > \cp_5}$
and $\sigma(t) = \sqrt{1 - a_1(t)^2}$.
\end{enumerate}

Apart from Model~\ref{m-1}, all others model have serial correlations in $\{Z_t\}_{t = 1}^n$.
Models~\ref{m-2} (motivated by an example in \cite{wu2019}),
\ref{m-3} and~\ref{m-4} consider relatively short time series with $n \in [150, 300]$.
Models~\ref{m-8}, \ref{m-5} and~\ref{m-7} are taken from \cite{dette2018}.
In \ref{m-6}, the LRV is close to zero
and thus its accurate estimation is difficult.
Models~\ref{m-10} and~\ref{m-9} have a teeth-like signal containing frequent change points
and the underlying $\{Z_t\}_{t \in \Z}$ has strong autocorrelations in~\ref{m-10},
and~\ref{m-11} considers frequent, heterogeneous changes in the mean. 
In Models~\ref{m-12} and~\ref{m-13}, the noise $\{Z_t\}_{t = 1}^n$ 
has time-varying serial dependence structure.

We generate $1000$ realisations under each model. 
For each scenario, we additionally consider the case in which $f_t \equiv 0$ (thus $q = 0$)
in order to evaluate the proposed methodology on its size control.
On each realisation, we apply the proposed WCM.gSa
with the tuning parameters are selected as described in Section~\ref{sec:tuning}.
For comparison, we consider a procedure that omits the gappy model sequence generation step
from WCM.gSa:
referred to as `no gap', 
it applies the SC-based model selection procedure
directly to the model sequence consisting of consecutive entries from the WBS2-generated solution path.

We include DepSMUCE \citep{dette2018},
DeCAFS \citep{romano2020}, MACE \citep{wu2019} 
and SNCP \citep{zhao2021segmenting} in the simulation studies.
DepSMUCE extends the SMUCE procedure \citep{frick2014} proposed for independent data,
by estimating the LRV using a difference-type estimator.
MACE is a multiscale moving sum-based procedure 
with self-normalisation-based scaling that accounts for serial correlations.
SNCP is a time series segmentation methodology that
combines self-normalisation and a nested local window-based algorithm,
and is applicable to detect multiple change points in a broad class of parameters.
Although not its primary objective, DeCAFS can be adopted 
for the problem of detecting multiple change points in the mean 
of an otherwise stationary AR($1$) process,
and we adapt the main routine of its R implementation \citep{decafs} 
to change point analysis under~\eqref{eq:model}
as suggested by the authors.
For DepSMUCE and MACE, we consider $\alpha \in \{0.05, 0.2\}$
and for SNCP, $\alpha \in \{0.01, 0.05, 0.1\}$ as per the codes provided by the authors.
MACE requires the selection of the minimum and the maximum bandwidths 
in the rescaled time $[0, 1]$ and moreover, 
the latter, say $s_{\max}$, controls the maximum detectable number of change points
to be $(2s_{\max})^{-1}$; we set $s_{\max} = \min(1/(3q), n^{-1/6})$ 
for fair comparison, which varies from one model to another.
Other tuning parameters not mentioned here are chosen as recommended by the authors.

\subsection{Results}
\label{sec:sim:res}

Table~\ref{table:sim:h1} summarises the performance of 
different change point detection methodologies included in the comparative simulation study
under the null model $H_0: \, q = 0$ and the alternative $H_1: \, q > 1$.
More specifically, we report the proportion of falsely detecting one or more change points under $H_0$
(size), as well as the following statistics under $H_1$:
the distribution of the estimated number of change points,
the relative mean squared error (MSE):
\begin{align*}
\sum_{t = 1}^n (\wh f_t - f_t)^2/\sum_{t = 1}^n (\wh f^*_t - f_t)^2
\end{align*}
where $\wh f_t$ is the piecewise constant signal constructed with 
the set of estimated change point locations $\wh\Cp$,
and $\wh f^*_t$ is an oracle estimator constructed with the true $\cp_j$,
and the Hausdorff distance ($d_H$) between $\wh\Cp$ and $\Cp$:
\begin{align*}
d_H(\wh\Cp, \Cp) = \max\l( \max_{\cp \in \Cp} \min_{\wh\cp \in \wh\Cp} \vert \cp - \wh\cp \vert, 
\max_{\wh\cp \in \wh\Cp} \min_{\cp \in \Cp} \vert \wh\cp - \cp \vert \r),
\end{align*}
averaged over $1000$ realisations.

Overall, across the various scenarios, WCM.gSa performs well under both the null
and the alternative scenarios.
In particular, it keeps the size at bay under $H_0$ regardless of the underlying serial correlation structure;
when the time series is sufficiently long ($n \ge 300$), 
the proportion of the events where
WCM.gSa spuriously detects any change point under $H_0$ is strictly below $0.05$
(often below $0.01$).
Even when the input time series is short as in~\ref{m-3} with $n = 150$,
the proportion of such events is smaller than $0.1$.
Controlling for the size under $H_0$,
especially in the presence of serial correlations, is a difficult task
and as shown below, other methods considered in the comparative study
fail to do so by a large margin in some scenarios. 

Under $H_1$, WCM.gSa performs well in most scenarios according to a variety of criteria,
such as model selection accuracy measured by $\vert \wh q - q \vert$
or the localisation accuracy measured by $d_H$.
The results under~\ref{m-12}--\ref{m-13}
show that WCM.gSa is able to handle mild nonstationarities in $\{Z_t\}_{t = 1}^n$.
Without the gappy model sequence generation step, 
the procedure suffers from having to perform a large number of model comparison steps,
and the `no gap' procedure tends to over-estimate the number of change points when $q$ is large, or in the presence of mild nonstationarities in the noise.
From this, we conclude that the gappy model sequence generation step plays an important role in final model selection
by removing those candidate models that are not likely to be the one correctly detecting all change points from consideration.


DepSMUCE performs well for short series (see~\ref{m-3})
or in the presence of weak serial correlations as in~\ref{m-5},
but generally suffers from a calibration issue.
That is, in order not to detect spurious change points under $H_0$,
it requires the tuning parameter to be set conservatively at $\alpha = 0.05$;
however, for improved detection power, $\alpha = 0.2$ is a better choice. 
In addition, the estimator of the LRV proposed therein tends to under-estimate the LRV 
when it is close to zero as in~\ref{m-6},
or when there are strong autocorrelations as in~\ref{m-10},
thus incurring a large number of falsely detected change points under $H_0$.

Similar sensitivity to the choice of the level $\alpha$ is observable in the case of SNCP,
and it tends to return spurious change point estimators 
when the time series is short as in~\ref{m-2}--\ref{m-3},
or when autocorrelations are strong as in~\ref{m-10},
and tends to under-estimate the number of change points generally
with the exception of~\ref{m-6}.

DeCAFS operates under the assumption that $\{Z_t\}_{t = 1}^n$ is an AR($1$) process.
Therefore, it is applied under model mis-specification in some scenarios,
but still performs reasonably well in not returning false positives under $H_0$.
The exception is~\ref{m-10} where, in the presence of strong autocorrelations, 
it returns spurious estimators over $50\%$ of realisations
even though the model is correctly specified in this scenario.
Its detection power suffers under model mis-specification in some scenarios
such as~\ref{m-8} and~\ref{m-7} when compared to WCM.gSa,
but DeCAFS tends to attain good MSE. 
MACE suffers from both size inflation and lack of power,
possibly due to its sensitivity to choice of some tuning parameters such as the bandwidths.

{\small
\setlength{\tabcolsep}{3pt}
\setlength{\LTcapwidth}{\textwidth}
\begin{longtable}{cc c ccccccc  cc}
\caption{We report the proportion of rejecting $H_0$ (by returning $\wh q \ge 1$) under $H_0: \, q = 0$ (size)
and the summary of estimated change points under $H_1: \, q > 1$ according to
the distribution of $\wh q - q$, relative MSE and the Hausdorff distance ($d_H$) over $1000$ realisations.
Methods that control the size under $H_0$ (according to the specified $\alpha$ for DepSMUCE, MACE and SNCP,
and at $0.05$ for WCM.gSa and DeCAFS), 
and that achieve the best performance under $H_1$ according to different criteria,
are highlighted in {\bf bold} for each scenario.
}
\label{table:sim:h1}
\endfirsthead
\endhead
\toprule											
&	&	&	\multicolumn{7}{c}{$\wh q - q$} 									\\	
Model &	Method &	Size &	$\ge -3$ &	$-2$ &	$-1$ &	$0$ &	$1$ &	$2$ &	$3 \le$ &	RMSE &	$d_H$	\\	\cmidrule(lr){1-2} \cmidrule(lr){3-3} \cmidrule(lr){4-10} \cmidrule(lr){11-12}

\ref{m-6} &	
WCM.gSa &	{\bf 0.000} &	0.000 &	0.000 &	0.000 &	{\bf 1.000} &	0.000 &	0.000 &	0.000 & 68.720 &	1.988	\\	
\cmidrule(lr){3-3} \cmidrule(lr){4-10} \cmidrule(lr){11-12}
&	no gap &	{\bf 0.000} &	0.000 &	0.000 &	0.000 &	{\bf 1.000} &	0.000 &	0.000 &	0.000 &	68.720 &	1.988	\\	
\cmidrule(lr){3-3} \cmidrule(lr){4-10} \cmidrule(lr){11-12}
&	DepSMUCE($0.05$) &	1.000 &	0.000 &	0.000 &	0.000 &	0.485 &	0.167 &	0.163 &	0.185 &	219.196 &	48.359	\\	
&	DepSMUCE($0.2$) &	1.000 &	0.000 &	0.000 &	0.000 &	0.170 &	0.093 &	0.177 &	0.560 &	437.883 &	90.818	\\	
&	DeCAFS &	0.064 &	0.000 &	0.006 &	0.029 &	0.742 &	0.148 &	0.053 &	0.022 &	304.694 &	26.274	\\	
&	MACE($0.05$) &	0.222 &	0.000 &	0.000 &	0.922 &	0.078 &	0.000 &	0.000 &	0.000 &	1729.645 &	56.939	\\	
&	MACE($0.2$) &	0.515 &	0.000 &	0.000 &	0.805 &	0.187 &	0.008 &	0.000 &	0.000 &	1724.294 &	65.194	\\	
&	SNCP($0.01$) &	{\bf 0.000} &	0.000 &	0.000 &	0.000 &	{\bf 1.000} &	0.000 &	0.000 &	0.000 &	{\bf 35.512} &	{\bf 1.06}	\\	
&	SNCP($0.05$) &	{\bf 0.000} &	0.000 &	0.000 &	0.000 &	{\bf 1.000} &	0.000 &	0.000 &	0.000 &	{\bf 35.512} &	{\bf 1.06}	\\	
&	SNCP($0.1$) &	{\bf 0.000} &	0.000 &	0.000 &	0.000 &	{\bf 1.000} &	0.000 &	0.000 &	0.000 &	{\bf 35.512} &	{\bf 1.06}		\\	\cmidrule(lr){1-2} \cmidrule(lr){3-3} \cmidrule(lr){4-10} \cmidrule(lr){11-12}

\ref{m-8} &	
WCM.gSa &	{\bf 0.001} &	0.000 &	0.000 &	0.019 &	{\bf 0.873} &	0.092 &	0.014 &	0.002 &	4.907 &	{\bf 34.627}	\\	
\cmidrule(lr){3-3} \cmidrule(lr){4-10} \cmidrule(lr){11-12}
&	no gap &	{\bf 0.020} &	0.002 &	0.002 &	0.012 &	0.178 &	0.024 &	0.037 &	0.745 &	11.030 &	148.765	\\	
\cmidrule(lr){3-3} \cmidrule(lr){4-10} \cmidrule(lr){11-12}
&	DepSMUCE($0.05$) &	{\bf 0.031} &	0.052 &	0.385 &	0.429 &	0.134 &	0.000 &	0.000 &	0.000 &	18.567 &	145.406	\\	
&	DepSMUCE($0.2$) &{\bf 	0.142} &	0.006 &	0.093 &	0.410 &	0.490 &	0.001 &	0.000 &	0.000 &	11.066 &	83.157	\\	
&	DeCAFS &	0.099 &	0.006 &	0.035 &	0.137 &	0.773 &	0.049 &	0.000 &	0.000 &	{\bf 3.891} &	61.517	\\	
&	MACE($0.05$) &	0.682 &	0.767 &	0.157 &	0.064 &	0.012 &	0.000 &	0.000 &	0.000 &	40.977 &	316.419	\\	
&	MACE($0.2$) &	0.874 &	0.477 &	0.273 &	0.156 &	0.083 &	0.009 &	0.002 &	0.000 &	33.876 &	286.084	\\	
&	SNCP($0.01$) &	0.022 &	0.423 &	0.323 &	0.193 &	0.060 &	0.000 &	0.001 &	0.000 &	24.928 &	249.412	\\	
&	SNCP($0.05$) &	0.084 &	0.117 &	0.293 &	0.372 &	0.215 &	0.002 &	0.001 &	0.000 &	15.428 &	166.724	\\	
&	SNCP($0.1$) &	0.152 &	0.044 &	0.192 &	0.404 &	0.349 &	0.010 &	0.001 &	0.000 &	11.839 &	126.588	\\	\cmidrule(lr){1-2} \cmidrule(lr){3-3} \cmidrule(lr){4-10} \cmidrule(lr){11-12}

\ref{m-10} & 
WCM.gSa &	{\bf 0.000} &	0.087 &	0.177 &	0.233 &	0.319 &	0.076 &	0.041 &	0.067 &	3.184 &	86.139	\\	
\cmidrule(lr){3-3} \cmidrule(lr){4-10} \cmidrule(lr){11-12}
&	no gap &	0.058 &	0.000 &	0.000 &	0.000 &	0.000 &	0.000 &	0.000 &	1.000 &	4.498 &	92.759	\\	
\cmidrule(lr){3-3} \cmidrule(lr){4-10} \cmidrule(lr){11-12}
&	DepSMUCE($0.05$) &	0.936 &	0.767 &	0.153 &	0.070 &	0.010 &	0.000 &	0.000 &	0.000 &	8.655 &	139.298	\\	
&	DepSMUCE($0.2$) &	0.989 &	0.276 &	0.320 &	0.303 &	0.101 &	0.000 &	0.000 &	0.000 &	5.537 &	108.339	\\	
&	DeCAFS &	0.565 &	0.000 &	0.004 &	0.019 &	{\bf 0.755} &	0.203 &	0.017 &	0.002 &	{\bf 1.065} &	{\bf 19.751}	\\	
&	MACE($0.05$) &	1.000 &	0.053 &	0.059 &	0.084 &	0.129 &	0.169 &	0.170 &	0.336 &	7.092 &	126.325	\\	
&	MACE($0.2$) &	1.000 &	0.008 &	0.007 &	0.024 &	0.041 &	0.092 &	0.111 &	0.717 &	5.804 &	107.392	\\	
&	SNCP($0.01$) &	0.105 &	0.995 &	0.004 &	0.000 &	0.001 &	0.000 &	0.000 &	0.000 &	14.135 &	430.912	\\	
&	SNCP($0.05$) &	0.258 &	0.956 &	0.034 &	0.007 &	0.003 &	0.000 &	0.000 &	0.000 &	11.698 &	290.266	\\	
&	SNCP($0.1$) &	0.397 &	0.890 &	0.074 &	0.027 &	0.009 &	0.000 &	0.000 &	0.000 &	10.342 &	245.351	\\	\cmidrule(lr){1-2} \cmidrule(lr){3-3} \cmidrule(lr){4-10} \cmidrule(lr){11-12}

\ref{m-1} &	
WCM.gSa &	{\bf 0.000} &	0.000 &	0.000 &	0.002 &	{\bf 0.994} &	0.003 &	0.001 &	0.000 &	4.881 &	7.892	\\	
\cmidrule(lr){3-3} \cmidrule(lr){4-10} \cmidrule(lr){11-12}
&	no gap &	{\bf 0.009} &	0.000 &	0.000 &	0.000 &	0.873 &	0.026 &	0.044 &	0.057 &	5.587 &	21.121	\\	
\cmidrule(lr){3-3} \cmidrule(lr){4-10} \cmidrule(lr){11-12}
&	DepSMUCE($0.05$) &	{\bf 0.006} &	0.000 &	0.000 &	0.104 &	0.896 &	0.000 &	0.000 &	0.000 &	6.671 &	22.699	\\	
&	DepSMUCE($0.2$) &	{\bf 0.062} &	0.000 &	0.000 &	0.016 &	0.984 &	0.000 &	0.000 &	0.000 &	4.901 &	9.21	\\	
&	DeCAFS &	{\bf 0.008} &	0.000 &	0.000 &	0.000 &	0.983 &	0.015 &	0.002 &	0.000 &	{\bf 4.837} &	{\bf 7.823}	\\	
&	MACE($0.05$) &	0.558 &	0.681 &	0.242 &	0.062 &	0.013 &	0.002 &	0.000 &	0.000 &	97.279 &	311.77	\\	
&	MACE($0.2$) &	0.816 &	0.370 &	0.328 &	0.212 &	0.073 &	0.015 &	0.002 &	0.000 &	82.773 &	253.051	\\	
&	SNCP($0.01$) &	{\bf 0.003} &	0.000 &	0.023 &	0.251 &	0.726 &	0.000 &	0.000 &	0.000 &	11.718 &	57.614	\\	
&	SNCP($0.05$) &	{\bf 0.028} &	0.000 &	0.002 &	0.093 &	0.898 &	0.007 &	0.000 &	0.000 &	7.916 &	24.667	\\	
&	SNCP($0.1$) &	{\bf 0.065} &	0.000 &	0.000 &	0.053 &	0.937 &	0.010 &	0.000 &	0.000 &	6.859 &	17.656	\\	\cmidrule(lr){1-2} \cmidrule(lr){3-3} \cmidrule(lr){4-10} \cmidrule(lr){11-12}

\ref{m-2} &	
WCM.gSa &	0.080 &	0.000 &	0.000 &	0.000 &	0.884 &	0.086 &	0.015 &	0.015 &	2.753 &	4.583	\\	
\cmidrule(lr){3-3} \cmidrule(lr){4-10} \cmidrule(lr){11-12}
&	no gap &	0.105 &	0.000 &	0.000 &	0.000 &	0.839 &	0.102 &	0.041 &	0.018 &	2.936 &	6.554	\\	
\cmidrule(lr){3-3} \cmidrule(lr){4-10} \cmidrule(lr){11-12}
&	DepSMUCE($0.05$) &	{\bf 0.028} &	0.000 &	0.000 &	0.000 &	{\bf 1.000} &	0.000 &	0.000 &	0.000 &	2.051 &	{\bf 0.166}	\\	
&	DepSMUCE($0.2$) &	{\bf 0.098} &	0.000 &	0.000 &	0.000 &	{\bf 1.000} &	0.000 &	0.000 &	0.000 &	2.051 &	{\bf 0.166}	\\	
&	DeCAFS &	0.107 &	0.000 &	0.000 &	0.000 &	0.873 &	0.088 &	0.028 &	0.011 &	{\bf 1.970} &	6.203	\\	
&	MACE($0.05$) &	0.482 &	0.000 &	0.006 &	0.115 &	0.761 &	0.114 &	0.004 &	0.000 &	24.515 &	11.421	\\	
&	MACE($0.2$) &	0.747 &	0.000 &	0.000 &	0.040 &	0.743 &	0.201 &	0.016 &	0.000 &	12.031 &	11.458	\\	
&	SNCP($0.01$) &	0.086 &	0.000 &	0.000 &	0.002 &	0.945 &	0.052 &	0.001 &	0.000 &	9.839 &	2.764	\\	
&	SNCP($0.05$) &	0.220 &	0.000 &	0.000 &	0.000 &	0.851 &	0.138 &	0.011 &	0.000 &	9.367 &	5.774	\\	
&	SNCP($0.1$) &	0.328 &	0.000 &	0.000 &	0.000 &	0.778 &	0.193 &	0.027 &	0.002 &	9.652 &	8.315	\\	\cmidrule(lr){1-2} \cmidrule(lr){3-3} \cmidrule(lr){4-10} \cmidrule(lr){11-12}

\ref{m-3} &	
WCM.gSa &	0.067 &	0.000 &	0.000 &	0.000 &	0.865 &	0.119 &	0.016 &	0.000 &	5.993 &	4.782	\\	
\cmidrule(lr){3-3} \cmidrule(lr){4-10} \cmidrule(lr){11-12}
&	no gap &	0.074 &	0.000 &	0.000 &	0.000 &	0.865 &	0.119 &	0.016 &	0.000 &	5.993 &	4.782	\\	
\cmidrule(lr){3-3} \cmidrule(lr){4-10} \cmidrule(lr){11-12}
&	DepSMUCE($0.05$) &	{\bf 0.025} &	0.000 &	0.006 &	0.202 &	0.792 &	0.000 &	0.000 &	0.000 &	14.038 &	9.14	\\	
&	DepSMUCE($0.2$) &	{\bf 0.104} &	0.000 &	0.000 &	0.041 &	{\bf 0.959} &	0.000 &	0.000 &	0.000 &	{\bf 5.876} &	{\bf 3.057}	\\	
&	DeCAFS &	0.193 &	0.000 &	0.005 &	0.005 &	0.751 &	0.099 &	0.074 &	0.066 &	7.867 &	9.537	\\	
&	MACE($0.05$) &	0.621 &	0.000 &	0.143 &	0.433 &	0.391 &	0.033 &	0.000 &	0.000 &	41.943 &	25.549	\\	
&	MACE($0.2$) &	0.812 &	0.000 &	0.052 &	0.288 &	0.584 &	0.075 &	0.001 &	0.000 &	29.655 &	20.355	\\	
&	SNCP($0.01$) &	0.161 &	0.000 &	0.018 &	0.167 &	0.744 &	0.069 &	0.002 &	0.000 &	18.362 &	12.366	\\	
&	SNCP($0.05$) &	0.367 &	0.000 &	0.005 &	0.054 &	0.740 &	0.177 &	0.022 &	0.002 &	12.173 &	9.618	\\	
&	SNCP($0.1$) &	0.503 &	0.000 &	0.001 &	0.017 &	0.669 &	0.253 &	0.053 &	0.007 &	10.201 &	10.529	\\	\cmidrule(lr){1-2} \cmidrule(lr){3-3} \cmidrule(lr){4-10} \cmidrule(lr){11-12}

\ref{m-4} &	
WCM.gSa &{\bf 	0.027} &	0.000 &	0.102 &	0.001 &	{\bf 0.852} &	0.025 &	0.009 &	0.011 &	{\bf 13.490} &	{\bf 7.821}	\\	
\cmidrule(lr){3-3} \cmidrule(lr){4-10} \cmidrule(lr){11-12}
&	no gap &	{\bf 0.044} &	0.000 &	0.089 &	0.011 &	0.783 &	0.038 &	0.039 &	0.040 &	14.067 &	12.69	\\	
\cmidrule(lr){3-3} \cmidrule(lr){4-10} \cmidrule(lr){11-12}
&	DepSMUCE($0.05$) &	0.266 &	0.000 &	0.091 &	0.196 &	0.565 &	0.030 &	0.031 &	0.087 &	202.355 &	29.781	\\	
&	DepSMUCE($0.2$) &	0.361 &	0.000 &	0.043 &	0.150 &	0.591 &	0.047 &	0.036 &	0.133 &	294.382 &	30.141	\\	
&	DeCAFS &	0.188 &	0.000 &	0.114 &	0.048 &	0.613 &	0.057 &	0.031 &	0.137 &	403.467 &	26.973	\\	
&	MACE($0.05$) &	0.303 &	0.000 &	0.266 &	0.283 &	0.423 &	0.026 &	0.002 &	0.000 &	60.194 &	34.062	\\	
&	MACE($0.2$) &	0.491 &	0.000 &	0.132 &	0.272 &	0.532 &	0.058 &	0.006 &	0.000 &	41.137 &	36.826	\\	
&	SNCP($0.01$) &	0.061 &	0.000 &	0.147 &	0.191 &	0.654 &	0.007 &	0.001 &	0.000 &	18.293 &	22.939	\\	
&	SNCP($0.05$) &	0.115 &	0.000 &	0.066 &	0.150 &	0.755 &	0.021 &	0.007 &	0.001 &	15.908 &	21.198	\\	
&	SNCP($0.1$) &	0.159 &	0.000 &	0.032 &	0.143 &	0.778 &	0.030 &	0.015 &	0.002 &	14.410 &	22.208	\\	\cmidrule(lr){1-2} \cmidrule(lr){3-3} \cmidrule(lr){4-10} \cmidrule(lr){11-12}

\ref{m-5} &	
WCM.gSa &	{\bf 0.000} &	0.000 &	0.000 &	0.012 &	{\bf 0.972} &	0.016 &	0.000 &	0.000 &	5.053 &	16.36	\\	
\cmidrule(lr){3-3} \cmidrule(lr){4-10} \cmidrule(lr){11-12}
&	no gap &	{\bf 0.007} &	0.000 &	0.000 &	0.004 &	0.850 &	0.036 &	0.046 &	0.064 &	5.707 &	29.525	\\	
\cmidrule(lr){3-3} \cmidrule(lr){4-10} \cmidrule(lr){11-12}
&	DepSMUCE($0.05$) &	{\bf 0.007} &	0.006 &	0.117 &	0.472 &	0.405 &	0.000 &	0.000 &	0.000 &	15.523 &	114.702	\\	
&	DepSMUCE($0.2$) &	{\bf 0.063} &	0.000 &	0.009 &	0.201 &	0.790 &	0.000 &	0.000 &	0.000 &	7.204 &	44.676	\\	
&	DeCAFS &	{\bf 0.016} &	0.000 &	0.003 &	0.004 &	0.969 &	0.022 &	0.001 &	0.001 &	{\bf 4.957} &	{\bf 15.207}\\	
&	MACE($0.05$) &	0.565 &	0.816 &	0.141 &	0.036 &	0.006 &	0.001 &	0.000 &	0.000 &	64.459 &	338.846	\\	
&	MACE($0.2$) &	0.808 &	0.523 &	0.269 &	0.162 &	0.035 &	0.011 &	0.000 &	0.000 &	54.656 &	286.868	\\	
&	SNCP($0.01$) &	{\bf 0.008} &	0.064 &	0.216 &	0.447 &	0.272 &	0.001 &	0.000 &	0.000 &	18.386 &	162.591	\\	
&	SNCP($0.05$) &	{\bf 0.034} &	0.005 &	0.080 &	0.355 &	0.554 &	0.006 &	0.000 &	0.000 &	11.438 &	94.291	\\	
&	SNCP($0.1$) &	{\bf 0.074} &	0.002 &	0.024 &	0.269 &	0.693 &	0.011 &	0.001 &	0.000 &	8.825 &	64.143	\\	\cmidrule(lr){1-2} \cmidrule(lr){3-3} \cmidrule(lr){4-10} \cmidrule(lr){11-12}


\ref{m-7} &	
WCM.gSa &{\bf 0.003} &	0.000 &	0.001 &	0.003 &	{\bf 0.926} &	0.059 &	0.008 &	0.003 &	4.776 &	{\bf 21.35}	\\	
\cmidrule(lr){3-3} \cmidrule(lr){4-10} \cmidrule(lr){11-12}
&	no gap &	{\bf 0.012} &	0.001 &	0.015 &	0.020 &	0.632 &	0.023 &	0.042 &	0.267 &	7.121 &	68.784	\\	
\cmidrule(lr){3-3} \cmidrule(lr){4-10} \cmidrule(lr){11-12}
&	DepSMUCE($0.05$) &	{\bf 0.020} &	0.051 &	0.233 &	0.546 &	0.170 &	0.000 &	0.000 &	0.000 &	16.374 &	87.334	\\	
&	DepSMUCE($0.2$) &	{\bf 0.127} &	0.003 &	0.052 &	0.406 &	0.537 &	0.002 &	0.000 &	0.000 &	9.544 &	37.717	\\	
&	DeCAFS &	0.097 &	0.001 &	0.061 &	0.019 &	0.863 &	0.055 &	0.001 &	0.000 &	{\bf 3.779} &	31.135	\\	
&	MACE($0.05$) &	0.670 &	0.779 &	0.167 &	0.041 &	0.012 &	0.001 &	0.000 &	0.000 &	49.668 &	334.816	\\	
&	MACE($0.2$) &	0.870 &	0.462 &	0.275 &	0.192 &	0.059 &	0.011 &	0.001 &	0.000 &	39.156 &	285.542	\\	
&	SNCP($0.01$) &	0.021 &	0.292 &	0.361 &	0.252 &	0.094 &	0.001 &	0.000 &	0.000 &	21.119 &	201.372	\\	
&	SNCP($0.05$) &	0.077 &	0.093 &	0.258 &	0.343 &	0.296 &	0.010 &	0.000 &	0.000 &	14.061 &	126.391	\\	
&	SNCP($0.1$) &	0.152 &	0.033 &	0.180 &	0.352 &	0.417 &	0.016 &	0.002 &	0.000 &	11.489 &	93.392	\\	\cmidrule(lr){1-2} \cmidrule(lr){3-3} \cmidrule(lr){4-10} \cmidrule(lr){11-12}


\ref{m-9} &	
WCM.gSa &	{\bf 0.000} &	0.000 &	0.000 &	0.008 &	{\bf 0.982} &	0.006 &	0.003 &	0.001 &	2.425 &	{\bf 5.485}	\\	
\cmidrule(lr){3-3} \cmidrule(lr){4-10} \cmidrule(lr){11-12}
&	no gap &	{\bf 0.006} &	0.000 &	0.000 &	0.000 &	0.511 &	0.055 &	0.070 &	0.364 &	3.480 &	34.066	\\	
\cmidrule(lr){3-3} \cmidrule(lr){4-10} \cmidrule(lr){11-12}
&	DepSMUCE($0.05$) &	{\bf 0.020} &	0.118 &	0.332 &	0.380 &	0.170 &	0.000 &	0.000 &	0.000 &	20.085 &	85.553	\\	
&	DepSMUCE($0.2$) &{\bf 	0.133} &	0.003 &	0.048 &	0.338 &	0.611 &	0.000 &	0.000 &	0.000 &	7.534 &	39.648	\\	
&	DeCAFS &{\bf 	0.023} &	0.000 &	0.000 &	0.000 &	0.974 &	0.023 &	0.003 &	0.000 &	{\bf 2.112} &	5.564	\\	
&	MACE($0.05$) &	0.902 &	0.917 &	0.049 &	0.026 &	0.007 &	0.000 &	0.001 &	0.000 &	61.743 &	232.45	\\	
&	MACE($0.2$) &	0.984 &	0.628 &	0.173 &	0.110 &	0.050 &	0.028 &	0.009 &	0.002 &	47.687 &	177.494	\\	
&	SNCP($0.01$) &	0.011 &	0.035 &	0.106 &	0.292 &	0.567 &	0.000 &	0.000 &	0.000 &	13.030 &	60.337	\\	
&	SNCP($0.05$) &	{\bf 0.043} &	0.002 &	0.022 &	0.165 &	0.811 &	0.000 &	0.000 &	0.000 &	9.461 &	29.324	\\	
&	SNCP($0.1$) &	0.104 &	0.000 &	0.006 &	0.096 &	0.898 &	0.000 &	0.000 &	0.000 &	8.556 &	18.968	\\	\cmidrule(lr){1-2} \cmidrule(lr){3-3} \cmidrule(lr){4-10} \cmidrule(lr){11-12}

%
\ref{m-11} &	
WCM.gSa &	{\bf 0.001} &	0.080 &	0.360 &	0.252 &	{\bf 0.287} &	0.013 &	0.006 &	0.002 &	5.435 &	{\bf 180.548}	\\
\cmidrule(lr){3-3} \cmidrule(lr){4-10} \cmidrule(lr){11-12}
&	no gap &	0.012 &	0.003 &	0.014 &	0.003 &	0.069 &	0.022 &	0.021 &	0.868 &	8.287 &	105.137	\\	
\cmidrule(lr){3-3} \cmidrule(lr){4-10} \cmidrule(lr){11-12}
&	DepSMUCE($0.05$) &	{\bf 0.022} &	0.912 &	0.081 &	0.007 &	0.000 &	0.000 &	0.000 &	0.000 &	15.463 &	351.082	\\	
&	DepSMUCE($0.2$) &{\bf 	0.126} &	0.562 &	0.345 &	0.088 &	0.005 &	0.000 &	0.000 &	0.000 &	10.991 &	258.122	\\	
&	DeCAFS &	0.077 &	0.221 &	0.474 &	0.063 &	0.234 &	0.008 &	0.000 &	0.000 &	{\bf 4.831} &	286.997	\\	
&	MACE($0.2$) &	0.839 &	0.994 &	0.005 &	0.000 &	0.001 &	0.000 &	0.000 &	0.000 &	32.807 &	565.07	\\	
&	MACE($0.05$) &	0.960 &	0.925 &	0.049 &	0.020 &	0.004 &	0.002 &	0.000 &	0.000 &	29.778 &	424.598	\\	
&	SNCP($0.01$) &	0.011 &	0.990 &	0.009 &	0.001 &	0.000 &	0.000 &	0.000 &	0.000 &	23.936 &	510.673	\\	
&	SNCP($0.05$) &	0.070 &	0.862 &	0.113 &	0.023 &	0.002 &	0.000 &	0.000 &	0.000 &	17.976 &	349.351	\\	
&	SNCP($0.1$) &	0.126 &	0.706 &	0.206 &	0.081 &	0.007 &	0.000 &	0.000 &	0.000 &	15.070 &	290.98 \\ \cmidrule(lr){1-2} \cmidrule(lr){3-3} \cmidrule(lr){4-10} \cmidrule(lr){11-12}

\ref{m-12} &	
WCM.gSa &	{\bf 0.002} &	0.000 &	0.002 &	0.061 &	{\bf 0.718} &	0.151 &	0.048 &	0.020 &	5.828 &	{\bf 50.476}	\\	
\cmidrule(lr){3-3} \cmidrule(lr){4-10} \cmidrule(lr){11-12}
&	no gap &	{\bf 0.031} &	0.002 &	0.010 &	0.016 &	0.501 &	0.058 &	0.082 &	0.331 &	7.648 &	73.266	\\	
\cmidrule(lr){3-3} \cmidrule(lr){4-10} \cmidrule(lr){11-12}
&	DepSMUCE($0.05$) &	0.074 &	0.155 &	0.450 &	0.350 &	0.045 &	0.000 &	0.000 &	0.000 &	16.612 &	232.209	\\	
&	DepSMUCE($0.2$) &	0.273 &	0.026 &	0.177 &	0.471 &	0.325 &	0.001 &	0.000 &	0.000 &	10.426 &	139.304	\\	
&	DeCAFS &	0.081 &	0.009 &	0.079 &	0.074 &	0.717 &	0.094 &	0.023 &	0.004 &	{\bf 5.727} &	82.021	\\	
&	MACE($0.05$) &	0.675 &	0.790 &	0.161 &	0.043 &	0.005 &	0.001 &	0.000 &	0.000 &	33.749 &	327.001	\\	
&	MACE($0.2$) &	0.873 &	0.537 &	0.249 &	0.151 &	0.050 &	0.012 &	0.001 &	0.000 &	28.311 &	304.191	\\	
&	SNCP($0.01$) &	0.020 &	0.645 &	0.224 &	0.103 &	0.028 &	0.000 &	0.000 &	0.000 &	24.165 &	303.019	\\	
&	SNCP($0.05$) &	0.081 &	0.265 &	0.324 &	0.286 &	0.122 &	0.003 &	0.000 &	0.000 &	16.420 &	218.013	\\	
&	SNCP($0.1$) &	0.152 &	0.131 &	0.283 &	0.363 &	0.217 &	0.006 &	0.000 &	0.000 &	13.713 &	166.677	\\	\cmidrule(lr){1-2} \cmidrule(lr){3-3} \cmidrule(lr){4-10} \cmidrule(lr){11-12}

\ref{m-13} &	
WCM.gSa &	{\bf 0.001} &	0.000 &	0.002 &	0.043 &	0.831 &	0.089 &	0.030 &	0.005 &	5.442 &	{\bf 38.565}	\\	
\cmidrule(lr){3-3} \cmidrule(lr){4-10} \cmidrule(lr){11-12}
&	no gap &	{\bf 0.023} &	0.000 &	0.008 &	0.007 &	0.613 &	0.056 &	0.086 &	0.230 &	6.880 &	57.405	\\	
\cmidrule(lr){3-3} \cmidrule(lr){4-10} \cmidrule(lr){11-12}
&	DepSMUCE($0.05$) &	0.053 &	0.093 &	0.381 &	0.423 &	0.103 &	0.000 &	0.000 &	0.000 &	16.547 &	202.408	\\	
&	DepSMUCE($0.2$) &	0.205 &	0.012 &	0.113 &	0.445 &	0.430 &	0.000 &	0.000 &	0.000 &	9.754 &	112.529	\\	
&	DeCAFS &	{\bf 0.041} &	0.003 &	0.043 &	0.049 &	{\bf 0.834} &	0.059 &	0.012 &	0.000 &	{\bf 5.069} &	50.936	\\	
&	MACE($0.05$) &	0.646 &	0.819 &	0.133 &	0.044 &	0.003 &	0.001 &	0.000 &	0.000 &	38.863 &	329.921	\\	
&	MACE($0.2$) &	0.855 &	0.543 &	0.255 &	0.141 &	0.051 &	0.008 &	0.002 &	0.000 &	32.993 &	301.344	\\	
&	SNCP($0.01$) &	0.015 &	0.470 &	0.304 &	0.175 &	0.051 &	0.000 &	0.000 &	0.000 &	22.871 &	280.454	\\	
&	SNCP($0.05$) &	0.064 &	0.161 &	0.282 &	0.375 &	0.179 &	0.003 &	0.000 &	0.000 &	15.759 &	184.029	\\	
&	SNCP($0.1$) &	0.134 &	0.077 &	0.209 &	0.397 &	0.311 &	0.005 &	0.001 &	0.000 &	12.778 &	137.346	\\	\bottomrule
\end{longtable}}

\subsection{Motivation for the use of $\sc_0$}
\label{sec:d3}

If any change point is ignored in fitting an AR model,
the information criterion $\sc$ tends to over-compensate for 
the under-specification of mean shifts,
which makes direct minimisation of $\sc$ unreliable as a model selection method.
To illustrate this and motivate the use of $\sc_0$ in gSa,
we present a simulation study with datasets generated 
under the models~\ref{m-7} and~\ref{m-11} in Section~\ref{sec:add:sim:setup}.
Here, our aim is to compare a change point model $\wh\Cp_1$ 
(correctly detecting all $q$ change points) and the null model $\wh\Cp_0 = \emptyset$ 
using two different approaches -- one adopted in gSa comparing
$\sc_0\l(\{X_t\}_{t = 1}^n, \wh{\bm\alpha}(\wh p)\r)$ and 
$\sc(\{X_t\}_{t = 1}^n, \wh\Cp_1, \wh p)$ with $\wh p = \wh p(\wh\Cp_1)$ (`Method~1'), 
and the other selecting the model minimising $\sc$ by
comparing $\sc(\{X_t\}_{t = 1}^n, \wh\Cp_0, \wh p(\wh\Cp_0))$ 
and $\sc(\{X_t\}_{t = 1}^n, \wh\Cp_1, \wh p)$ (`Method~2'). 
In both scenarios, the errors do not follow an AR model of a finite order 
so we select $\wh p(\wh\Cp_0)$ and $\wh p(\wh\Cp_1)$ as described in~\eqref{eq:p:est}. 

For the choice of $\wh\Cp_1$, we consider the {\it no bias} case 
$\wh\Cp_1 = \{\cp_j, \, 1 \le j \le q\}$
and the {\it biased} case $\wh\Cp_1 = \{\cp_j + s_j \cdot \lambda_j, \, 1 \le j \le q\}$,
where $s_j \sim_{\iid} \text{Uniform}\{-1, 1\}$ and $\lambda_j \sim_{\iid} \text{Poisson}(5)$;
the latter case reflects that the best localisation rate in change point problems
is $O_p(1)$.
The result is summarised in Table~\ref{table:d3} 
where we report the size (proportion of selecting $\wh\Cp_1$ over $\wh\Cp_0$ 
when there is no change point), 
as well as the power (proportion of correctly selecting $\wh\Cp_1$) 
out of $1000$ realisations.
From the results, we conclude that Method~1, which adopts 
$\sc_0$ as a proxy of the goodness-of-fit adjusted by model complexity 
under the no change point model, 
works well both in controlling the size and attaining good power. 
In comparison, Method~2 suffers from loss of power 
due to the bias in AR parameter estimators in the presence of mean shifts, 
and its performance worsens when the change point estimators 
do not exactly coincide with the true locations, 
which is often the case in change point problems when the magnitude of the jumps is small.

\begin{table}[htb]
\caption{Size and power of Methods~1 and~2 under the models~\ref{m-7} and~\ref{m-11}
when the change point model is specified without any bias in change point estimators (`no bias') and with bias.}
\label{table:d3}
\centering
{\small
\begin{tabular}{c cc cc  cc cc }
\toprule
& \multicolumn{4}{c}{\ref{m-7}} & \multicolumn{4}{c}{\ref{m-11}} \\
& \multicolumn{2}{c}{No bias} & \multicolumn{2}{c}{Bias} & \multicolumn{2}{c}{No bias} & \multicolumn{2}{c}{Bias} \\
\cmidrule(lr){2-5} \cmidrule(lr){6-9}
& Size & Power & Size & Power & Size & power & Size & Power \\
\cmidrule(lr){1-1} \cmidrule(lr){2-3} \cmidrule(lr){4-5} \cmidrule(lr){6-7} \cmidrule(lr){8-9}
Method 1 & 0 & 1  & 0 & 1 & 0 & 1 & 0 & 0 0.989 \\
Method 2 & 0 & 0.876 & 0 & 0.202 & 0 & 0.793 & 0 & 0.015 
\\
\bottomrule
\end{tabular}}
\end{table}

\subsection{Impact of the degree of serial correlations}
\label{sec:serial}

To investigate the performance of WCM.gSa in the presence of strong serial correlations, we perform additional simulations.

\subsubsection{Performance of WBS2}

We examine the performance of the first step method (WBS2), in locating the estimators detecting the $q$ change points as the first $q$ entries of the solution path $\mc P$ (see the descriptions around~\eqref{eq:cusum:sorted}).
To see this, we consider the model $X_t = f_t  + Z_t$, where $f_t$ is generated as in~\ref{m-10} (with $n = 2000$ and $q = 15$) and $Z_t = a_1 Z_{t - 1} + \sqrt{1 - a_1^2} \vep_t$ with $a_1 \in \{0.9, 0.95, 0.99\}$ and $\vep \sim_{\iid} \mc N(0, 1)$.
As $a_1$ increases, the long-run variance of $\{Z_t\}_{t \in \Z}$ also increases as $(1 + a_1) / (1 - a_1)$.
With $q$ known, on each realisation, we take $\wh\Cp = \{k_{(1)}, \ldots, k_{(q)}\} = \{\wh\cp_j, \, 1 \le j \le q: \, \wh\cp_1 < \ldots < \wh\cp_q\}$ from the WBS2 algorithm ($k_{(m)}$ corresponds to the change point estimator associated with the $m$-th largest CUSUM value $\mc X_{(m)}$, see Eq~\eqref{eq:cusum:sorted} for the notations).
Then, we evaluate how well $\wh\Cp$ estimates $\Cp = \{\cp_j, \, 1 \le j \le q\}$ by reporting the average and the maximum of $\vert \wh\cp_j - \cp_j \vert, \, 1 \le j \le q$, averaged over $100$ realisations; we also report the standard deviation of the outputs.
For comparison, we also report the results from the least squares estimation with the known $q$ 
as investigated by \cite{lavielle2000}, using the Segment Neighbourhood (SegNeigh) algorithm implemented in the R package {\tt changepoint} \citep{changepoint}.

\begin{table}[htb]
\caption{We report the average and maximum errors in change point location estimation averaged over $100$ realisations and the corresponding standard error.}
\label{tab:ar:one}
\begin{center}
{\small
\begin{tabular}{cc cc cc}
\toprule
&	&	\multicolumn{2}{c}{Average} &		\multicolumn{2}{c}{Maximum}		\\	
$a_1$ &	Method & 	Mean &	 SE &	 Mean &	 SE \\	\cmidrule(lr){1-2} \cmidrule(lr){3-4} \cmidrule(lr){5-6}
$0.9$ &	WCM.gSa &	5.500 &	14.729 &	28.300 &	30.204	\\	
&	SegNeigh &	6.395 &	15.506 &	32.570 &	37.548	\\	\cmidrule(lr){1-2} \cmidrule(lr){3-4} \cmidrule(lr){5-6}
$0.95$ &	WCM.gSa &	11.277 &	24.631 &	48.120 &	50.521	\\	
&	SegNeigh &	14.209 &	30.532 &	56.390 &	67.026	\\	\cmidrule(lr){1-2} \cmidrule(lr){3-4} \cmidrule(lr){5-6}
$0.99$ &	WCM.gSa &	9.794 &	21.483 &	35.900 &	49.913	\\	
&	SegNeigh &	16.647 &	28.665 &	53.890 &	64.687	\\	\bottomrule
\end{tabular}}
\end{center}
\end{table}

Table~\ref{tab:ar:one} shows that WBS2 performs better than the least squares estimation method in estimating the locations of the change points, regardless of the degree of serial correlations.
With increasing $a_1$, the estimation accuracy tends to decrease as expected, but its increase from $a_1 = 0.95$ to $a_1 = 0.99$ slightly improves the results for both methods. 

\subsubsection{Performance of WCM.gSa}

We further explore the impact of strong serial correlations by considering the following case: 
$f_t$ undergoes $q = 3$ change points at $\cp_j = \lceil n j /4 \rceil$, $j = 1, \ldots, 15$ with $n = 1000$ where the level parameters satisfy $f_{\cp_j + 1} = (-1)^j \cdot 0.5$, and $\{Z_t\}$ follows an AR($1$) model:
$Z_t = a_1Z_{t - 1} + \sqrt{1 - a_1^2} \vep_t$ with $a_1 \in \{0.9, 0.95, 0.99\}$ and $\vep_t \sim_{\iid} \mc N(0, 1)$. 
With the choice of scaling for the innovations, we keep $\Var(Z_t) = 1$ across all scenarios while he long-run variance of $\{Z_t\}_{t \in \Z}$ increases with $a_1$ as $(1 + a_1)/(1 - a_1)$.
For comparison, we also include DeCAFS \citep{romano2020}, DepSMUCE (\cite{dette2018}, with $\alpha = 0.2$) and SNCP (\cite{zhao2021segmenting}, with $\alpha = 0.05$) which are shown to perform reasonably well in our numerical experiments.
Table~\ref{tab:ar:two} report the summary of the same performance metrics as those reported in Table~\ref{table:sim} which include the proportion of returning $\wh q \ge 1$ when $q = 0$ under the heading `size', on $100$ realisations.

WCM.gSa is the only method that controls the size when $a_1 = 0.9$, and it also correctly estimates $q = 3$ with the smallest Hausdorff distance ($d_H$) when there are mean shifts present in the data.
By comparison, it continues to control the size reasonably well when $a_1 = 0.95$, since all other methods return false positives from over $60\%$ of the realisations.
With $a_1$ very close to $1$, change point detection becomes highly challenging for all methods and they all return several spurious estimators, some far from true change points, as evidenced by large values of~$d_H$.

\begin{table}[htb]
\caption{We report the proportion of rejecting $H_0$ (by returning $\wh q \ge 1$) under $H_0: \, q = 0$ (size)
and the summary of estimated change points under $H_1: \, q > 1$ according to
the distribution of $\wh q - q$, relative MSE and the Hausdorff distance ($d_H$) over $100$ realisations.}
\label{tab:ar:two}
\centering
{\footnotesize 
\begin{tabular}{lr c ccccccc cc}
\toprule
&	&	&	\multicolumn{7}{c}{$\wh q - q$} &							&		\\	
$a_1$
 &	Method & 	Size &	$\ge -3$ &	$-2$ &	$-1$ &	$0$ &	$1$ &	$2$ &	$3 \le$ &	RMSE &	$d_H$	\\	\cmidrule(lr){1-2} \cmidrule(lr){3-3} \cmidrule(lr){4-10} \cmidrule(lr){11-12}
$0.9$ &	WCM.gSa &	0.020 &	0.000 &	0.000 &	0.000 &	0.840 &	0.050 &	0.080 &	0.030 &	3.558 &	26.5	\\	
&	DeCAFS &	0.520 &	0.000 &	0.000 &	0.000 &	0.560 &	0.350 &	0.070 &	0.020 &	1.669 &	75.41	\\	
&	DepSMUCE &	0.990 &	0.000 &	0.000 &	0.000 &	0.260 &	0.410 &	0.260 &	0.070 &	4.948 &	94.05	\\	
&	SNCP &	0.460 &	0.000 &	0.000 &	0.110 &	0.650 &	0.190 &	0.040 &	0.010 &	8.647 &	68.3	\\	\cmidrule(lr){1-2} \cmidrule(lr){3-3} \cmidrule(lr){4-10} \cmidrule(lr){11-12}
$0.95$ &	WCM.gSa &	0.130 &	0.050 &	0.000 &	0.000 &	0.620 &	0.140 &	0.070 &	0.120 &	3.078 &	58.82	\\	
&	DeCAFS &	0.650 &	0.000 &	0.000 &	0.000 &	0.570 &	0.310 &	0.090 &	0.030 &	1.316 &	68.3	\\	
&	DepSMUCE &	1.000 &	0.000 &	0.000 &	0.000 &	0.000 &	0.100 &	0.260 &	0.640 &	4.886 &	152.74	\\	
&	SNCP &	0.840 &	0.000 &	0.020 &	0.060 &	0.440 &	0.270 &	0.140 &	0.070 &	4.985 &	109.04	\\	\cmidrule(lr){1-2} \cmidrule(lr){3-3} \cmidrule(lr){4-10} \cmidrule(lr){11-12}
$0.99$ &	WCM.gSa &	0.560 &	0.000 &	0.000 &	0.010 &	0.240 &	0.150 &	0.150 &	0.450 &	1.988 &	122.51	\\	
&	DeCAFS &	0.580 &	0.000 &	0.000 &	0.000 &	0.580 &	0.280 &	0.070 &	0.070 &	1.100 &	56.38	\\	
&	DepSMUCE &	1.000 &	0.000 &	0.000 &	0.000 &	0.000 &	0.010 &	0.060 &	0.930 &	2.130 &	155.61	\\	
&	SNCP &	0.980 &	0.000 &	0.000 &	0.030 &	0.160 &	0.310 &	0.270 &	0.230 &	1.946 &	127.42	\\	\bottomrule
\end{tabular}}
\end{table}

\section{Additional real data analysis}

\subsection{Pre-processing of nitrogen oxides concentrations data}
\label{app:nox}

\begin{figure}[hbt]
\centering
\includegraphics[width = .9\linewidth]{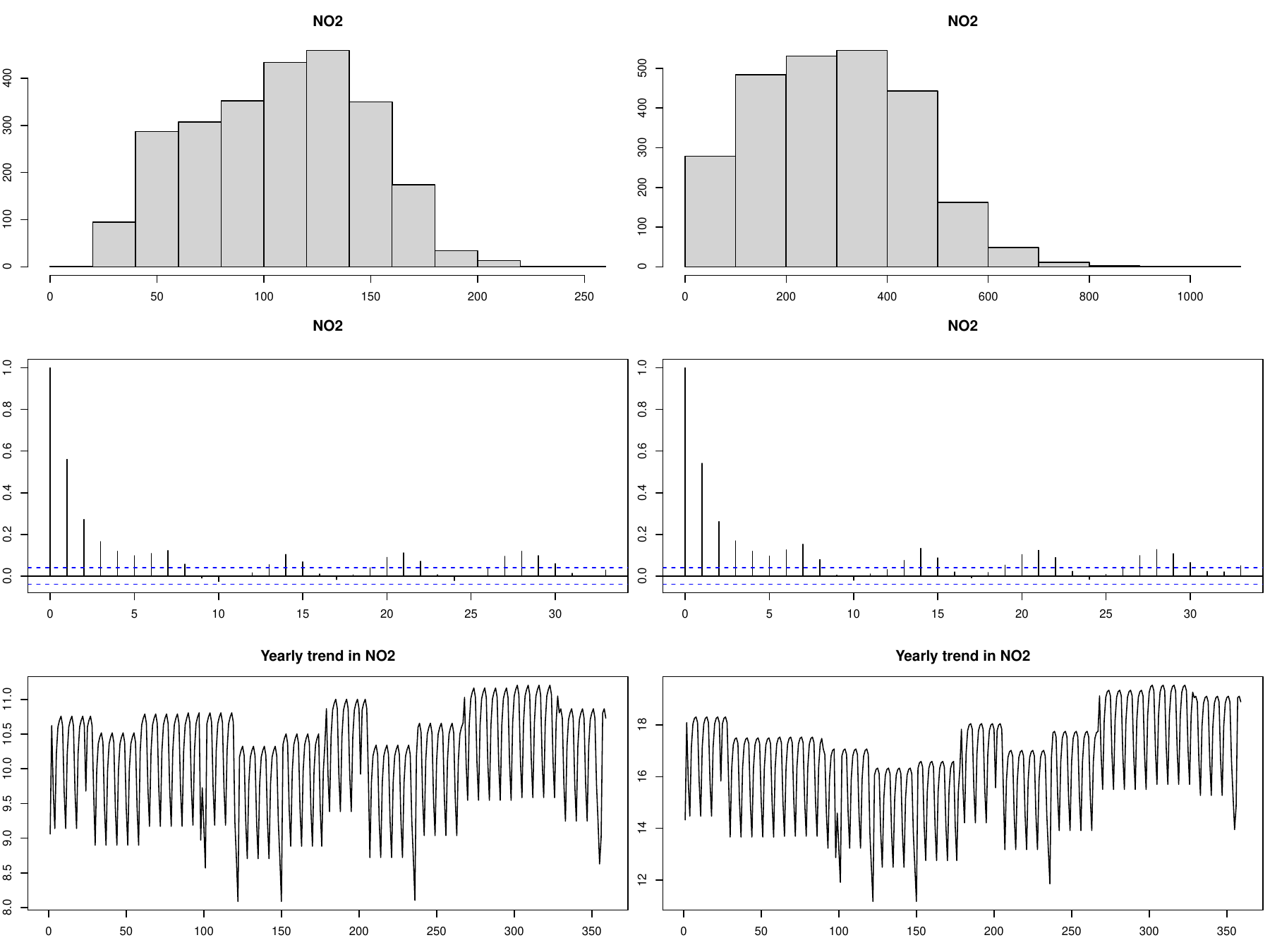}
\caption{Various statistical properties of the daily concentrations
of NO$_2$(left) and NO$_x$ (right)
measured at Marylebone Road in London between January $2004$ and December $2010$.
Top: histogram of raw concentrations.
Middle: autocorrelations after square root transform.
Bottom: yearly fitted patterns.}
\label{fig:air:hist}
\end{figure}

The concentration measurements are positive integers and possibly highly skewed,
see top panels of Figure~\ref{fig:air:hist}.
Also, the data exhibit seasonality as well as weekly patterns,
the latter particularly visible from the autocorrelations
(see middle panels of Figure~\ref{fig:air:hist}),
and the level of concentrations drops sharply on bank holidays, 
in line with the behaviour of road traffic.
We adopt the square root transform in order to bring the data to light-tailedness
without masking any shift in the level greatly.
Also, after visual inspection and preliminary research into the relevant literature,
we select the period between January 2004 and December 2010
to estimate the seasonal, weekly and bank holiday patterns,
which is achieved by regressing the square root transformed time series
onto the indicator variables representing their effects.
In summary, $19$ parameters including the intercept were estimated from the $2508$ observations,
and all three factors (seasonal, daily and bank holiday effects) were deemed significant,
with the models fitted to the NO$_2$ and NO$_x$ concentrations
attaining the adjusted $R^2$ coefficients of $0.1077$ and $0.1149$, respectively.
Bottom panels of Figure~\ref{fig:air:hist} plot the fitted yearly trend,
while Figure~\ref{fig:no2} in the main text 
plots the residuals, which we analyse for change points in the level.

\subsection{Validating the number of change points detected from the NO$_2$ time series}
\label{sec:no2}

Table~\ref{table:air:main} in the main paper shows 
a considerable variation in the number of detected change points in the NO$_2$ time series 
between the competing methods. 
To run an independent check for the number of change points, we firstly remove the
bulk of the serial dependence of the data by fitting the AR($1$) model to it 
and work with the empirical residuals from this fit.
For this, we set the AR coefficient to $0.5$, as suggested by the sample autocorrelation function 
in Figures~\ref{fig:air:hist} and~\ref{fig:mean:adjust:acf}.
In particular, the latter figure confirms that the assumption of weak stationarity on the noise
is well-satisfied by the NO$_2$ time series,
with the leading autocorrelations remaining approximately the same
across the segments defined by the change points estimated by WCM.gSa.

\begin{figure}[htbp]
\centering
\includegraphics[width = 1\textwidth]{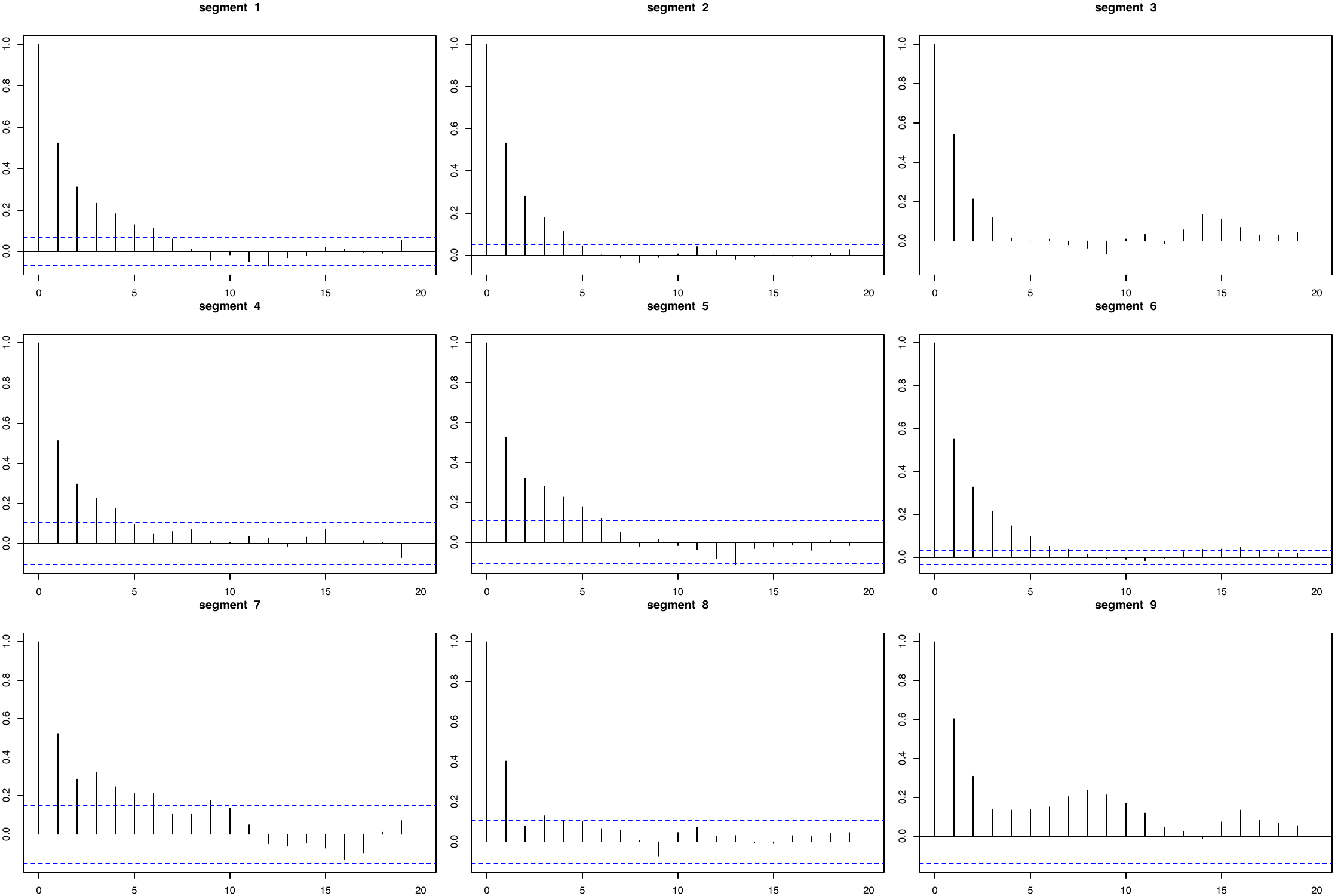}    
\caption{Autocorrelations at $20$ lags computed from the nine segments defined by the change point estimators
returned by WCM.gSa when applied to the de-trended and transformed NO$_2$ measurements.}
\label{fig:mean:adjust:acf}
\end{figure}

On these, we perform change point detection using 
a method suitable for multiple level-shift detection under serially uncorrelated noise. 
The method we use is the IDetect technique with the information-criterion-based model selection~\citep{af20}, 
as implemented in the R package {\tt breakfast} \citep{accf20}. 
The reason for the selection of this method is that it is possibly the best-performing
method of the package overall (as reported in the package vignette available at 
\url{https://cran.r-project.org/web/packages/breakfast/vignettes/breakfast-vignette.html}),
and it is independently commended in \cite{fr20} as having very strong performance overall.

The R execution \verb+model.ic(sol.idetect(no2.res))$cpts+, 
where \verb+no2.res+ are the residuals obtained as above, 
returns $7$ change point estimators, a number close to the $8$ 
obtained by our WCM.gSa method. 
Out of the $7$ locations estimated by IDetect, 
there is very good agreement with WCM.gSa for $6$ out of these locations.
The exception is the WCM.gSa-estimated change point at 2010-07-25, 
which IDetect estimates some $800$ days later.
However, IDetect also does not estimate the following 
WCM.gSa-estimated change point at 2018-10-13, which is a possible reason
for IDetect to replace these two WCM.gSa-estimated change points by one in between them.

This, in our view, represents very good agreement on the whole, 
especially given that the two methods are entirely different in nature 
and worked with different time series on input. 
This result further enhances our confidence in the output of WCM.gSa for this dataset.

\subsection{Hadley Centre central England temperature data analysis}

The Hadley Centre central England temperature (HadCET) dataset \citep{parker1992}
contains the mean, maximum and minimum daily and monthly temperatures
representative of a roughly triangular area 
enclosed by Lancashire, London and Bristol, UK.

We analyse the yearly average of the monthly mean, maximum and minimum temperatures up to 2019
for change points using the proposed WCM.gSa methodology.
The mean monthly data dates back to 1659,
while the maximum and the minimum monthly data begins in 1878;
we focus on the period of 1878--2019 ($n = 142$) for all three time series.
To take into account that the time series are relatively short,
we set $p_{\max} = 5$ (maximum allowable AR order) for WCM.gSa 
and the minimum spacing to be $10$ 
(i.e.\ no change points occur within $10$ years from one another),
while the rest of the parameters are chosen as recommended in Section~\ref{sec:tuning};
the results are invariant to the choice of the penalty $\xi_n \in \{\log^{1.01}(n), \log^{1.1}(n)\}$.
Table~\ref{table:hadcet} reports the change points estimated by WCM.gSa
as well as those detected by DepSMUCE and DeCAFS for comparison.

On all three datasets, 
WCM.gSa and DeCAFS return identical estimators,
and the same change points are detected by DepSMUCE
(with $\alpha = 0.2$).
Figure~\ref{fig:hadcet} shows that there appears to be a noticeable change
in the persistence of the autocorrelations in the datasets
before and after these shifts in the mean are accounted for,
which further confirms that the yearly temperatures undergo level shifts over the years.
In particular, the second change point detected at 1987/88 
coincides with the global regime shift in Earth's biophysical systems
identified around 1987 \citep{reid2016}, 
which is attributed to anthropogenic warming and a volcanic eruption.

\begin{table}[htb]
\caption{Change points (in year) detected from the yearly average of 
the mean, maximum and minimum monthly temperatures from 1878 to 2019.}
\label{table:hadcet}
\centering
\begin{tabular}{c  c c c}
\toprule
Method & Mean & Maximum & Minimum \\ 
\cmidrule(lr){1-1} \cmidrule(lr){2-4}
WCM.gSa & 1892, 1988 & 1892, 1988 & 1892, 1987 \\
\cmidrule(lr){1-1} \cmidrule(lr){2-4}
DepSMUCE(0.05) & 1987 & 1988 & 1956 \\
DepSMUCE(0.2) & 1892, 1988 & 1988 & 1892, 1987 \\
\cmidrule(lr){1-1} \cmidrule(lr){2-4}
DeCAFS & 1892, 1988 & 1892, 1988 &  1892, 1987\\
\bottomrule
\end{tabular}
\end{table}

\begin{figure}[htb]
\centering
\includegraphics[width = 1\linewidth]{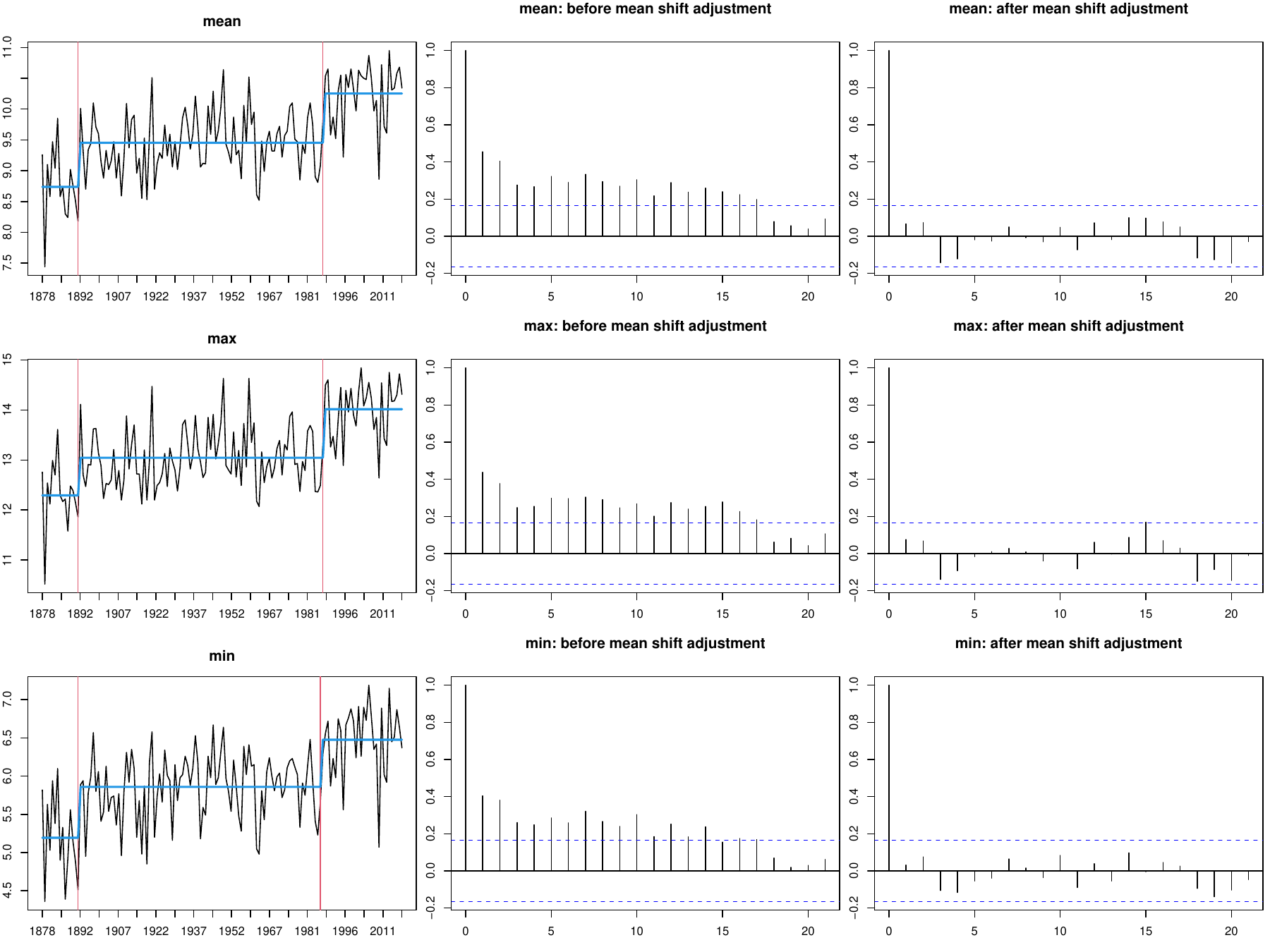}
\caption{Left: yearly average of the mean, maximum and minimum monthly temperatures (top to bottom),
plotted together with the change points estimated by WCM.gSa (vertical lines) and piecewise constant mean (bold lines).
Middle and right: autocorrelation function of the data without and with the time-varying mean adjusted.}
\label{fig:hadcet}
\end{figure}

\section{Proofs}
\label{sec:proofs}

For any square matrix $\mbf B \in \R^{p \times p}$,
let $\lambda_{\max}(\mbf B)$ and $\lambda_{\min}(\mbf B)$ denote
the maximum and the minimum eigenvalues of $\mbf B$, respectively,
and we define the operator norm $\Vert \mbf B \Vert = \sqrt{\lambda_{\max}(\mbf B^\top\mbf B)}$.
Let $\mbf 1$ denote a vector of ones, $\mbf 0$ a vector of zeros and $\mbf I$ an identity matrix
whose dimensions are determined by the context.
The projection matrix onto the column space of a given matrix $\mbf A$ is denoted by
$\bm\Pi_{\mbf A} = \mbf A(\mbf A^\top\mbf A)^{-1} \mbf A^\top$,
provided that $\mbf A^\top\mbf A$ is invertible.
We write $a \vee b = \max(a, b)$ and $a \wedge b = \min(a, b)$.

\subsection{Proof of the results in Section~\ref{sec:cp}}
\label{sec:pf:cp}

Throughout the proofs, we work under the following non-asymptotic bound
\begin{align}
\label{eq:nonasymp:cp}
\max\l(\frac{n^{\varphi} \zeta_n^2}{\min_{1 \le j \le q} (f^\prime_j)^2 \delta_j},
\frac{1}{\log(\zeta_n)}
\r) \le \frac{1}{K}
\end{align}
for some $K > 0$, which holds for all $n \ge n(K)$ for some large enough $n(K)$,
which replaces the asymptotic condition in Assumptions~\ref{assum:cp:one} and~\eqref{cond:thm:cp}.
The $o$-notation always refers to $K$ in~\eqref{eq:nonasymp:cp} being large enough, 
which in turn follows for large enough $n$.
By $\mc F_{s, \c, e}$ and $\mc Z_{s, \c, e}$, we denote the CUSUM statistics defined 
with $f_t$ and $Z_t$ replacing $X_t$ in~\eqref{eq:cusum:def}, respectively.

\subsubsection{Preliminaries}

\begin{lem}[Lemma~B.1 of \cite{cho2019}]
\label{lem:ck}
For $\max(s, \cp_{j - 1}) < \c < \cp_j < \min(e,\cp_{j + 1})$,
it holds that
\begin{align*}
\mc F_{s, \c, e}
= -\sqrt{\frac{(\c - s)(e - \c)}{e-s}} 
\l\{\frac{(e - \cp_j)\,f^\prime_j}{e-\c} + 
\frac{(e - \cp_{j + 1})_+\,f^\prime_{j + 1}}{e-\c}+\frac{(\cp_{j - 1} - s)_+\,f^\prime_{j - 1}}{\c-s}\r\},
\end{align*}
where $a_+ = a \cdot \mathbb{I}_{a \ge 0}$.
Similarly, for $\max(s, \cp_{j - 1}) < \cp_j \le \c < \min(e, \cp_{j + 1})$, it holds that
\begin{align*}
\mc F_{s, \c, e} =
- \sqrt{\frac{(\c-s)(e-\c)}{e-s}} 
\l\{\frac{(\cp_j-s)\,f^\prime_j}{\c-s} + 
\frac{(e - \cp_{j + 1})_+\,f^\prime_{j + 1}}{e-\c} + \frac{(\cp_{j - 1} - s)_+\,f^\prime_{j - 1}}{\c-s}\r\}.
\end{align*}
\end{lem}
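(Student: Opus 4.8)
The plan is to exploit the fact that $\mc F_{s, \c, e}$ is a linear functional of the signal $\{f_t\}$ that annihilates additive constants, and then superpose the contributions of the individual jumps. Writing $f_t = f_0 + \sum_i f^\prime_i \mathbb{I}(t \ge \cp_i + 1)$ and observing that the constant $f_0$ cancels in the difference of block averages in~\eqref{eq:cusum:def}, we get $\mc F_{s, \c, e} = \sum_i f^\prime_i \, G_i$, where $G_i$ is the CUSUM evaluated at the unit step $t \mapsto \mathbb{I}(t \ge \cp_i + 1)$ over $(s, e]$ with split $\c$. It therefore suffices to compute $G_i$ for a single step and to collect the terms dictated by the positional constraints in each case.

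Next I would compute the single-step CUSUM directly. Fix a jump location $\cp_i =: \vartheta$. If $\vartheta \le s$ the step is identically $1$ on $(s, e]$, and if $\vartheta \ge e$ it is identically $0$, so $G_i = 0$ in both cases (the CUSUM kills constants). If $s < \vartheta \le \c$, the step equals $1$ on exactly $\c - \vartheta$ of the indices in $(s, \c]$ and on all of $(\c, e]$, so the two block averages are $(\c - \vartheta)/(\c - s)$ and $1$, whence
\begin{align*}
G_i = \sqrt{\frac{(\c - s)(e - \c)}{e - s}} \l( \frac{\c - \vartheta}{\c - s} - 1 \r) = - \sqrt{\frac{(\c - s)(e - \c)}{e - s}} \, \frac{\vartheta - s}{\c - s}.
\end{align*}
Symmetrically, if $\c < \vartheta < e$, the step is $0$ on $(s, \c]$ and equals $1$ on $e - \vartheta$ of the indices in $(\c, e]$, giving $G_i = -\sqrt{(\c-s)(e-\c)/(e-s)} \cdot (e - \vartheta)/(e - \c)$.

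Finally I would assemble the claimed identities by applying these values to the change points that can enter $(s, e)$. In Case~1, the hypotheses $\max(s, \cp_{j-1}) < \c < \cp_j < \min(e, \cp_{j+1})$ place $\cp_j$ strictly to the right of $\c$ inside the window, while $\cp_{j-1}$ (when $\cp_{j-1} > s$) lies to the left of $\c$ and $\cp_{j+1}$ (when $\cp_{j+1} < e$) lies to the right; whether each neighbour actually falls inside the window is recorded precisely by the truncations $(\cp_{j-1} - s)_+$ and $(e - \cp_{j+1})_+$, which vanish exactly when $\cp_{j-1} \le s$ or $\cp_{j+1} \ge e$. Substituting $\vartheta = \cp_j, \cp_{j+1}$ into the right-of-$\c$ formula and $\vartheta = \cp_{j-1}$ into the left-of-$\c$ formula, then factoring out the common square root, reproduces the three-term expression. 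Case~2 is identical, now with $\cp_j$ itself on the left of (or at) $\c$.

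The computation carries no analytic difficulty; the only points needing care are the boundary conventions in the single-step formula (the half-open intervals and the treatment of $\vartheta = \c$ versus $\vartheta = s, e$) and the bookkeeping matching the $(\cdot)_+$ truncations to whether each neighbouring change point enters $(s, e)$. I would also flag the standing assumption, implicit in both cases, that $(s, e)$ contains no change points other than $\cp_{j-1}, \cp_j, \cp_{j+1}$, so that no further superposed terms arise; under the sampling and spacing conditions of the paper this is the relevant configuration.
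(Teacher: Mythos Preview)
Your proof is correct. The paper does not actually prove this lemma; it is quoted verbatim as Lemma~B.1 of \cite{cho2019} and stated without proof, so there is no argument in the paper to compare against. Your approach---using linearity of $\mc F_{s,\c,e}$, its annihilation of constants, and the explicit evaluation on a single indicator step---is the standard direct verification, and your single-step formulae and their assembly into the two cases are accurate. Your closing remark that the identity tacitly assumes no further change points $\cp_{j-2}, \cp_{j+2}, \ldots$ lie in $(s,e)$ is well taken: the lemma as written does not preclude this, but in every application within the paper the sampled intervals are short enough (relative to the minimum spacing $\min_j \delta_j$) that at most three consecutive change points can enter, so the omission is harmless in context.
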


\begin{lem}[Lemma~2.2 of \cite{venkatraman1992}; Lemma~8 of \cite{wang2018}]
\label{lem:venkat}
For some $0 \le s < e \le n$ with $e - s > 1$, let 
$\Cp \cap [s, e] = \{\cp^\circ_1, \ldots, \cp^\circ_m\}$ with $m \le q$,
and we adopt the notations $\cp^\circ_0 = s$ and $\cp^\circ_{m + 1} = e$.
If the series $\mc F_{s, \c, e}$ is not constantly zero for
$\cp^\circ_j + 1 \le \c \le \cp^\circ_{j + 1}$ for some $j = 0, \ldots, m$, 
one of the following is true:
\begin{enumerate}[label = (\roman*)]
\item $j = 0$ and $\mc F_{s, \c, e}, \, \cp^\circ_j + 1 \le \c \le \cp^\circ_{j + 1}$
does not change sign and has strictly increasing absolute values,
\item $j = m$ and $\mc F_{s, \c, e}, \, \cp^\circ_j + 1 \le \c \le \cp^\circ_{j + 1}$
does not change sign and has strictly decreasing absolute values,
\item $1 \le j \le m - 1$ and $\mc F_{s, \c, e}, \, \cp^\circ_j + 1 \le \c \le \cp^\circ_{j + 1}$
is strictly monotonic,
\item $1 \le j \le m - 1$ and $\mc F_{s, \c, e}, \, \cp^\circ_j + 1 \le \c \le \cp^\circ_{j + 1}$
does not change sign and its absolute values are strictly
decreasing then strictly increasing.
\end{enumerate}
\end{lem}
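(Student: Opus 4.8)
The plan is to reduce the statement to an elementary calculus analysis of the deterministic CUSUM curve, viewed as a function of a continuous split location $\c \in (s,e)$ and restricted to one inter-change-point sub-interval at a time. First I would record the closed form: writing $S_\c = \sum_{t=s+1}^\c f_t$ and $T = \sum_{t=s+1}^e f_t$, a direct manipulation of~\eqref{eq:cusum:def} gives
\begin{align*}
\mc F_{s, \c, e} = \frac{g(\c)}{\sqrt{(e-s)(\c-s)(e-\c)}}, \qquad g(\c) := (e-s) S_\c - (\c - s) T.
\end{align*}
Since $f_t$ is constant on each $(\cp^\circ_j, \cp^\circ_{j+1}]$, the partial sum $S_\c$ is affine in $\c$ there, hence so is $g$; thus on each sub-interval $g(\c) = \alpha_j \c + \beta_j$ for constants $\alpha_j, \beta_j$, and the continuous extension of $\mc F_{s,\c,e}$ is $g(\c)/\sqrt{(e-s)(\c-s)(e-\c)}$. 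All discrete monotonicity and unimodality claims then follow by restricting the analysis of this continuous function to the integer grid.

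The two boundary cases are immediate from the closed form. On the first segment ($j=0$) one has $S_\c = (\c-s)\mu_0$ with $\mu_0 = f_{s+1}$, so $g(\c) = (\c-s)\,c_0$ with $c_0 = (e-s)\mu_0 - T$ constant, whence $|\mc F_{s,\c,e}| = |c_0|\sqrt{(\c-s)/[(e-s)(e-\c)]}$ is strictly increasing with constant sign — case~(i) — provided $c_0 \neq 0$, which holds exactly when $\mc F_{s,\c,e}$ is not constantly zero on the segment. The last segment ($j=m$) is symmetric, giving $g(\c) = (e-\c)\,c_m$ and strictly decreasing $|\mc F_{s,\c,e}|$ with constant sign — case~(ii).

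The crux is the middle segments ($1 \le j \le m-1$), and here two computations do the work. Writing $h = \mc F_{s,\cdot,e}$ and $u(\c) = (\c-s)(e-\c)$, differentiating shows that the sign of $h'(\c)$ equals that of
\begin{align*}
\psi(\c) := \alpha_j\, u(\c) - \tfrac12\, g(\c)\, u'(\c),
\end{align*}
and the key observation is that the $\c^2$-terms in $\psi$ cancel, so $\psi$ is affine; hence $h'$ changes sign at most once and $h$ has at most one interior critical point. The second computation rules out interior maxima of $|\mc F_{s,\c,e}|$: at any interior critical point $\c^*$ (where necessarily $g(\c^*)\neq 0$, since $g(\c^*)=0$ would give $\psi(\c^*)=\alpha_j u(\c^*)$, vanishing only if $\alpha_j=0$, i.e.\ $g\equiv 0$, which is excluded), using $u\,u'' - (u')^2 = -2u - (u')^2 \le 0$ together with the stationarity relation $(\alpha_j/g)^2 = u'(\c^*)^2/(4u(\c^*)^2)$ yields
\begin{align*}
(\log|h|)''(\c^*) = u(\c^*)^{-2}\big[\, u(\c^*) + \tfrac14\, u'(\c^*)^2 \,\big] > 0.
\end{align*}
Thus every interior critical point is a strict local minimum of $|\mc F_{s,\c,e}|$. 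Combined with ``at most one critical point'', this forces $|\mc F_{s,\c,e}|$ to be either strictly monotone (case~(iii)) or strictly decreasing then strictly increasing (case~(iv)). In the valley case the interior minimum value $|\mc F_{s,\c^*,e}| > 0$ cannot be undercut, so $g$ — being affine with at most one zero — cannot vanish on the segment, establishing the ``no sign change'' clause of~(iv); in the monotone case a single crossing through zero is permitted, consistent with~(iii). Strictness throughout follows by noting that $\psi \equiv 0$ would force $h$ constant, and $g$ being a constant multiple of $\sqrt{(e-s)(\c-s)(e-\c)}$ while affine forces that constant to be zero, i.e.\ $\mc F_{s,\c,e}\equiv 0$, contrary to hypothesis.

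I expect the main obstacle to be the second computation — verifying that every interior critical point is a valley rather than a peak of $|\mc F_{s,\c,e}|$; the cancellation making $\psi$ affine is a pleasant but routine algebraic check, whereas the sign of $(\log|h|)''$ at stationarity is what genuinely excludes the ``mountain'' profile and thereby delivers precisely cases~(iii)--(iv) rather than a generic unimodal curve. Lemma~\ref{lem:ck} can be invoked as an alternative route to the explicit behavior near each $\cp^\circ_j$ and to fix signs at the segment endpoints, but the closed-form-plus-calculus argument above handles all four cases uniformly.
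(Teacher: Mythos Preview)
The paper does not supply its own proof of this lemma; it is stated with attribution to Venkatraman (1992, Lemma~2.2) and Wang et al.\ (2018, Lemma~8) and used as a black box in the subsequent arguments. Your self-contained calculus proof is correct. The closed form $\mc F_{s,\c,e}=g(\c)/\sqrt{(e-s)u(\c)}$ with $g$ affine on each segment, the cancellation making $\psi(\c)=\alpha_j u(\c)-\tfrac12 g(\c)u'(\c)$ affine (hence at most one critical point of $h$), and the second-derivative computation $(\log|h|)''(\c^*)=u(\c^*)^{-2}\big[u(\c^*)+\tfrac14 u'(\c^*)^2\big]>0$ together deliver exactly the required trichotomy.

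One clarification worth making explicit in a polished write-up: the passage from ``$(\log|h|)''(\c^*)>0$'' to the (iii)/(iv) split. At a critical point $(\log|h|)''=h''/h$, so $h$ and $h''$ share a sign there; hence the unique critical point is either a minimum of $h$ with $h(\c^*)>0$ or a maximum of $h$ with $h(\c^*)<0$. In both situations $h$ keeps a constant sign on the whole segment and $|h|$ is valley-shaped, yielding case~(iv). If $\psi$ does not change sign, $h$ is strictly monotone and case~(iii) applies (with a possible single sign change of $h$, which~(iii) permits). This closes the small gap between ``every critical point of $|h|$ is a valley'' and ``$h$ itself does not change sign in case~(iv)'', which your sketch handles correctly but somewhat tersely.
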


%

\subsubsection{Proof of Theorem~\ref{thm:cp}}

Throughout the proofs, $C_0, C_1, \ldots$ denote some positive constants.

We define the following intervals for each $j = 0, \ldots, q_n$,
\begin{align*}
I_{L, j} = (\cp_{j - 1}, \cp_j - \lceil \delta_j/3 \rceil)
\quad \text{ and } \quad
I_{R, j} = (\cp_j + \lceil \delta_j/3 \rceil, \cp_{j + 1}].
\end{align*}
Let $(s, e)$ denote an interval considered at some iteration of the WBS2 algorithm.
By construction, the minimum length of the interval obtained by deterministic sampling 
is given by $\lfloor (e - s)/\wt{K} \rfloor$, where $\wt{K}$ satisfies
$R_n \le \wt{K}(\wt{K} + 1)/2$.
Then, $\mc R_{s, e}$ drawn by the deterministic sampling 
contains at least one interval $(\ell_{m(j)}, r_{m(j)})$
satisfying $\ell_{m(j)} \in I_{L, j}$ and $r_{m(j)} \in I_{R, j}$
for any $\cp_j \in \Cp \cap (s, e)$ (if $\Cp \cap (s, e)$ is not empty),
provided that $3 \lfloor (e - s)/\wt{K} \rfloor \le 2 \min_{1 \le j \le q} \delta_j$.
This condition in turn is met under~\eqref{cond:thm:cp}.
Then, it follows from the proof of Proposition~B.1 of \cite{cho2019} that
there exists 
a permutation $\{\pi(1), \ldots, \pi(q)\}$ of $\{1, \ldots, q\}$
such that on $\mc Z_n$,
\begin{align}
& \max_{1 \le j \le q} (f^\prime_{\pi(j)})^2 \vert \c_{(j)} - \cp_{\pi(j)} \vert \le \rho_n = c_2 \zeta_n^2, \quad \text{and}
\label{eq:pf:thm:scp:one} \\
& \exp(\mc Y_{(j)}) = \l\vert \mc X_{(j)} \r\vert
\ge C_0 \vert f^{\prime}_{\pi(j)} \vert \sqrt{\delta_{\pi(j)}} \ge C_1 n^{\varphi/2} \zeta_n
\label{eq:pf:thm:scp:two} 
\end{align}
for $j = 1, \ldots, q$, by~\eqref{eq:nonasymp:cp}.
From~\eqref{eq:pf:thm:scp:one}, the assertion in~(i) follows readily.
Also consequently, 
the intervals $(s_{(m)}, e_{(m)}), \, m = q + 1, \ldots, n - 1$
meet one of the followings: 
\begin{enumerate}[label = (\alph*)]
\item $(s_{(m)}, e_{(m)}) \cap \Cp = \emptyset$, or
\item $(s_{(m)}, e_{(m)}) \cap \Cp = \{\cp_j\}$ and
$(f^\prime_j)^2 \min(\cp_j - s_{(m)}, e_{(m)} - \cp_j) \le \rho_n$, or
\item $(s_{(m)}, e_{(m)}) \cap \Cp = \{\cp_j, \cp_{j + 1}\}$ and
$\max\{ (f^\prime_j)^2(\cp_j - s_{(m)}), (f^\prime_{j + 1})^2(e_{(m)} - \cp_{j + 1})\} \le \rho_n$,
\end{enumerate}
for some $j = 1, \ldots, q$.
Under~(a), from Assumption~\ref{assum:error},
\begin{align}
\label{eq:pf:thm:scp:three:one}
\exp(\mc Y_{(m)}) &= \vert \mc Z_{s_{(m)}, \c_{(m)}, e_{(m)}} \vert \le 2\zeta_n.
\end{align}
Under~(b), supposing that $\cp_j \le \c_{(m)}$, we obtain
\begin{align}
\exp(\mc Y_{(m)}) &\le \vert \mc F_{s_{(m)}, \c_{(m)}, e_{(m)}} \vert
+ \vert \mc Z_{s_{(m)}, \c_{(m)}, e_{(m)}} \vert
\nn \\
&\le \sqrt{\frac{(\c_{(m)} - s_{(m)}) (e_{(m)} - \c_{(m)})}{e_{(m)} - s_{(m)}}} 
\frac{(\cp_j - s_{(m)})\vert d_j \vert}{\c_{(m)} - s_{(m)}} + 2\zeta_n
\nn \\
&\le \sqrt{d_j^2\min(\cp_j - s_{(m)}, e_{(m)} - \cp_j)} + 2\zeta_n
\le \sqrt{\rho_n} + 2\zeta_n \le C_2 \zeta_n
\label{eq:pf:thm:scp:three:two}
\end{align}
by Lemma~\ref{lem:ck}; 
the case when $\cp_j > \c_{(m)}$ is handled analogously. 
Under~(c), we obtain
\begin{align}
\exp(\mc Y_{(m)}) &\le \max\l\{\vert \mc F_{s_{(m)}, \cp_j, e_{(m)}} \vert,
\vert \mc F_{s_{(m)}, \cp_{j + 1}, e_{(m)}} \vert\r\} + 2\zeta_n
\nn \\
&\le \sqrt{d_j^2(\cp_j - s_{(m)})}
+ \sqrt{d_{j +1}^2(e_{(m)} - \cp_{j + 1})} + 2\zeta_n \le
C_3\zeta_n
\label{eq:pf:thm:scp:three:three}
\end{align}
where the first inequality follows from Lemma~\ref{lem:venkat}
and the second inequality from Lemma~\ref{lem:ck}.
From~\eqref{eq:pf:thm:scp:two} and \eqref{eq:pf:thm:scp:three:one}--\eqref{eq:pf:thm:scp:three:three}, 
and also that $\mc X_{(1)} \le C_4 \sqrt{n}$ due to $f^\prime_j = O(1)$,
we conclude that
\begin{align*}
\mc Y_{(m)} &= \gamma_m \log(n) (1 + o(1)) = \gamma_m \log(n) (1 + o(1)) + \log(\zeta_n) \quad \text{for} \quad m = 1, \ldots, q, \\
\mc Y_{(m)} &\le \kappa_m \log(\zeta_n) (1 + o(1)) \quad \text{for} \quad m = q + 1, \ldots, P,
\end{align*}
where $\{\gamma_m\}$ and $\{\kappa_m\}$ meet the conditions in~(ii).
\medskip

\subsection{Proof of the results in Section~\ref{sec:ms}}


We adopt the following notations throughout the proof:
For a fixed integer $r \ge 1$ and an arbitrary set
$\mc A = \{\c_1 , \cdots, \c_m\} \subset \{1, \ldots, n\}$
satisfying $\min_{0 \le j \le m} (\c_{j + 1} - \c_j) \ge r + 1$ (with $\c_0 = 0$ and $\c_{m + 1} = n$), 
we define
$\mbf X = \mbf X(\mc A, r) = [\mbf L: \mbf R]$ and $\mbf Y$ as in~\eqref{eq:def:x}. 
Also we set $\mbf X_{(j)} = [\mbf L_{(j)}: \mbf 1]$ for each $j = 0, \ldots, m$,
where $\mbf L_{(j)}$ has $\mbf x_t = (X_t, \ldots, X_{t - r + 1})^\top$,
$\c_j \le t \le \c_{j + 1} - 1$ as its rows.
Sub-vectors of $\mbf Y$ and $\bm\vep$ corresponding to
$\c_j \le t \le \c_{j + 1} - 1$ are denoted by
$\mbf Y_{(j)}$ and $\bm\vep_{(j)}$, respectively.
When $r = 0$, we have $\mbf X = \mbf R$ and $\mbf X_{(j)} = \mbf R_{(j)}$,

Besides, we denote the (approximate) linear regression representation of~\eqref{eq:ar}
with the true change point locations $\cp_j$ and AR order $p$ by
\begin{align}
\mbf Y &= \mbf L^\circ \bm\alpha^\circ + \bm\nu^\circ + \bm\vep = 
\bmx \underbrace{\mbf L^\circ}_{n \times p} & 
\underbrace{\mbf R^\circ}_{n \times (q + 1)} \emx \; 
\bmx \bm\alpha^\circ \\ \bm\mu^\circ \emx + 
(\bm\nu^\circ - \mbf R^\circ\bm\mu^\circ) + \bm\vep,
\label{eq:true:lr}
\end{align}
where $\bm\nu^\circ = ((1 - a(B))f_t, \, 1 \le t \le n)^\top$.
Correspondingly, $\mbf X^\circ$ denotes an $n \times (p + q + 1)$-matrix
with its rows given by
\begin{align*}
\mbf x_t = (X_{t - 1}, \ldots, X_{t - p}, 
\mathbb{I}_{1 \le t \le \cp_1}, \ldots, \mathbb{I}_{\cp_{q + 1} \le t \le n})^\top
\end{align*}
for $1 \le t \le n$,
whereby $\mbf X^\circ \equiv \mbf X(\Cp, p)$.
When $p = 0$, the matrix $\mbf L^\circ$ is empty.

\subsubsection{Preliminaries}

The following results are frequently used throughout the proof.


\begin{prop}
\label{prop:lai}
Suppose that $p \ge 0$ and $r \in \{\max(p, 1), \ldots, p_{\max}\}$ with $p_{\max} \ge \max(p, 1)$ fixed.
Also, let $\mc A = \{\c_1, \ldots, \c_{m}\}$ 
as an arbitrary subset of $\wh\Cp_{M}$.
With such $\mc A$, define $\mbf X = \mbf X(\mc A, r) = [\mbf L: \mbf R]$ as in~\eqref{eq:def:x},
and also $\mbf X_{(j)}$, $\mbf L_{(j)}$, $\mbf R_{(j)}$ and $\bm\vep_{(j)}$, correspondingly,
and let $N_j = \c_{j + 1} - \c_j$. 
Then, under Assumption~\ref{assum:ar}~\ref{assum:ar:one}--\ref{assum:ar:three} and Assumption~\ref{assum:est}, 
we have the followings hold almost surely for all $j = 0, \ldots, m$ and $\mc A \subset \wh\Cp_M$:
\begin{align}
& \text{tr}(\mbf L^\top\mbf L) = O(n), \quad \text{tr}(\mbf L_{(j)}^\top\mbf L_{(j)}) = O(N_j), 
\label{eq:prop:lai:L:one} \\
& \liminf_{n \to \infty} n^{-1} \lambda_{\min}(\mbf L^\top\mbf L) > 0, \quad
\liminf_{n \to \infty} N_j^{-1} \lambda_{\min}(\mbf L_{(j)}^\top\mbf L_{(j)}) > 0,
\label{eq:prop:lai:L:two}
\\
& \text{tr}(\mbf X^\top\mbf X) = O(n), \quad
\liminf_{n \to \infty} n^{-1} \lambda_{\min}(\mbf X^\top\mbf X) > 0,
\nn \\
& \text{tr}(\mbf X_{(j)}^\top\mbf X_{(j)}) = O(N_j), \quad
\liminf_{n \to \infty} N_j^{-1} \lambda_{\min}(\mbf X_{(j)}^\top\mbf X_{(j)}) > 0,
\label{eq:prop:lai:lambda}
\\
& (\mbf L^\top\mbf L)^{-1} \mbf L^\top\bm\vep = O\l(\sqrt{\frac{\log(n)}{n}}\r), \quad
(\mbf X^\top\mbf X)^{-1} \mbf X^\top\bm\vep = O\l(\sqrt{\frac{\log(n)}{n}}\r),
\nn \\
& (\mbf X_{(j)}^\top\mbf X_{(j)})^{-1} \mbf X_{(j)}^\top\bm\vep_{(j)} = O\l(\sqrt{\frac{\log(n)}{N_j}}\r).
\label{eq:prop:lai:ols}
\end{align}
\end{prop}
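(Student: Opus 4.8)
The plan is to reduce every assertion to a statement about a ``generic'' interval $I \subseteq \{1,\ldots,n\}$ whose length $N = |I|$ diverges, and then transfer it to the segments $[\c_j+1,\c_{j+1}]$, $j=0,\ldots,m$, cut out by an arbitrary $\mc A\subseteq\wh\Cp_M$. This transfer is legitimate because, by Assumption~\ref{assum:est}, $\wh q_M$ is fixed, so each $\mc A\subseteq\wh\Cp_M$ yields at most $\wh q_M+1$ segments, and on $\mc M_n$ every such segment has length at least the (diverging) minimum spacing of $\wh\Cp_M$; it therefore suffices to prove the generic-interval statements \emph{uniformly} over all $I\subseteq\{1,\ldots,n\}$ with $|I|$ at least the minimum spacing. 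The three ingredients are: (a) the ergodic theorem for the stationary ergodic process underlying~\eqref{eq:ar} (Assumption~\ref{assum:ar}~\ref{assum:ar:two}); (b) martingale maximal inequalities of Bernstein/Freedman type, valid under the conditional $(2+\Delta)$-moment bound of Assumption~\ref{assum:ar}~\ref{assum:ar:three}, to control the noise terms uniformly over $I$; and (c) the fact that, since $a(z)$ has no roots on or inside the unit circle (Assumption~\ref{assum:ar}~\ref{assum:ar:one}), the spectral density of $\{Z_t\}$ is bounded away from zero, so for every $\ell\le p_{\max}$ the autocovariance matrix $\Gamma_\ell^Z = (\gamma_Z(i-j))_{1\le i,j\le\ell}$ of $\{Z_t\}$ is positive definite. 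The overall skeleton is the strong-consistency theory for stochastic regression of Lai and Wei (the source of Assumption~\ref{assum:ar}~\ref{assum:ar:one}--\ref{assum:ar:three}); the new work lies in carrying the mean shifts $f_t$ through that machinery and in making the bounds uniform over $\mc A\subseteq\wh\Cp_M$.

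For the trace bounds, every coordinate of a row $\mbf x_t$ of $\mbf L$ (or $\mbf L_{(j)}$) is a lagged observation $X_{t-i}$ with $i\ge 1$, so $\tr(\mbf L^\top\mbf L)\le r\sum_{t}X_t^2$, and likewise over segments; writing $X_t=f_t+Z_t$, one has $\sum_t f_t^2 = O(n)$ because $q$ is fixed and the jumps are $O(1)$ (Assumption~\ref{assum:cp:two} together with $a(1)\ne 0$), while $n^{-1}\sum_t Z_t^2\to\sigma_Z^2$ a.s.\ by the ergodic theorem, and both facts localise to any interval, which gives~\eqref{eq:prop:lai:L:one} and the trace part of~\eqref{eq:prop:lai:lambda}. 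For the minimum-eigenvalue bounds~\eqref{eq:prop:lai:L:two}--\eqref{eq:prop:lai:lambda}, expand $N^{-1}\sum_{t\in I}\mbf x_t\mbf x_t^\top$ (or, for $\mbf X_{(j)}$, its bordered analogue obtained by appending the all-ones column) using $X_{t-i}=f_{t-i}+Z_{t-i}$: the purely stochastic block converges, by the ergodic theorem and a maximal inequality forcing $N^{-1}\sum_{t\in I}(Z_{t-i}Z_{t-j}-\gamma_Z(i-j))\to 0$ uniformly over such $I$, to $\Gamma_r^Z$; the cross terms $N^{-1}\sum_{t\in I}f_{t-i}Z_{t-j}$ vanish uniformly for the same reason; and the deterministic block is a Gram matrix of constants. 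Hence $N^{-1}\sum_{t\in I}\mbf x_t\mbf x_t^\top\to\mathbf{G}_f+\Gamma_r^Z$ with $\mathbf{G}_f\succeq 0$, so $\lambda_{\min}\ge\lambda_{\min}(\Gamma_r^Z)>0$; for the bordered matrix in~\eqref{eq:prop:lai:lambda}, the Schur complement of the appended $1$ equals $\mathbf{G}_f+\Gamma_r^Z$ minus the rank-one outer product of the segment-averaged mean, which is again $\succeq\Gamma_r^Z$ (a within-segment ``variance plus autocovariance'' identity), so it is likewise uniformly positive definite. The $\liminf$ assertions follow by taking $N$ large, and the $r=0$ case is immediate since then $\mbf X_{(j)}=\mathbf 1$.

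The OLS-error bounds follow from these eigenvalue bounds. One has $\|(\mbf L^\top\mbf L)^{-1}\mbf L^\top\bm\vep\|\le\lambda_{\min}(\mbf L^\top\mbf L)^{-1}\|\mbf L^\top\bm\vep\|$, and each coordinate of $\mbf L^\top\bm\vep$ is $\sum_t X_{t-i}\vep_t$ with $i\ge 1$, a martingale transform because $X_{t-i}$ is $\mc F_{t-1}$-measurable while $\{\vep_t\}$ is a martingale difference sequence for $\{\mc F_t\}$ (Assumption~\ref{assum:ar}~\ref{assum:ar:two}); its predictable quadratic variation $\sum_t X_{t-i}^2\,\E(\vep_t^2\mid\mc F_{t-1})$ is $O(n)$ a.s.\ by the trace bound and the conditional-moment bound, so a martingale law of the iterated logarithm / Freedman-type maximal inequality gives $\max_{0\le a<b\le n}\bigl|\sum_{t=a+1}^{b}X_{t-i}\vep_t\bigr|=O(\sqrt{n\log n})$ a.s., and restricted to intervals of length $N$, $O(\sqrt{N\log n})$. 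Dividing by the eigenvalue lower bounds, which are of order $n$ (resp.\ $N_j$), yields the claimed $O(\sqrt{\log n/n})$ (resp.\ $O(\sqrt{\log n/N_j})$); the extra coordinate of $\mbf X_{(j)}^\top\bm\vep_{(j)}$ is $\sum_{t\in[\c_j+1,\c_{j+1}]}\vep_t$, controlled by the same maximal inequality (or, at the cost of replacing $\log n$ by $\omega_n^2$, directly on $\mc E_n$ of Assumption~\ref{assum:ar}~\ref{assum:ar:four}), and when $r=0$ the bound collapses to $N_j^{-1}\bigl|\sum_{t}\vep_t\bigr|$.

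I expect the main obstacle to be precisely the uniformity over $\mc A\subseteq\wh\Cp_M$ against the randomness of $\wh\Cp_M$: the plain ergodic theorem and the martingale strong law give convergence only for a fixed sequence of intervals, so to handle the (at most $O(n^{\wh q_M})$) possible segment endpoints simultaneously — all of length at least the minimum spacing, which diverges on $\mc M_n$ — one must upgrade to Bernstein/Freedman-type maximal inequalities over intervals and then union bound, leaning on the finiteness of $\wh q_M$. A second, lesser, point needing care is confirming that the mean-shift contribution $\mathbf{G}_f$ cannot make the limiting Gram matrix singular even when $\mc A$ badly under-specifies the change points; this is exactly where the positive definiteness of $\Gamma_r^Z$ and the Schur-complement identity are used.
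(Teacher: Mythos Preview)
Your approach is sound but takes a substantially more hands-on route than the paper. The paper's proof is a three-line appeal to existing stochastic-regression results of Lai and Wei: \eqref{eq:prop:lai:L:one}--\eqref{eq:prop:lai:L:two} are read off from Theorem~3~(ii) of \cite{lai1983} together with the finiteness of $\wh\Cp_M$; \eqref{eq:prop:lai:lambda} follows from Corollary~2 of \cite{lai1982b} and the trivial identity $\mbf R_{(j)}^\top\mbf R_{(j)} = N_j$; and \eqref{eq:prop:lai:ols} comes from Lemma~1 of \cite{lai1982}, which delivers $\|(\mbf L^\top\mbf L)^{-1/2}\mbf L^\top\bm\vep\| = O\bigl(\sqrt{\log\lambda_{\max}(\mbf L^\top\mbf L)}\bigr)$ directly for the projected noise, so that combining with the eigenvalue bounds yields the stated rates. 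You instead reconstruct these ingredients from the ergodic theorem and martingale maximal inequalities, carrying the mean shifts through explicitly via the $\mathbf G_f + \Gamma_r^Z$ decomposition and a Schur-complement argument for the bordered matrix. What the paper's route buys is that Lemma~1 of \cite{lai1982} bounds $\bm\vep^\top\bm\Pi_{\mbf L}\bm\vep$ in one stroke, in terms of $\log\lambda_{\max}$ only, so the uniformity-over-random-segments issue you flag as the main obstacle largely dissolves; what your route buys is self-containment and transparency about where $\Gamma_r^Z\succ 0$ enters and why the mean shifts cannot spoil the minimum eigenvalue, at the cost of having to justify interval-uniform Bernstein/Freedman-type bounds under only a conditional $(2+\Delta)$-moment hypothesis (Freedman itself needs bounded increments, so a truncation or Fuk--Nagaev-type argument would still be owed there).
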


\begin{proof}
The results in~\eqref{eq:prop:lai:L:one}--\eqref{eq:prop:lai:L:two} follow from Theorem~3~(ii) of \cite{lai1983}
and the finiteness of $\wh\Cp_M$.
By Corollary~2 of \cite{lai1982b}, \eqref{eq:prop:lai:lambda} follow from that
$\tr(\mbf R^\top\mbf R) = n$ and $\mbf R_{(j)}^\top\mbf R_{(j)} = N_j$.
By Lemma~1 of \cite{lai1982}, we have
\begin{align*}
\l\Vert  (\mbf L^\top\mbf L)^{-1/2} \mbf L^\top\bm\vep \r\Vert
&= O\l(\sqrt{\log(\lambda_{\max}(\mbf L^\top\mbf L))}\r) = O(\sqrt{\log(n)}) \quad \text{a.s.,}
\\
\l\Vert (\mbf X^\top\mbf X)^{-1/2} \mbf X^\top\bm\vep \r\Vert
&= O\l(\sqrt{\log(\lambda_{\max}(\mbf X^\top\mbf X))}\r) = O(\sqrt{\log(n)}) \quad \text{a.s.,}
\\
\l\Vert (\mbf X_{(j)}^\top\mbf X_{(j)})^{-1/2} \mbf X_{(j)}^\top\bm\vep_{(j)} \r\Vert
&= O\l(\sqrt{\log(\lambda_{\max}(\mbf X_{(j)}^\top\mbf X_{(j)}))}\r) = O(\sqrt{\log(n)}) \quad \text{a.s.}
\end{align*}
which, together with~\eqref{eq:prop:lai:L:one} and~\eqref{eq:prop:lai:lambda}, 
leads to~\eqref{eq:prop:lai:ols}.
\end{proof}

\begin{lem}[Lemma~3.1.2 of \cite{csorgo1997}]
\label{lem:1}
For any $\mbf X = [ \mbf L : \mbf R]$, the OLS estimator
$\wh{\bm\beta} = (\mbf X^\top\mbf X)^{-1}\mbf X^\top\mbf Y = (\wh{\bm\alpha}^\top, \wh{\bm\mu}^\top)^\top$ satisfies
$\wh{\bm\alpha} = (\mbf L^\top\mbf L)^{-1}\mbf L^\top(\mbf Y - \mbf R\wh{\bm\mu})$ and
$\wh{\bm\mu} = \{\mbf R^\top(\mbf I - \bm\Pi_{\mbf L}) \mbf R\}^{-1}
\mbf R^\top(\mbf I - \bm\Pi_{\mbf L})\mbf Y$.
\end{lem}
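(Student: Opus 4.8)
The plan is to derive both identities directly from the normal equations $\mbf X^\top\mbf X\,\wh{\bm\beta} = \mbf X^\top\mbf Y$, which hold by the very definition of $\wh{\bm\beta} = (\mbf X^\top\mbf X)^{-1}\mbf X^\top\mbf Y$. Partitioning these equations according to the block structure $\mbf X = [\mbf L: \mbf R]$ and $\wh{\bm\beta} = (\wh{\bm\alpha}^\top, \wh{\bm\mu}^\top)^\top$ yields the coupled system
\[
\mbf L^\top\mbf L\,\wh{\bm\alpha} + \mbf L^\top\mbf R\,\wh{\bm\mu} = \mbf L^\top\mbf Y, \qquad
\mbf R^\top\mbf L\,\wh{\bm\alpha} + \mbf R^\top\mbf R\,\wh{\bm\mu} = \mbf R^\top\mbf Y.
\]
This is the only structural input needed; everything else is block elimination.

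First I would solve the first equation for $\wh{\bm\alpha}$, using that $\mbf L^\top\mbf L$ is invertible whenever $\mbf L$ has full column rank. In all the settings where this lemma is invoked, this holds: globally via the eigenvalue bound $\liminf_{n}n^{-1}\lambda_{\min}(\mbf L^\top\mbf L) > 0$ of Proposition~\ref{prop:lai}, and segment-wise via the analogous bound on $\mbf X_{(j)}^\top\mbf X_{(j)}$. This step already gives the first claimed identity, $\wh{\bm\alpha} = (\mbf L^\top\mbf L)^{-1}\mbf L^\top(\mbf Y - \mbf R\wh{\bm\mu})$.

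Next I would substitute this expression for $\wh{\bm\alpha}$ back into the second block equation. Writing $\bm\Pi_{\mbf L} = \mbf L(\mbf L^\top\mbf L)^{-1}\mbf L^\top$ as in the paper's notation, the terms involving $\mbf R^\top\mbf L(\mbf L^\top\mbf L)^{-1}\mbf L^\top(\cdot)$ combine with $\mbf R^\top\mbf R\,\wh{\bm\mu}$ and $\mbf R^\top\mbf Y$ so that the equation collapses to $\mbf R^\top(\mbf I - \bm\Pi_{\mbf L})\mbf R\,\wh{\bm\mu} = \mbf R^\top(\mbf I - \bm\Pi_{\mbf L})\mbf Y$. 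Left-multiplying by $\{\mbf R^\top(\mbf I - \bm\Pi_{\mbf L})\mbf R\}^{-1}$ then delivers the second claimed identity.

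The only point requiring care — and the nearest thing to an obstacle — is justifying that the Schur complement $\mbf R^\top(\mbf I - \bm\Pi_{\mbf L})\mbf R$ is invertible, equivalently that the columns of $\mbf R$ stay linearly independent after the column space of $\mbf L$ is projected out. This is precisely the condition that $\mbf X = [\mbf L: \mbf R]$ has full column rank, i.e.\ that $\mbf X^\top\mbf X$ and hence $\wh{\bm\beta}$ are well defined to begin with; it is the standing hypothesis under which the lemma is stated, and in our applications it is again supplied by the eigenvalue lower bounds of Proposition~\ref{prop:lai}. With invertibility granted, the computation is the standard Frisch--Waugh--Lovell argument and requires no probabilistic input.
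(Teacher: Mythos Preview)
Your derivation is correct and is exactly the standard Frisch--Waugh--Lovell block-elimination argument. The paper itself does not supply a proof of this lemma: it is stated as Lemma~3.1.2 of \cite{csorgo1997} and simply cited, so there is nothing to compare against beyond noting that your normal-equations route is the canonical one.
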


\begin{lem}
\label{lem:two}
For some $\mbf R = \mbf R(\mc A)$
constructed with a set $\mc A = \{\c_1, \ldots, \c_{m}\} \subset \{1, \ldots, n\}$
with $\c_1 < \ldots < \c_m$,
we denote by $\mbf R_{- j}$, for any $1 \le j \le m$,
an $n \times m$-matrix 
formed by merging the $j$-th and the $(j + 1)$-th columns of $\mbf R$
via summing them up, while the rest of the columns of $\mbf R$ are unchanged.
Then,
\begin{align}
& \Vert (\mbf I - \bm\Pi_{\mbf R_{- j}}) \mbf U \Vert^2 - 
\Vert (\mbf I - \bm\Pi_{\mbf R}) \mbf U \Vert^2
= \vert \mc C_{\c_{j - 1}, \c_j, \c_{j + 1}} (\mbf U) \vert^2
\end{align}
for any $\mbf U = (U_1, \ldots, U_{n - (m + 1)r})^\top$, where 
\begin{align*}
\mc C_{\c_{j - 1}, \c_j, \c_{j + 1}} (\mbf U)
:=& \sqrt{\frac{(\c_{j + 1} - \c_j)(\c_j - \c_{j - 1})}{\c_{j + 1} - \c_{j - 1}}} \times
\\
& \qquad \qquad \l(\frac{1}{\c_j - \c_{j - 1}} \sum_{t = \c_{j - 1} + 1}^{\c_j} U_t 
- \frac{1}{\c_{j + 1} - \c_j} \sum_{t = \c_j + 1}^{\c_{j + 1}} U_t \r).
\end{align*}
\end{lem}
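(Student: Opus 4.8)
The plan is to reduce the claimed identity to the standard fact that merging two adjacent level parameters into one corresponds to projecting out one degree of freedom, and that the resulting drop in the residual sum of squares is exactly a squared CUSUM. First I would observe that the column space of $\mbf R_{-j}$ is a codimension-one subspace of the column space of $\mbf R$: indeed, $\mbf R_{-j}$ is obtained from $\mbf R$ by the linear map that sums columns $j$ and $j+1$, so $\text{col}(\mbf R_{-j}) \subset \text{col}(\mbf R)$, and since $\mbf R$ has full column rank $m+1$ (its columns are indicator vectors of disjoint, nonempty blocks of rows) while $\mbf R_{-j}$ has rank $m$, the inclusion is proper with codimension exactly one. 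Consequently $\bm\Pi_{\mbf R} - \bm\Pi_{\mbf R_{-j}}$ is the orthogonal projection onto the one-dimensional orthogonal complement of $\text{col}(\mbf R_{-j})$ inside $\text{col}(\mbf R)$, so
\begin{align*}
\Vert (\mbf I - \bm\Pi_{\mbf R_{-j}})\mbf U \Vert^2 - \Vert (\mbf I - \bm\Pi_{\mbf R})\mbf U \Vert^2 = \Vert (\bm\Pi_{\mbf R} - \bm\Pi_{\mbf R_{-j}}) \mbf U \Vert^2 = \langle \mbf v, \mbf U \rangle^2 / \Vert \mbf v \Vert^2,
\end{align*}
where $\mbf v$ is any nonzero vector spanning that one-dimensional complement.

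The second step is to identify $\mbf v$ explicitly and compute the right-hand side. A natural choice is the contrast vector that is constant $+(\c_j - \c_{j-1})^{-1}$ on the rows indexed $\c_{j-1}+1, \ldots, \c_j$, constant $-(\c_{j+1} - \c_j)^{-1}$ on the rows indexed $\c_j + 1, \ldots, \c_{j+1}$, and zero elsewhere (here I am using the indexing of $\mbf U$, i.e.\ after removing the $(m+1)r$ rows occupied by the AR part, or equivalently when $r=0$; the block boundaries $\c_{j-1}, \c_j, \c_{j+1}$ carry over). This $\mbf v$ lies in $\text{col}(\mbf R)$ since it is supported on the union of two adjacent blocks and is piecewise constant on each, and it is orthogonal to $\text{col}(\mbf R_{-j})$ because $\mbf 1^\top \mbf v = 0$ over the merged block and $\mbf v$ vanishes on every other block; hence $\mbf v$ spans the required complement. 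A direct computation gives $\Vert \mbf v \Vert^2 = (\c_j - \c_{j-1})^{-1} + (\c_{j+1} - \c_j)^{-1} = (\c_{j+1} - \c_{j-1})/\{(\c_j - \c_{j-1})(\c_{j+1} - \c_j)\}$, and $\langle \mbf v, \mbf U \rangle = (\c_j - \c_{j-1})^{-1} \sum_{t=\c_{j-1}+1}^{\c_j} U_t - (\c_{j+1} - \c_j)^{-1} \sum_{t=\c_j+1}^{\c_{j+1}} U_t$. Substituting into $\langle \mbf v, \mbf U\rangle^2 / \Vert \mbf v\Vert^2$ yields precisely $\vert \mc C_{\c_{j-1}, \c_j, \c_{j+1}}(\mbf U)\vert^2$ with the scaling factor $\sqrt{(\c_{j+1}-\c_j)(\c_j - \c_{j-1})/(\c_{j+1}-\c_{j-1})}$ as stated.

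I expect the only real subtlety to be the bookkeeping of which rows are which once the AR block $\mbf L$ is present and the matrices $\mbf R$, $\mbf R_{-j}$ live in $\R^{n \times \bullet}$ while $\mbf U \in \R^{n - (m+1)r}$; one must be careful that the statement as written contrasts blocks delimited by $\c_{j-1}, \c_j, \c_{j+1}$ using the $U_t$ directly, so the cleanest route is to prove the identity in the form where $\mbf R$ has exactly $n$ rows matching the length of $\mbf U$ (i.e.\ treat $r=0$, or reindex), and note the general case is identical after shifting indices. The projection-geometry argument itself is routine linear algebra; the main thing to get right is that $\mbf R$ and $\mbf R_{-j}$ have the claimed ranks and that $\mbf v$ is indeed orthogonal to $\text{col}(\mbf R_{-j})$, both of which follow immediately from the disjoint-block structure of the indicator columns.
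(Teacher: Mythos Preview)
Your proposal is correct and essentially the same approach as the paper's. The paper uses the one-column update identity
\[
\Vert (\mbf I - \bm\Pi_{\mbf R})\mbf U\Vert^2
=\mbf U^\top(\mbf I - \bm\Pi_{\mbf R_{-j}})\mbf U
-\frac{(\mbf U^\top(\mbf I - \bm\Pi_{\mbf R_{-j}})\mbf R_j)^2}{\mbf R_j^\top(\mbf I - \bm\Pi_{\mbf R_{-j}})\mbf R_j}
\]
with $\mbf R_j$ the $(j{+}1)$-th column of $\mbf R$, and then computes $(\mbf I-\bm\Pi_{\mbf R_{-j}})\mbf R_j$ entry by entry; your contrast vector $\mbf v$ is exactly a scalar multiple of $(\mbf I-\bm\Pi_{\mbf R_{-j}})\mbf R_j$, so the two routes coincide. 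Your version is slightly more direct in that you write down $\mbf v$ and verify it spans the complement, bypassing the explicit matrix products, but the underlying argument is identical.
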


\begin{proof}
Denote the $(j + 1)$-th column of $\mbf R$ by $\mbf R_j$. 
Then, by simple calculations, we have
\begin{align*}
\Vert (\mbf I - \bm\Pi_{\mbf R}) \mbf U \Vert^2
= \mbf U^\top(\mbf I - \bm\Pi_{\mbf R_{- j}}) \mbf U
- 
\frac{(\mbf U^\top(\mbf I - \bm\Pi_{\mbf R_{-j}})\mbf R_j)^2}
{\mbf R_j^\top(\mbf I - \bm\Pi_{\mbf R_{-j}})\mbf R_j}.
\end{align*}
Also by construction, 
\begin{align*}
& \mbf R_{-j}^\top\mbf R_{j} = (\underbrace{0, \ldots, 0}_{j - 1}, \c_{j + 1} - \c_j, 0, \ldots, 0)^\top,
\\
& (\mbf R_{-j}^\top\mbf R_{-j})^{-1} = \text{diag}\l(\frac{1}{\c_1}, \ldots, \frac{1}{\c_{j - 1} - \c_{j - 2}},
\frac{1}{\c_{j + 1} - \c_{j - 1} }, \frac{1}{\c_{j + 2} - \c_{j + 1}}, \ldots, \frac{1}{n - \c_m}\r).
\end{align*}
Hence,
\begin{align*}
& [\mbf R_{-j} (\mbf R_{-j}^\top\mbf R_{-j})^{-1} \mbf R_{-j}^\top\mbf R_{j}]_i
= \l\{\begin{array}{ll}
\frac{\c_{j + 1} - \c_j}{\c_{j + 1} - \c_{j - 1}} & 
\text{for } \c_{j - 1} + 1 \le i \le \c_{j + 1} , \\
0 & \text{otherwise,}
\end{array}\r. 
\\
& [\mbf R_{j} - \mbf R_{-j} (\mbf R_{-j}^\top\mbf R_{-j})^{-1} \mbf R_{-j}^\top\mbf R_{j}]_i = \l\{\begin{array}{ll}
- \frac{\c_{j + 1} - \c_j}{\c_{j + 1} - \c_{j - 1}} & \text{for } \c_{j - 1} + 1 \le i \le \c_{j}, \\
\frac{\c_j - \c_{j - 1}}{\c_{j + 1} - \c_{j - 1} } & \text{for } \c_{j}  + 1 \le i  \le \c_{j + 1}, \\
0 & \text{otherwise.}
\end{array}\r.
\end{align*}
Therefore, 
\begin{align*}
& \mbf R_{j}^\top(\mbf I - \bm\Pi_{\mbf R_{-j}})\mbf R_{j} 
= \frac{(\c_j - \c_{j - 1})(\c_{j + 1} - \c_j)}{\c_{j + 1} - \c_{j - 1}},
\\
& \mbf U^\top(\mbf I - \bm\Pi_{\mbf R_{-j}})\mbf R_{j} 
= \frac{(\c_j - \c_{j - 1})(\c_{j + 1} - \c_j)}{\c_{j + 1} - \c_{j - 1} } 
\l(\frac{1}{\c_{j + 1} - \c_j} \sum_{t = \c_j + 1}^{\c_{j + 1}} U_t -  
\frac{1}{\c_j - \c_{j - 1}} \sum_{t = \c_{j - 1} + 1}^{\c_j} U_t\r),
\end{align*}
which concludes the proof.
\end{proof}

\subsubsection{Proof of Theorem~\ref{thm:sc}}

Throughout the proofs, $C_0, C_1, \ldots$ denote some positive constants.
In what follows, we operate in $\mc E_n \cap \mc M_n$, and
all big-$O$ notations imply that they hold a.s. due to Proposition~\ref{prop:lai}.

We briefly sketch the proof, which proceeds in four steps (i)--(iv) below.
We first suppose that Assumption~\ref{assum:est} holds with $M = 1$,
and also that $p$ is known.
Then, a single iteration of the gSa algorithm in Section~\ref{app:gsc}
boils down to choosing between $\wh\Cp_0 = \emptyset$ and $\wh\Cp_1$:
If $\sc(\{X_t\}_{t = 1}^n, \wh{\Cp}_1, p) < \sc_0(\{X_t\}_{t = 1}^n, \wh{\bm\alpha}(p))$, we favour a change point model;
if not, we conclude that there is no change point in the data.
In (i), when $q = 0$, we show that 
$\mbf R \wh{\bm\mu} \approx \mbf 1 \mu^\circ_0 \approx 
\bm\Pi_{\mbf 1}(\mbf Y - \mbf L\wh{\bm\alpha})$
with $\mu^\circ_0 = (1 - \sum_{i = 1}^p a_i) f_0$ representing the time-invariant overall level,
and therefore 
$\Vert \mbf Y - \mbf X\wh{\bm\beta} \Vert^2
\approx \Vert (\mbf I - \bm\Pi_{\mbf 1}) (\mbf Y - \mbf L\wh{\bm\alpha}) \Vert^2$
which leads to
$\sc_0(\{X_t\}_{t = 1}^n, \wh{\bm\alpha}(p)) < \sc(\{X_t\}_{t = 1}^n, \wh{\Cp}_1, p)$ 
under Assumption~\ref{assum:pen}.
In (ii), when $q \ge 1$, we show that
\begin{align*}
\Vert (\mbf I - \bm\Pi_{\mbf 1}) (\mbf Y - \mbf L\wh{\bm\alpha}) \Vert^2
- \Vert \mbf Y - \mbf X\wh{\bm\beta} \Vert^2
\ge Cq \min_{1 \le j \le q} d_j^2 \delta_j
\gg q \xi_n
\end{align*}
for some fixed constant $C > 0$ and thus
$\sc_0(\{X_t\}_{t = 1}^n, \wh{\bm\alpha}(p)) > \sc(\{X_t\}_{t = 1}^n, \wh{\Cp}_1, p)$,
provided that $\wh\Cp_1$ meets~\eqref{eq:best:model}.
In (iii), we show the consistency of the proposed order selection scheme.
For the general case where $M > 1$, in (iv), 
we can repeatedly apply the above arguments
for each call of Step~1 of the gSa algorithm:
Under Assumption~\ref{assum:est},
when $l > l^*$, any $\wh\cp_{l, j} \notin \wh\Cp_{l^*}$ are spurious estimators
and thus we have the gSa algorithm proceed to examine $\wh\Cp_{l - 1}$;
when $l = l^*$, any $\wh\cp_{l^*, j} \notin \wh\Cp_{l^* - 1}$ are detecting those change points
undetected in $\wh\Cp_{l^* - 1}$ and thus the gSa algorithm returns $\wh\Cp_{l^*}$.

As outlined above, in the following (i)--(iii), we only consider the case of $M = 1$
and consequently drop the subscript `$1$' from $\wh\Cp_1$ and $\wh\cp_{1, j}$
where there is no confusion.

For given $\wh\Cp$,
recall that $\mbf X = \mbf X(\wh\Cp, p) = [\mbf L: \mbf R]$
and $N_j = \wh\cp_{j + 1} - \wh\cp_j$.
For $t = \cp_j + 1, \ldots, \cp_j + p$, we have
\begin{align}
\label{eq:boundary}
\l\vert [\bm\nu^\circ - \mbf R^\circ \bm\mu^\circ]_t \r\vert 
\le \vert d_j \vert \; \max_{1 \le i \le p} \l\vert \sum_{i' = i}^p a_{i'} \r\vert
\le \vert d_j \vert,
\end{align}
for all $1 \le j \le q$,
while $[\bm\nu^\circ - \mbf R^\circ \bm\mu^\circ]_t = 0$ elsewhere.

\medskip
\noindent \underline{\it (i) When $q = 0$.} We first note that 
\begin{align*}
\wh{\bm\beta} = (\mbf X^\top\mbf X)^{-1} \mbf X^\top
\l( \mbf L \bm\alpha^\circ + \mu^\circ_0 \mbf 1 + \bm\vep\r)
\end{align*}
such that by Proposition~\ref{prop:lai}, we have
\begin{align}
\label{eq:null:consist}
\l\Vert \wh{\bm\beta} - \underbrace{\bmx \bm\alpha^{\circ} \\ 
\mu^\circ_0 \mbf 1_{\wh q + 1} \emx}_{\bm\beta^\circ(\wh q)} \r\Vert
= \l\Vert  (\mbf X^\top\mbf X)^{-1} \mbf X^\top\bm\vep \r\Vert = O\l(\sqrt{\frac{\log(n)}{n}}\r).
\end{align}

We decompose the residual sum of squares as
\begin{align*}
& \Vert \mbf Y - \mbf X\wh{\bm\beta} \Vert^2
= \Vert \bm\vep \Vert^2 + \Vert \mbf X(\wh{\bm\beta} - \bm\beta^\circ(\wh q)) \Vert^2 
- 2 \bm\vep^\top \mbf X(\wh{\bm\beta} - \bm\beta^\circ(\wh q))
=: \Vert \bm\vep \Vert^2 + \mc R_{11} + \mc R_{12}.
\end{align*}
Invoking Proposition~\ref{prop:lai} and~\eqref{eq:null:consist},
\begin{align*}
\mc R_{11} \le \Vert \mbf X \Vert^2 \; \Vert \wh{\bm\beta} - \bm\beta^\circ(\wh q) \Vert^2 = 
O\l(\frac{n\log(n)}{n}\r) = O(\log(n)) \quad \text{a.s.},
\end{align*}
and
\begin{align*}
\vert \mc R_{12} \vert \le \Vert (\mbf X^\top\mbf X)^{-1}\mbf X^\top\bm\vep \Vert \;
\Vert \mbf X^\top\mbf X \Vert \; \Vert \wh{\bm\beta} - \bm\beta^\circ(\wh q) \Vert
= O\l(\sqrt{n\log(n)} \cdot \sqrt{\frac{\log(n)}{n}}\r) = O\l(\log(n)\r). 
\end{align*}
Putting together the bounds on $\mc R_{11}$--$\mc R_{12}$, we conclude that
\begin{align}
\label{eq:pf:thm:sc:h0:one}
\Vert \mbf Y - \mbf X\wh{\bm\beta} \Vert^2
= \Vert \bm\vep \Vert^2 + O(\log(n)). 
\end{align}
Next, note that
\begin{align*}
\Vert (\mbf I - \bm\Pi_{\mbf 1})(\mbf Y - \mbf L\wh{\bm\alpha}) \Vert^2
&= \Vert \bm\vep \Vert^2 - \bm\vep^\top \bm\Pi_{\mbf 1} \bm\vep
+ \Vert (\mbf I - \bm\Pi_{\mbf 1})\mbf L(\wh{\bm\alpha} - \bm\alpha^\circ) \Vert^2
- 2 \bm\vep^\top (\mbf I - \bm\Pi_{\mbf 1}) \mbf L(\wh{\bm\alpha} - \bm\alpha^\circ)
\\
&=: \Vert \bm\vep \Vert^2 + \mc R_{21} + \mc R_{22} + \mc R_{23}.
\end{align*}
By the arguments similar to those adopted in Proposition~\ref{prop:lai}
and Lemma~1 of \cite{lai1982b}, we have
$\vert \mc R_{21} \vert = O(\log(n))$. 
Also, by Proposition~\ref{prop:lai} and~\eqref{eq:null:consist},
$\mc R_{22} \le \Vert \mbf L(\wh{\bm\alpha} - \bm\alpha^\circ) \Vert^2= O(\log(n))$.
Next,
\begin{align*}
\vert \mc R_{23} \vert \le 2 \l\vert \bm\vep^\top \mbf L(\wh{\bm\alpha} - \bm\alpha^\circ) \r\vert
+  2 \l\vert \bm\vep^\top\bm\Pi_{\mbf 1} \mbf L(\wh{\bm\alpha} - \bm\alpha^\circ) \r\vert
\end{align*}
where the first term is bounded by
\begin{align*}
2 \Vert (\mbf L^\top\mbf L)^{-1} \mbf L^\top\bm\vep \Vert \;
\Vert \mbf L^\top\mbf L \Vert \; \Vert \wh{\bm\alpha} - \bm\alpha^\circ \Vert
= O(\log(n))
\end{align*}
due to Proposition~\ref{prop:lai} and Lemma~1 of \cite{lai1982b},
and the second term is bounded by the bound on the first term and $\mc R_{21}$ as $O(\log(n))$. 
Therefore, 
\begin{align}
\label{eq:pf:thm:sc:h0:two}
\Vert (\mbf I - \bm\Pi_{\mbf 1})(\mbf Y - \mbf L\wh{\bm\alpha}) \Vert^2
= \Vert \bm\vep \Vert^2 + O(\log(n)). 
\end{align}
Combining \eqref{eq:pf:thm:sc:h0:one} and~\eqref{eq:pf:thm:sc:h0:two} 
with Assumption~\ref{assum:ar}~\ref{assum:ar:two}--\ref{assum:ar:three},
and noting that $\log(1 + x) \le x$ for all $x \ge 0$,
\begin{align*}
& \sc_0(\{X_t\}_{t = 1}^n, \wh{\bm\alpha}(p)) - \sc(\{X_t\}_{t = 1}^n, \wh{\Cp}, p) 
\\
&= \frac{n}{2} 
\log\l(1 + \frac{\Vert (\mbf I - \bm\Pi_{\mbf 1})(\mbf Y - \mbf L\wh{\bm\alpha}) \Vert^2
- \Vert \mbf Y - \mbf X\wh{\bm\beta} \Vert^2}{\Vert \mbf Y - \mbf X\wh{\bm\beta} \Vert^2}\r) - \wh q \xi_n
= O(\log(n)) - \wh q \xi_n < 0 
\end{align*}
for $n$ large enough, due to Assumption~\ref{assum:pen}.

\medskip
\noindent \underline{\it (ii) When $q \ge 1$.}
Recall that in $\mc M_n$, we have $\wh q = q$. 
Below we use that by Proposition~\ref{prop:lai}, 
\begin{align}
\tr(\mbf L^\top \mbf R) = O(n) \quad \text{and} \quad
[\mbf L_{(j)}^\top \mbf 1]_i = O(N_j) \text{ for } i = 1, \ldots, p, \, j = 0, \ldots, q,
\label{eq:L:bound:one} 
\end{align}
where $\bar{f} = \max_{0 \le j \le q} \vert f_{\cp_j + 1} \vert$.
We first establish the consistency of $\wh{\bm\mu}$ in estimating $\bm\mu^\circ$.

Applying Lemma~\ref{lem:1}, we write
\begin{align*}
\wh{\bm\mu} -\bm\mu^\circ =& 
(\mbf R^\top(\mbf I - \bm\Pi_{\mbf L})\mbf R)^{-1} 
\mbf R^\top(\mbf I - \bm\Pi_{\mbf L})(\bm\nu^\circ - \mbf R \bm\mu^\circ) +
\\
& (\mbf R^\top(\mbf I - \bm\Pi_{\mbf L})\mbf R)^{-1} 
\mbf R^\top(\mbf I - \bm\Pi_{\mbf L}) \bm\vep
=: \mc R_{31} + \mc R_{32}.
\end{align*}
Since $(\mbf R^\top(\mbf I - \bm\Pi_{\mbf L})\mbf R)^{-1}$ 
is a sub-matrix of $(\mbf X^\top\mbf X)^{-1}$,
we have $\lambda_{\max}((\mbf R^\top(\mbf I - \bm\Pi_{\mbf L})\mbf R)^{-1})
\le (\lambda_{\min}(\mbf X^\top\mbf X))^{-1}$
\citep[Theorem~4.2.2]{horn1985}) and thus
$\liminf_{n \to \infty} n^{-1} \lambda_{\min}(\mbf R^\top(\mbf I - \bm\Pi_{\mbf L})\mbf R) > 0$ 
by Proposition~\ref{prop:lai}.
Also, since $\tr(\mbf R^\top(\mbf I - \bm\Pi_{\mbf L})\mbf R) \le n$ trivially, we obtain
$\vert \mc R_{32} \vert = O\l(\sqrt{\log(n)/n}\r)$ 
adopting the same arguments used in the proof of~\eqref{eq:prop:lai:ols}.
Next, by~\eqref{eq:boundary} and since
\begin{align*}
[\mbf R^\circ \bm\mu^\circ - \mbf R \bm\mu^\circ]_t
= \l\{\begin{array}{ll}
d_j & \text{for } \cp_j + 1 \le t \le \wh\cp_j, \\
- d_j & \text{for } \wh\cp_j + 1 \le t \le \cp_j, \\
\end{array}\r.
\text{ for } j = 1, \ldots, q
\end{align*}
while $[\mbf R^\circ \bm\mu^\circ - \mbf R \bm\mu^\circ]_t = 0$ otherwise,
we obtain
\begin{align}
\label{eq:boundary:two}
\Vert \bm\nu^\circ - \mbf R \bm\mu^\circ \Vert^2
\le 2 \Vert \bm\nu^\circ - \mbf R^\circ \bm\mu^\circ \Vert^2
+ 2 \Vert \mbf R^\circ \bm\mu^\circ - \mbf R \bm\mu^\circ \Vert^2
\le 2 \sum_{j = 1}^q d_j^2 \cdot (p + d_j^{-2} \rho_n)
= O(q\rho_n)
\end{align}
and therefore 
$\vert \mc R_{31} \vert^2 = O\l(q \rho_n/n\r)$. 
Putting together the bounds on $\mc R_{31}$--$\mc R_{32}$, we obtain
\begin{align}
\vert \wh{\bm\mu} - \bm\mu^\circ \vert &= O\l(\sqrt{\frac{\log(n) \vee q \rho_n}{n}}\r). 
\label{eq:alt:consist:one}
\end{align}

Also, note that by Lemma~\ref{lem:1},
\begin{align*}
\wh{\bm\alpha} - \bm\alpha^\circ =
(\mbf L^\top\mbf L)^{-1}\mbf L^\top\l\{\bm\vep + 
(\bm\nu^\circ - \mbf R \bm\mu^\circ)
+ \mbf R (\bm\mu^\circ - \wh{\bm\mu}) \r\}.
\end{align*}
Adopting Proposition~\ref{prop:lai}, \eqref{eq:L:bound:one},
\eqref{eq:boundary:two} and \eqref{eq:alt:consist:one}, we have
\begin{align}
\Vert \wh{\bm\alpha}- \bm\alpha^\circ \Vert &= O\l(\sqrt{\frac{\log(n) \vee q\rho_n}{n}}\r). 
\label{eq:alt:consist:two}
\end{align}

Next, we consider
\begin{align*}
\Vert \mbf Y - \mbf X\wh{\bm\beta} \Vert^2 
=& \Vert \mbf L(\wh{\bm\alpha} - \bm\alpha^\circ) +
(\mbf R\wh{\bm\mu} - \bm\nu^\circ) - \bm\vep \Vert^2
\\
=& \Vert \bm\vep \Vert^2 + \Vert \mbf L(\wh{\bm\alpha} - \bm\alpha^\circ) \Vert^2 + 
\Vert \mbf R\wh{\bm\mu} - \bm\nu^\circ \Vert^2
+ 2(\wh{\bm\alpha} - \bm\alpha^\circ)^\top \mbf L^\top (\mbf R\wh{\bm\mu} - \bm\nu^\circ)
\\
& - 2 \bm\vep^\top \mbf L(\wh{\bm\alpha} - \bm\alpha^\circ)
- 2 \bm\vep^\top (\mbf R\wh{\bm\mu} - \bm\nu^\circ)
=: \Vert \bm\vep \Vert^2 + \mc R_{41} +  \mc R_{42} +  \mc R_{43} +  \mc R_{44} +  \mc R_{45}.
\end{align*}
By Proposition~\ref{prop:lai} and~\eqref{eq:alt:consist:two},
\begin{align*}
\mc R_{41} = O\l(n \cdot \frac{\log(n) \vee q\rho_n}{n}\r) = O\l(\log(n) \vee q\rho_n\r). 
\end{align*}
Also, due to~\eqref{eq:boundary:two} and~\eqref{eq:alt:consist:one},
\begin{align}
\mc R_{42} &\le 2\Vert \mbf R(\wh{\bm\mu} - \bm\mu^\circ) \Vert^2 +
2\Vert \mbf R\bm\mu^\circ  - \bm\nu^\circ \Vert^2
= O\l(\log(n) \vee q\rho_n\r) 
\label{eq:r42}
\end{align}
and we also obtain $\mc R_{43} = O\l(\log(n) \vee q\rho_n\r)$. 
By Proposition~\ref{prop:lai} and~\eqref{eq:alt:consist:two},
\begin{align*}
\mc R_{44} &\le \Vert (\mbf L^\top\mbf L)^{-1} \mbf L^\top\bm\vep \Vert \;
\Vert \mbf L^\top\mbf L \Vert \; \Vert \wh{\bm\alpha} - \bm\alpha^\circ \Vert 
= O\l(\sqrt{\log(n)(\log(n) \vee q\rho_n)}\r) = O\l(\log(n) \vee \sqrt{q} \rho_n\r), 
\end{align*}
while with~\eqref{eq:boundary}, \eqref{eq:alt:consist:one},
Assumption~\ref{assum:ar} and Chebyshev's inequality,
\begin{align*}
\vert \mc R_{45} \vert & \le 
2 \vert \bm\vep^\top \mbf R(\wh{\bm\mu} - \bm\mu^\circ) \vert
+ 2 \vert \bm\vep^\top (\mbf R\bm\mu^\circ - \mbf R^\circ\bm\mu^\circ) \vert
+ 2 \vert \bm\vep^\top (\mbf R^\circ\bm\mu^\circ - \bm\nu^\circ) \vert
\\
& = O\l(\sqrt{n \log(n)} \cdot \sqrt{\frac{\log(n) \vee q \rho_n}{n}} 
+ \sum_{j = 1}^{q} |d_j| \cdot \sqrt{d_j^{-2}\rho_n}\omega_n 
+ p \sqrt{\sum_{j = 1}^q |d_j|^2}\r)
\\
& = O\l( \log(n) \vee q(\rho_n \vee \omega_n^2)\r) 
\end{align*}
on $\mc E_n$.
Combining the bounds on $\mc R_{41}$--$\mc R_{45}$, we obtain
\begin{align}
\label{eq:alt:sc}
\Vert \mbf Y - \mbf X\wh{\bm\beta} \Vert^2 = \Vert \bm\vep \Vert^2 
+ O\l(\log(n) \vee q\l(\rho_n \vee \omega_n^2\r)\r). 
\end{align}

Next, note that
\begin{align*}
& \Vert (\mbf I - \bm\Pi_{\mbf 1})(\mbf Y - \mbf L\wh{\bm\alpha}) \Vert^2
- \Vert \mbf Y - \mbf X\wh{\bm\beta} \Vert^2
= \l( \Vert (\mbf I - \bm\Pi_{\mbf 1})(\mbf Y - \mbf L\wh{\bm\alpha}) \Vert^2
- \Vert (\mbf I - \bm\Pi_{\mbf R})(\mbf Y - \mbf L\wh{\bm\alpha}) \Vert^2 \r)
\\
& + 
\l( \Vert (\mbf I - \bm\Pi_{\mbf R})(\mbf Y - \mbf L\wh{\bm\alpha}) \Vert^2
- \Vert \mbf Y - \mbf X\wh{\bm\beta} \Vert^2 \r)
=: \mc R_{51} + \mc R_{52}.
\end{align*}
Repeatedly invoking Lemma~\ref{lem:two}, we have
\begin{align}
\mc R_{51} =& \Vert (\mbf I - \bm\Pi_{\mbf 1})(\mbf Y - \mbf L\wh{\bm\alpha}) \Vert^2
- \Vert (\mbf I - \bm\Pi_{\mbf R_{-\mc I_1}})(\mbf Y - \mbf L\wh{\bm\alpha}) \Vert^2
+ \sum_{j \in \mc I_1}
\l\vert \mc C_{\wh\cp_{j - 1}, \wh\cp_{j}, \wh\cp_{j + 1}}(\mbf Y - \mbf L\wh{\bm\alpha}) \r\vert^2
\nn \\
\ge& \l\lceil \frac{q}{2} \r\rceil
\min_{1 \le j \le q} \l\vert \mc C_{\wh{\cp}_{j - 1}, \wh{\cp}_{j}, \wh{\cp}_{j + 1}}
(\mbf Y - \mbf L\wh{\bm\alpha}) \r\vert^2
\nn 
\end{align}
where $\mbf R_{- \mc I_1}$ denotes a matrix
constructed by merging the $j$-th and the $(j + 1)$-th columns of $\mbf R$
via summing them up for all $j \in \mc I_1$,
while the rest of the columns of $\mbf R$ are unchanged,
with $\mc I_1$ denoting a subset of $\{1, \ldots, q\}$ consisting of all the odd indices.
For notational simplicity, let
$\mc C_j(\cdot) = \mc C_{\wh{\cp}_{j - 1}, \wh{\cp}_{j}, \wh{\cp}_{j + 1}}(\cdot)$
where there is no confusion.
Note that
\begin{align*}
\mc C_j(\mbf Y - \mbf L\wh{\bm\alpha})
= \mc C_j(\mbf R^\circ\bm\mu^\circ) + \mc C_j(\bm\nu^\circ - \mbf R^\circ\bm\mu^\circ) + \mc C_j(\bm\vep) 
+ \mc C_j(\mbf L(\wh{\bm\alpha} - \bm\alpha^\circ)).
\end{align*}
Without loss of generality, suppose that $\wh\cp_j \le \cp_j$.
Analogous arguments apply when $\wh\cp_j > \cp_j$.
By Lemma~\ref{lem:ck},
\begin{align*}
\mc C_j(\mbf R^\circ\bm\mu^\circ)
=& -\sqrt{\frac{N_{j - 1} N_j}{N_{j - 1} + N_j}}
\l\{ \frac{(N_j + \wh\cp_j - \cp_j)d_j}{N_j}
+ \frac{(\wh\cp_{j + 1} - \cp_{j + 1})_+d_{j + 1}}{N_j} \r.
\\
& + \l.
\frac{(\cp_{j - 1} - \wh\cp_{j - 1})_+d_{j - 1}}{N_{j - 1}}\r\}
=: \mc R_{61} + \mc R_{62} + \mc R_{63}.
\end{align*}
Under Assumptions~\ref{assum:est}, \ref{assum:cp:two} and~\ref{assum:pen},
$\min(N_{j - 1}, N_j)^{-1} d_j^2 \vert \wh\cp_j - \cp_j \vert = O(\delta_j^{-1} \rho_n) = o(1)$
(due to $D_n^{-1} \rho_n \to 0$ as $n \to \infty$)
and thus
\begin{align*}
\vert \mc R_{61} \vert &=
|d_j| \sqrt{\frac{N_{j - 1} N_j}{N_{j - 1} + N_j}} (1 + o(1))
\ge 
|d_j| \sqrt{\frac{\min(N_{j - 1}, N_j)}{2}} (1 + o(1))
\ge \sqrt{\frac{d_j^2 \delta_j}{2}} (1 + o(1)),
\end{align*}
while
\begin{align*}
\vert \mc R_{62} \vert \le 
\frac{d_{j + 1}^2(\wh\cp_{j + 1} - \cp_{j + 1})}{\sqrt{d_{j + 1}^2(\wh{\cp}_{j + 1} - \wh\cp_j - p)}} 
\le \frac{\rho_n}{\sqrt{D_n}} (1 + o(1)) = o(\sqrt{\rho_n})
\end{align*}
and $\mc R_{63}$ is similarly bounded.
Therefore, we conclude
\begin{align}
\label{eq:cusum:one}
\min_{1 \le j  \le q} \vert \mc C_j(\mbf R^\circ\bm\mu^\circ) \vert \ge \sqrt{\frac{D_n}{2}} (1 + o(1)).
\end{align}
Similarly, by~\eqref{eq:boundary} and Assumption~\ref{assum:est}, we derive
\begin{align}
\label{eq:cusum:gap}
\l\vert \mc C_j(\bm\nu^\circ  - \mbf R^\circ\bm\mu^\circ) \r\vert \le
p \sqrt{\frac{N_{j - 1}N_j}{N_{j - 1} + N_j}} \l\{\frac{\vert d_j \vert + \vert d_{j + 1} \vert}{N_j} 
+ \frac{\vert d_{j - 1} \vert}{N_{j - 1}}\r\} = o(1).
\end{align}
Invoking Assumption~\ref{assum:ar}~\ref{assum:ar:four}, it is easily seen that on $\mc E_n$,
\begin{align}
\label{eq:cusum:two}
\vert \mc C_j(\bm\vep) \vert \le 2\omega_n.
\end{align}
Finally, by~\eqref{eq:L:bound:one} and~\eqref{eq:alt:consist:two},
\begin{align}
\l\vert \mc C_j(\mbf L(\wh{\bm\alpha} - \bm\alpha^\circ)) \r\vert
&= \sqrt{\frac{N_{j - 1} N_j}{N_{j - 1} + N_j}} \l\vert \frac{1}{N_{j - 1}} 
\mbf 1^\top \mbf L_{(j - 1)} (\wh{\bm\alpha} - \bm\alpha^\circ)
- \frac{1}{N_j} \mbf 1^\top \mbf L_{(j)}(\wh{\bm\alpha} - \bm\alpha^\circ)
\r\vert
\nn \\
&= O\l(\sqrt{\min(N_{j - 1}, N_j)} \cdot \sqrt{\frac{\log(n) \vee q\rho_n}{n}}\r) 
= O\l(\sqrt{\log(n) \vee q\rho_n}\r). 
\label{eq:cusum:three}
\end{align}
By~\eqref{eq:cusum:one}--\eqref{eq:cusum:three},
under Assumption~\ref{assum:cp:two},
there exists some constant $C_0 > 0$ satisfying
\begin{align}
\label{eq:r51}
\mc R_{51}  \ge C_0 q D_n \qquad \text{for $n$ large enough.}
\end{align}

Next, we note that
\begin{align*}
& \Vert (\mbf I - \bm\Pi_{\mbf R})(\mbf Y - \mbf L\wh{\bm\alpha}) \Vert^2
= \Vert \bm\vep \Vert^2 - \bm\vep^\top \bm\Pi_{\mbf R} \bm\vep + 
\Vert (\mbf I - \bm\Pi_{\mbf R}) \mbf L(\wh{\bm\alpha} - \bm\alpha^\circ) \Vert^2
+ \Vert (\mbf I - \bm\Pi_{\mbf R}) \bm\nu^\circ \Vert^2
\\
& + 2 (\wh{\bm\alpha} - \bm\alpha^\circ)^\top\mbf L^\top (\mbf I - \bm\Pi_{\mbf R}) \bm\nu^\circ
- 2\bm\vep^\top (\mbf I - \bm\Pi_{\mbf R}) \mbf L(\wh{\bm\alpha} - \bm\alpha^\circ)
- 2\bm\vep^\top (\mbf I - \bm\Pi_{\mbf R}) \bm\nu^\circ
\\
&=: \Vert \bm\vep \Vert^2 - \mc R_{71} + \mc R_{72} + \mc R_{73} + \mc R_{74} + \mc R_{75} + \mc R_{76}.
\end{align*}
First, by Assumption~\ref{assum:ar}~\ref{assum:ar:four},
$\mc R_{71} = O(\sum_{j = 0}^{q} N_j \omega_n^2 \cdot N_j^{-1}) = O(q\omega_n^2)$
on $\mc E_n$.
Also, from Proposition~\ref{prop:lai} and~\eqref{eq:alt:consist:two},
$\mc R_{72} \le \Vert \mbf L(\wh{\bm\alpha} - \bm\alpha^\circ) \Vert^2 
= O(\log(n) \vee q\rho_n)$. 
In addition,
\begin{align*}
\mc R_{73} \le 2\Vert \bm\nu^\circ - \mbf R\bm\mu^\circ \Vert^2
+ 2\Vert \mbf R(\bm\mu^\circ - (\mbf R^\top\mbf R)^{-1} \mbf R^\top\mbf \bm\nu^\circ) \Vert^2
\end{align*}
where the first term is $O(q\rho_n)$ as in~\eqref{eq:boundary:two}.
From~\eqref{eq:boundary} and the definition of $\mbf R$ and $\mbf R^\circ$,
\begin{align}
\bm\mu^\circ - (\mbf R^\top\mbf R)^{-1} \mbf R^\top \mbf R^\circ\bm\mu^\circ
& = \bmx
\frac{- (\wh\cp_1 - \cp_1)_+ d_1}{\wh\cp_1} \\
\frac{(\cp_1 - \wh\cp_1)_+ d_1 - (\wh\cp_2 - \cp_2)_+ d_2}{\wh\cp_2 - \wh\cp_1} \\
\vdots \\
\frac{(\cp_{q} - \wh\cp_{q})_+ d_{q}}{n - \wh\cp_{q}}
\emx,
\label{eq:r73:one}
\\
\l\vert [(\mbf R^\top\mbf R)^{-1} \mbf R^\top (\mbf R^\circ\bm\mu^\circ - \bm\nu^\circ)]_j \r\vert
& \le \frac{p(\vert d_{j - 1} \vert + \vert d_j \vert)}{\wh\cp_j - \wh\cp_{j - 1}}
\label{eq:r74:two}
\end{align}
(recall that $\wh\cp_0 = \cp_0 = 0$ and $\wh\cp_{q + 1} = \cp_{q + 1} = n$)
such that by Assumptions~\ref{assum:est} and~\ref{assum:cp:two}, we obtain
\begin{align*}
\Vert \mbf R(\bm\mu^\circ - (\mbf R^\top\mbf R)^{-1} \mbf R^\top \bm\nu^\circ) \Vert^2
\le C_1 \sum_{j = 1}^{q} d_j^2 \cdot \frac{(d_j^{-2} \rho_n)^2 + p^2}{\wh\cp_{j + 1} - \wh\cp_j} = o(q\rho_n)
\end{align*}
for some constant $C_1 > 0$,
hence $\mc R_{73} = O(q\rho_n)$. 
The bounds on $\mc R_{72}$ and $\mc R_{73}$ imply the 
$O(\log(n) \vee q \rho_n)$ bound on $\mc R_{74}$. Next,
since $\lambda_{\max}((\mbf L^\top (\mbf I - \bm\Pi_{\mbf R}) \mbf L)^{-1}) \le \lambda_{\min}^{-1}(\mbf X^\top\mbf X)$,
we have
\begin{align*}
\vert \mc R_{75} \vert \le \Vert (\mbf L^\top (\mbf I - \bm\Pi_{\mbf R}) \mbf L)^{-1}
\mbf L^\top (\mbf I - \bm\Pi_{\mbf R}) \bm\vep \Vert \;
\Vert \mbf L^\top (\mbf I - \bm\Pi_{\mbf R}) \mbf L \Vert \; \Vert \wh{\bm\alpha} - \bm\alpha^\circ \Vert
= O\l(\log(n) \vee q \rho_n\r)
\end{align*}
from Lemma~1 of \cite{lai1982b}, Proposition~\ref{prop:lai} and~\eqref{eq:alt:consist:two}.
Finally,
\begin{align*}
\vert \mc R_{76} \vert
\le 
2\vert \bm\vep^\top(\bm\nu^\circ - \mbf R\bm\mu^\circ) \vert + 
2 \vert \bm\vep^\top\mbf R (\bm\mu^\circ - (\mbf R^\top\mbf R)^{-1} \mbf R^\top \bm\nu) \vert
\end{align*}
where using the arguments involved in bounding $\mc R_{45}$,
we have the first term bounded by $O(q(\rho_n \vee \omega_n^2))$, 
while the second term is bounded as
\begin{align*}
O\l( \sum_{j = 1}^{q} \sqrt{N_j}\omega_n \cdot 
\frac{d_j^{-2}\rho_n \cdot |d_j|}{N_j} \r) 
= O\l(\sum_{j = 1}^q \frac{\omega_n \rho_n}{\sqrt{D_n}} \r)
= O(q \rho_n),
\end{align*}
on $\mc E_n$,
recalling~\eqref{eq:r73:one}--\eqref{eq:r74:two} and by Assumptions~\ref{assum:ar}~\ref{assum:ar:four}, 
\ref{assum:est} and~\ref{assum:cp:two}.
Therefore, $\mc R_{76} = O(q (\rho_n \vee \omega_n^2))$.
Collecting the bounds on $\mc R_{71}$--$\mc R_{76}$, we obtain
\begin{align}
\label{eq:alt:sc0}
\Vert (\mbf I - \bm\Pi_{\mbf R})(\mbf Y - \mbf L\wh{\bm\alpha}) \Vert^2 =
\Vert \bm\vep \Vert^2 + 
O\l(\log(n) \vee q(\rho_n \vee \omega_n^2) \r).
\end{align}
From~\eqref{eq:alt:sc}, \eqref{eq:r51} and~\eqref{eq:alt:sc0},
\begin{align}
\label{eq:alt:diff}
\Vert (\mbf I - \bm\Pi_{\mbf 1})(\mbf Y - \mbf L^\circ\wh{\bm\alpha}) \Vert^2
- \Vert \mbf Y - \mbf X\wh{\bm\beta} \Vert^2
\ge C_0 q D_n + O\l(\log(n) \vee q(\rho_n \vee \omega_n^2)\r). 
\end{align}
Note that
\begin{align}
& \sc_0(\{X_t\}_{t = 1}^n, \wh{\bm\alpha}(p)) - \sc(\{X_t\}_{t = 1}^n, \wh{\Cp}, p) 
\nn \\
&= \frac{n}{2}\log\l(1 + \frac{ \Vert (\mbf I - \bm\Pi_{\mbf 1})(\mbf Y - \mbf L^\circ\wh{\bm\alpha}) \Vert^2
-\Vert \mbf Y - \mbf X\wh{\bm\beta} \Vert^2}{\Vert \mbf Y - \mbf X\wh{\bm\beta} \Vert^2}\r) - q \xi_n
=: \frac{n}{2}\log(1 + \mc R_8) - q \xi_n.
\label{eq:sc:diff}
\end{align}
When $\mc R_8 \ge 1$, we have the RHS of~\eqref{eq:sc:diff} trivially bounded away from zero 
by Assumption~\ref{assum:pen}.
When $\mc R_8 < 1$, note that for $g(x) = \log(x)/(x - 1)$,
since $\lim_{x \downarrow 1} g(x) \to 1$ and from its continuity,
there exists a constant $C_2 > 0$ such that $\inf_{1\le x < 2} g(x) \ge C_2$. 
Therefore,
\begin{align*}
\frac{n}{2}\log(1 + \mc R_8) - q \xi_n
\ge C_3 q D_n + O\l(\log(n) \vee q(\rho_n \vee \omega_n^2)\r) - q \xi_n > 0,
\end{align*}
invoking Assumption~\ref{assum:ar}~\ref{assum:ar:two}--\ref{assum:ar:three},
\eqref{eq:alt:sc} and~\eqref{eq:alt:diff} for some $C_3 > 0$.

\medskip
\noindent \underline{\it (iii) Order selection consistency.}
Thus far, we have assumed that the AR order $p$ is known.
We show next that for $n$ large enough, the order $p$ is consistently estimated by $\wh p$
obtained as in~\eqref{eq:p:est}.
Recall the notation 
$\wh{\bm\beta}(\wh\Cp, r) = (\wh{\bm\alpha}^\top(r), \wh{\bm\mu}^\top(\wh{\Cp}))^\top$.
Firstly, suppose that $r > p$ while $r \le p_{\max}$. Then,
by~\eqref{eq:null:consist} when $q = 0$ 
or by~\eqref{eq:alt:consist:one} and~\eqref{eq:alt:consist:two} when $q \ge 1$
(here, $q$ coincides with the cardinality of $\wh\Cp$), we have
\begin{align*}
\Vert \wh{\bm\alpha}(r) - \bm\alpha^\circ(r) \Vert 
= O\l(\sqrt{\frac{\log(n) \vee q\rho_n}{n}}\r)
\quad \text{with} \quad
\bm\alpha^\circ(r) = (\bm\alpha^{\circ\top}, \underbrace{0, \ldots, 0}_{r - p})^\top
\end{align*}
whether there are changes or not, 
see the steps leading to~\eqref{eq:alt:consist:two}.
Then, the arguments similar to those adopted in showing \eqref{eq:pf:thm:sc:h0:one} or \eqref{eq:alt:sc}
establish that
\begin{align*}
\Vert \mbf Y - \mbf X(\wh\Cp, r)\wh{\bm\beta}(\wh\Cp, r) \Vert^2
= \Vert \bm\vep \Vert^2 + O\l(\log(n) \vee q(\rho_n \vee \omega_n^2)\r) 
\end{align*}
and therefore, we have
\begin{align*}
&\sc\l(\{X_t\}_{t = 1}^n, \wh\Cp, r\r) - 
\sc\l(\{X_t\}_{t = 1}^n, \wh\Cp, p\r)
\\
=& - \frac{n}{2}\log\l(1 + \frac{\Vert \mbf Y - \mbf X(\wh\Cp, p)\wh{\bm\beta}(\wh\Cp, p) \Vert^2
- \Vert \mbf Y - \mbf X(\wh\Cp, r)\wh{\bm\beta}(\wh\Cp, r) \Vert^2}
{\Vert \mbf Y - \mbf X(\wh\Cp, r)\wh{\bm\beta}(\wh\Cp, r) \Vert^2}\r)
+ (r - p)\xi_n
\\
=& O\l(\log(n) \vee q(\rho_n \vee \omega_n^2) \r) + (r - p)\xi_n > 0 
\end{align*}
for $n$ large enough, by Assumption~\ref{assum:pen}.

Next, consider $r < p$. 
For notational convenience, let $\bm\Pi(r) = \bm\Pi_{\mbf X(\wh\Cp, r)}$,
and the sub-matrix of $\mbf X(\wh\Cp, p)$ containing its columns corresponding to the $i$-th lags
for $i = r + 1, \ldots, p$ by $\mbf X(p \vert r)$.
Then, 
$[\mbf X(p \vert r)^\top (\mbf I - \bm\Pi(r)) \mbf X(p \vert r)]^{-1}$
is a sub-matrix of $(\mbf X(\wh\Cp, p)^\top \mbf X(\wh\Cp, p))^{-1}$
and thus by Theorem~4.2.2 of \cite{horn1985} and Proposition~\ref{prop:lai}, we have
\begin{align}
& \lambda_{\max}\l( \mbf X(p \vert r)^\top (\mbf I - \bm\Pi(r)) \mbf X(p \vert r) \r)
\le \lambda_{\max}\l( \mbf X(\wh\Cp, p)^\top \mbf X(\wh\Cp, p) \r) 
\nn \\
& \qquad \le \tr\l( \mbf X(\wh\Cp, p)^\top \mbf X(\wh\Cp, p) \r)  = O(n) \quad 
\text{and similarly,}
\label{eq:order:lambda:max}
\\
& \lambda_{\min}\l( \mbf X(p \vert r)^\top (\mbf I - \bm\Pi(r)) \mbf X(p \vert r) \r)
\ge \lambda_{\min}\l( \mbf X(\wh\Cp, p)^\top \mbf X(\wh\Cp, p) \r) \quad \text{and thus}
\nn \\
& \liminf_{n \to \infty} n^{-1} \lambda_{\min}\l( \mbf X(p \vert r)^\top (\mbf I - \bm\Pi(r)) \mbf X(p \vert r) \r) > 0.
\label{eq:order:lambda:min}
\end{align}
It then follows that
\begin{align}
& \l\Vert \mbf Y - \mbf X(\wh\Cp, r) \wh{\bm\beta}(\wh\Cp, r) \r\Vert^2
- \l\Vert \mbf Y - \mbf X(\wh\Cp, p) \wh{\bm\beta}(\wh\Cp, p) \r\Vert^2
\nn \\
=& \l\Vert \l[\mbf X(p \vert r)^\top (\mbf I - \bm\Pi(r)) \mbf X(p \vert r)\r]^{-1/2}
\mbf X(p \vert r)^\top (\mbf I - \bm\Pi(r)) \mbf Y \r\Vert^2
\nn \\
\ge & \, \lambda_{\min}\l(\mbf X(p \vert r)^\top (\mbf I - \bm\Pi(r)) \mbf X(p \vert r)\r) \;
\l\Vert \bmx \alpha^\circ_{r + 1} \\ \vdots \\ \alpha^\circ_p \emx \r\Vert^2
\nn \\
& - 
\l\Vert \l[ \mbf X(p \vert r)^\top (\mbf I - \bm\Pi(r)) \mbf X(p \vert r) \r]^{-1/2}
\mbf X(p \vert r)^\top (\mbf I - \bm\Pi(r)) \bm\vep \r\Vert^2
\nn \\
& - 
\l\Vert \l[ \mbf X(p \vert r)^\top (\mbf I - \bm\Pi(r)) \mbf X(p \vert r) \r]^{-1/2}
\mbf X(p \vert r)^\top (\mbf I - \bm\Pi(r)) 
\l(\bm\nu^\circ - \mbf R(\wh\Cp)\bm\mu^\circ \r) \r\Vert^2
\nn \\
\ge & \, C_4 n \sum_{i = r + 1}^p (\alpha^{\circ}_i)^2
+ O(\log(n)) + O(q\rho_n) 
\label{eq:order:lb}
\end{align}
with some constant $C_4 > 0$ for $n$ large enough,
where the $O(\log(n))$ bound on the RHS of~\eqref{eq:order:lb}
is due to~\eqref{eq:order:lambda:max}, \eqref{eq:order:lambda:min}
and Lemma~1 of \cite{lai1982b},
while the $O(q\rho_n)$ bound from~\eqref{eq:boundary:two},
regardless of whether there are change points or not.
Therefore, we have
\begin{align*}
&\sc\l(\{X_t\}_{t = 1}^n, \wh\Cp, r\r) - 
\sc\l(\{X_t\}_{t = 1}^n, \wh\Cp, p\r)
\\
=& \frac{n}{2}\log\l(1 + \frac{\Vert \mbf Y - \mbf X(\wh\Cp, r) \wh{\bm\beta}(\wh\Cp, r) \Vert^2
- \Vert \mbf Y - \mbf X(\wh\Cp, p) \wh{\bm\beta}(\wh\Cp, p) \Vert^2}
{\Vert \mbf Y - \mbf X(\wh\Cp, p) \wh{\bm\beta}(\wh\Cp, p) \Vert^2}\r)
- (p - r)\xi_n
\\
\ge& C_5 n - (p - r)\xi_n > 0 
\end{align*}
with some constant $C_5 > 0$
for $n$ large enough, by Assumption~\ref{assum:pen},
\eqref{eq:pf:thm:sc:h0:one} and \eqref{eq:alt:sc}.

\medskip
\noindent \underline{\it (iv) When $M > 1$.}
The above (i)--(iii) completes the proof 
in the special case when Assumption~\ref{assum:est} is met with $M = 1$.
In the general case where $M > 1$, 
the above proof is readily adapted to prove the claim of the theorem.

\begin{enumerate}[label = (\alph*)]
\item First, note that for any $l \ge l^*$,
the intervals examined in Step~1 of the gSa algorithm, 
$\{\wh\cp_{l - 1, u_v} + 1, \ldots, \wh\cp_{l - 1, u_v + 1} - 1\}$,
$v = 1, \ldots, q^\prime_l$, 
correspond to one of the following cases under Assumption~\ref{assum:est}:
{\bf Null case} with no `detectable' change points,
i.e.\ either $\Cp \cap \{\wh\cp_{l - 1, u_v} + 1, \ldots, \wh\cp_{l - 1, u_v + 1} - 1\} = \emptyset$,
or all $\cp_j \in \Cp \cap \{\wh\cp_{l - 1, u_v} + 1, \ldots, \wh\cp_{l - 1, u_v + 1} - 1\}$
satisfy $d_j^2 \min(\cp_j - \wh\cp_{l - 1, u_v}, \wh\cp_{l - 1, u_v + 1} - \cp_j) \le \rho_n$, or
{\bf change point case} with
$\Cp \cap \{\wh\cp_{l - 1, u_v} + 1, \ldots, \wh\cp_{l - 1, u_v + 1} - 1\} \ne \emptyset$ and
$d_j^2 \min(\cp_j - \wh\cp_{l - 1, u_v}, \wh\cp_{l - 1, u_v + 1} - \cp_j) \ge D_n - \rho_n$
for at least one $\cp_j \in \Cp \cap \{\wh\cp_{l - 1, u_v} + 1, \ldots, \wh\cp_{l - 1, u_v + 1} - 1\}$.

In fact, when $l = l^*$, all 
$\{\wh\cp_{l^* - 1, u_v} + 1, \ldots, \wh\cp_{l^* - 1, u_v + 1} - 1\}$ for $v = 1, \ldots, q^\prime_{l^*}$,
correspond to the change point case,
while when $l \ge l^* +  1$, they all correspond to the null case.


\item In the null case, the set $\mc A = \wh\Cp_l \cap \{\wh\cp_{l - 1, u_v} + 1, \ldots, \wh\cp_{l - 1, u_v + 1} - 1\}$
serves the role of the set of spurious estimators, $\wh\Cp$, as in~(i)
with $\vert \mc A \vert$ serving as $\wh q$.
Besides, we account for the possible estimation bias in 
the boundary points $\wh\cp_{l - 1, u_v}$ and $\wh\cp_{l - 1, u_v + 1}$ 
in the case of $q \ge 1$ (while there are no detectable change points within
$\{\wh\cp_{l - 1, u_v} + 1, \ldots, \wh\cp_{l - 1, u_v + 1} - 1\}$),
by replacing the bound~\eqref{eq:null:consist} derived in (i), 
with \eqref{eq:alt:consist:one} and~\eqref{eq:alt:consist:two} in (ii).
Consequently, \eqref{eq:pf:thm:sc:h0:one} 
and \eqref{eq:pf:thm:sc:h0:two} are written with 
$O\l(\log(n) \vee \wh q(\rho_n \vee \omega_n^2))\r)$
(see~\eqref{eq:alt:sc} and~\eqref{eq:alt:sc0}),
which leads to 
\begin{align*}
\sc_0\l(\{X_t\}_{t = \wh\cp_{l - 1, u_v} + 1}^{\wh\cp_{l - 1, u_v + 1}}, \wh{\bm\alpha}(p)\r) - 
\sc\l(\{X_t\}_{t = \wh\cp_{l - 1, u_v} + 1}^{\wh\cp_{l - 1, u_v + 1}}, \mc A, p\r) 
\\
= O\l(\log(n) \vee \vert \mc A \vert (\rho_n \vee \omega_n^2)\r) - \vert \mc A \vert \xi_n < 0
\end{align*}
for $n$ large enough.

\item In the change point case, the arguments under (ii) are applied analogously
by regarding $\mc A$ as $\wh\Cp$ therein, 
with $\vert \mc A \vert$ equal to the number of detectable change points
in $\{\wh\cp_{l - 1, u_v} + 1, \ldots, \wh\cp_{l - 1, u_v + 1} - 1\}$ as defined in (a).
Then, we obtain
\begin{align*}
\sc_0\l(\{X_t\}_{t = \wh\cp_{l - 1, u_v} + 1}^{\wh\cp_{l - 1, u_v + 1}}, \wh{\bm\alpha}(p)\r) - 
\sc\l(\{X_t\}_{t = \wh\cp_{l - 1, u_v} + 1}^{\wh\cp_{l - 1, u_v + 1}}, \mc A, p\r) 
\\
\ge C_3 \vert \mc A \vert D_n + O\l(\log(n) \vee \vert \mc A \vert(\rho_n \vee \omega_n^2)\r) - \vert \mc A \vert \xi_n
> 0
\end{align*}
for $n$ large enough.

\item The proof on order selection consistency in (iii) holds from 
regardless of whether there are detectable change points 
in $\{\wh\cp_{l - 1, u_v} + 1, \ldots, \wh\cp_{l - 1, u_v + 1} - 1\}$ or not.
Thus with (a)--(c) above, the proof is complete.
\end{enumerate}

\subsection{Proof of Proposition~\ref{prop:refine}}

For a fixed $j = 1, \ldots, q$, we drop the subscript $j$ and write
$\check\cp = \check\cp_j$, $\ell = \ell_j$, $r = r_j$, $\cp = \cp_j$, 
$f^\prime = f^\prime_j$ and $\delta = \delta_j$.
In what follows, we assume that $\mc X_{\ell, \check\cp, r} > 0$;
otherwise, consider $-X_t$ (resp. $-f_t$ and $-Z_t$) in place of $X_t$ ($f_t$ and $Z_t$).
Then, on $\mc Z_n$, we have
\begin{align}
\label{eq:bound:z}
\max_{\ell < \c < r} \vert \mc Z_{\ell, \c, r} \vert
\le \max_{\ell < \c < r} \l(\sqrt{\frac{r - \c}{r - \ell}} + \sqrt{\frac{\c - \ell}{r - \ell}}\r) \zeta_n = \sqrt{2}\zeta_n,
\end{align}
while by \eqref{eq:ref:min:dist}--\eqref{eq:ref:one:cp},
\begin{align}
\label{eq:bound:f}
\vert \mc F_{\ell, \cp, r} \vert \ge \sqrt{\frac{(f^\prime)^2\delta}{4}}.
\end{align}
By Lemma~\ref{lem:venkat} and~\eqref{eq:ref:one:cp}, we have 
$\mc F_{\ell, \c, r}$ strictly increases, peaks at $\c = \cp$ and then decreases in modulus
without changing signs.
Also by Lemma~7 of \cite{wang2018}, we obtain
\begin{align}
\vert \mc F_{\ell, \cp, r}  - \mc F_{\ell, \c, r} \vert
\ge \frac{2}{3\sqrt{6}} \frac{\vert f^\prime \vert \, |\c - \cp|}
{\sqrt{\min(\cp - \ell, r - \cp)}}
\label{eq:lem:wang}
\end{align}
for $|\c - \cp| \le \min(\cp - \ell, r - \cp)/2$.
Then, from~\eqref{eq:nonasymp:cp} and \eqref{eq:bound:z}--\eqref{eq:bound:f},
\begin{align}
\vert \mc F_{\ell, \check\cp, r} \vert \ge \vert \mc F_{\ell, \cp, r} \vert 
- 2 \max_{\ell < \c < r} \vert \mc Z_{\ell, \c, r} \vert
\ge \sqrt{\frac{(f^\prime)^2\delta}{4}}  - 2\sqrt{2}\zeta_n > 
\frac{\sqrt{(f^\prime)^2\delta}}{4},
\label{eq:bound:f:hat}
\end{align}
which implies that $\vert \mc Z_{\ell, \check\cp, r} \vert/\vert \mc F_{\ell, \check\cp, r} \vert = o(1)$
and consequently that $\mc F_{\ell, \cp, r} > \mc F_{\ell, \check\cp, r} > 0$ for $n$ large enough.
Below, we consider the case where $\check\cp \le \cp$; 
the case where $\check\cp > \cp$ can be handled analogously.
We first establish that 
\begin{align}
\label{eq:cp:half}
\cp - \check\cp \le \min(\cp - \ell, r - \cp)/2.
\end{align}
If $\cp - \check\cp > \min(\cp - \ell, r - \cp)/2 \ge \delta/4$ (due to~\eqref{eq:ref:min:dist}),
by Lemma~\ref{lem:venkat} and~\eqref{eq:lem:wang}, we have
\begin{align*}
\mc F_{\ell, \cp, r} - \mc F_{\ell, \check\cp, r} \ge \frac{1}{3\sqrt{3}} \sqrt{(f^\prime)^2 \delta} 
\end{align*}
while $\vert \mc Z_{\ell, \cp, r} - \mc Z_{\ell, \check\cp, r} \vert \le 2\sqrt{2}\zeta_n$,
thus contradicting that $\mc X_{\ell, \check\cp, r} \ge \mc X_{\ell, \cp, r}$ under~\eqref{eq:nonasymp:cp}.
Next, for some $\wt\rho_n$ satisfying $(f^\prime)^{-2} \wt\rho_n \le \delta/4$, we have
\begin{align*}
& \p\l( {\arg\max}_{ \ell < \c < r } \vert \mc X_{\ell, \c, r} \vert \le \cp - (f^\prime)^{-2} \wt\rho_n\r)
\le \p\l(\max_{\cp - \delta/4 \le \c \le \cp - (f^\prime)^{-2}\wt\rho_n} \mc X_{\ell, \c, r} \ge \mc X_{\ell, \cp, r}\r)
\\
& \le \p\l(\max_{\cp - \delta/4 \le \c \le \cp - (f^\prime)^{-2}\wt\rho_n} (\mc F_{\ell, \c, r} + \mc Z_{\ell, \c, r})^2
- (\mc F_{\ell, \cp, r} + \mc Z_{\ell, \cp, r})^2 \ge 0\r)
\\
&= \p\l( \max_{\cp - \delta/4 \le \c \le \cp - (f^\prime)^{-2}\wt\rho_n} 
- D_1(\c) D_2(\c) \, \l(1 + \frac{A_1(\c)}{D_1(\c)}\r) \l(1 + \frac{A_2(\c)}{D_2(\c)} \r) \ge 0 \r)
\\
& \le \p\l( \max_{\cp - \delta/4 \le \c \le \cp - (f^\prime)^{-2}\wt\rho_n} 
\l\vert \frac{A_1(\c)A_2(\c)}{D_1(\c)D_2(\c)} + \frac{A_1(\c)}{D_1(\c)} + \frac{A_2(\c)}{D_2(\c)} \r\vert \ge 1 \r)
\\
& \le 
2 \p\l( \max_{\cp - \delta/4 \le \c \le \cp - (f^\prime)^{-2}\wt\rho_n} \frac{\vert A_1(\c) \vert}{D_1(\c)} \ge \frac{1}{3}\r)
+ 2 \p\l( \max_{\cp - \delta/4 \le \c \le \cp - (f^\prime)^{-2}\wt\rho_n} \frac{\vert A_2(\c) \vert}{D_2(\c)} \ge \frac{1}{3}\r),
\quad \text{where}
\\
&
D_1(\c) = \mc F_{\ell, \cp, r} - \mc F_{\ell, \c, r}, \, D_2(\c) = \mc F_{\ell, \cp, r} + \mc F_{\ell, \c, r}, \,
A_1(\c) = \mc Z_{\ell, \cp, r} - \mc Z_{\ell, \c, r}, \, A_2(\c) = \mc Z_{\ell, \cp, r} + \mc Z_{\ell, \c, r}.
\end{align*}
Note that
\begin{align*}
\vert A_1(\c) \vert &\le \l\vert \l( \sqrt{\frac{r - \ell}{(\cp - \ell) (r - \cp)}} - 
\sqrt{\frac{r - \ell}{(\c - \ell) (r - \c)}} \r) \sum_{t = \ell + 1}^{\c} (Z_t - \bar{Z}_{\ell:r}) \r\vert
\\
&+ \sqrt{\frac{r - \ell}{(\cp - \ell) (r - \cp)}} \l\vert \sum_{t = \c + 1}^{\cp} (Z_t - \bar{Z}_{\ell:r}) \r\vert
=: A_{11}(\c) + A_{12}(\c).
\end{align*}
For $\c < \cp$, we obtain
\begin{align*}
& \sqrt{\frac{r - \ell}{(\cp - \ell) (r - \cp)}} - \sqrt{\frac{r - \ell}{(\c - \ell) (r - \c)}} 
= \sqrt{\frac{r - \ell}{(\cp - \ell) (r - \cp)}} \l( 1 - \sqrt{\frac{(\cp - \ell) (r - \cp)}{(\c - \ell)(r - \c)}} \r)
\\
\le& \sqrt{\frac{r - \ell}{(\cp - \ell) (r - \cp)}} \l( 1 - \sqrt{ 1- \frac{\cp - \c}{r - \c}} \r)
\le \frac{1}{2} \sqrt{\frac{r - \ell}{(\cp - \ell) (r - \cp)}} \frac{\cp - \c}{r - \c}
\end{align*}
and similarly,
\begin{align*}
& \sqrt{\frac{r - \ell}{(\c - \ell) (r - \c)}}  - \sqrt{\frac{r - \ell}{(\cp - \ell) (r - \cp)}} 
\le \frac{1}{2} \sqrt{\frac{r - \ell}{(\c - \ell) (r - \c)}} \frac{\cp - \c}{\cp - \ell},
\end{align*}
such that on $\mc Z_n$, due to~\eqref{eq:ref:min:dist} and~\eqref{eq:cp:half},
\begin{align*}
A_{11}(\c) \le \sqrt{\frac{r - \ell}{(\cp - \ell)(r - \cp)}} \frac{2(\cp - \c)}{\min(\cp - \ell, r - \cp)}
\l(\sqrt{\c - \ell} \, \zeta_n + \frac{\c - \ell}{\sqrt{r - \ell}} \zeta_n\r) 
\le \frac{4(\cp - \c) \zeta_n}{\delta}.
\end{align*}
Also, by~\eqref{eq:ref:min:dist},
\begin{align*}
A_{12}(\c) \le \sqrt{\frac{2}{\delta}} \l( \l\vert \sum_{t = \c + 1}^\cp Z_t \r\vert + \frac{\cp - \c}{\sqrt{r - \ell}} \zeta_n \r).
\end{align*}
Then, by~\eqref{eq:lem:wang} and~\eqref{eq:nonasymp:cp}, there exists some $c_3 > 0$ such that
setting $\wt\rho_n =  c_3 (\wt\zeta_n)^2$, we have
\begin{align*}
& \p\l( \max_{\cp - \delta/4 \le \c \le \cp - (f^\prime)^{-2}\wt\rho_n} \frac{\vert A_1(\c) \vert}{D_1(\c)} \ge 
\frac{1}{3}, \, \wt{\mc Z}_n\r)
\\
\le & 
\p\l( \max_{\cp - \delta/4 \le \c \le \cp - (f^\prime)^{-2}\wt\rho_n} 
\frac{\sqrt{(f^\prime)^{-2}\wt\rho_n}}{\cp - \c} \sum_{t = \c + 1}^\cp Z_t
\ge \sqrt{\wt\rho_n} \l(\frac{1}{3} - \frac{(2\sqrt{2} + 1) \zeta_n}{\sqrt{(f^\prime)^2 \delta}} \r), \, \wt{\mc Z}_n\r)
= 0,
\end{align*}
which holds uniformly over $j = 1, \ldots, q$.
Next, note that from~\eqref{eq:bound:z},
\begin{align*}
\max_{\cp - \delta/4 \le \c \le \cp - (f^\prime)^{-2}\wt\rho_n} \vert A_2(\c) \vert
\le 2 \sqrt{2} \zeta_n,
\end{align*}
while from~\eqref{eq:bound:f},
\begin{align*}
\min_{\cp - \delta/4 \le \c \le \cp - (f^\prime)^{-2}\wt\rho_n}\vert D_2(\c) \vert \ge 
\frac{\sqrt{(f^\prime)^2\delta}}{2}
\end{align*}
and thus
\begin{align*}
\p\l( \max_{\cp - \delta/4 \le \c \le \cp - (f^\prime)^{-2}\wt\rho_n} \frac{\vert A_2(\c) \vert}{D_2(\c)} \ge \frac{1}{3},
\mc Z_n \r) = 0
\end{align*}
under~\eqref{eq:nonasymp:cp}, which completes the proof.

\section{Assumptions~\ref{assum:error} and~\ref{assum:ar}}
\label{sec:assum}

In this section, we provide an example that fulfils Assumptions~\ref{assum:error} and~\ref{assum:ar}~\ref{assum:ar:four} motivated by the Nagaev-type tail probability inequalities derived in \cite{zhang2017gaussian} for dependent time series with sub-exponential innovations.

Suppose that $Z_t = \sum_{\ell = 0}^\infty b_\ell \vep_{t - \ell}$
where the innovations $\{ \vep_t \}$ are i.i.d.\ sub-exponential random variables
with $\E(\vep_t) = 0$. 
Further, we assume that the linear coefficients decay polynomially such that
there exists some $\gamma > 0$ and $\beta > 1$ satisfying 
$\vert b_\ell \vert \le \gamma \ell^{-\beta}$ for all $\ell \ge 1$.
With $\nu = 1$, the dependence adjusted sub-exponential norm
\begin{align*}
\Vert Z_{\cdot} \Vert_{\psi_\nu, 0} = \sup_{m \ge 2} m^{-\nu} \sum_{t = 0}^\infty 
\l\{ \E\l(\l\vert Z_t - Z_{t, \{0\}}\r\vert^m\r) \r\}^{1/m},
\end{align*}
is bounded from the above by some fixed constant $C_1 > 0$,
where $Z_{t, \{0\}} = \sum_{\ell = 0, \, \ell \ne t}^\infty b_\ell \vep_{t - \ell} + b_t \vep_0^\prime$ with $\vep_0^\prime$ an independent copy of $\vep_0$.
Then, by Lemma~C.4 of \cite{zhang2017gaussian}, there exists a fixed constant
$C_2 > 0$ such that
\begin{align*}
\p\l(\max_{0 \le s < e \le n} \frac{1}{\sqrt{e - s}} \l\vert \sum_{t = s + 1}^e Z_t \r\vert \ge \zeta_n \r)
\le C_2 n(n + 1) \exp\l( - \frac{3 \zeta_n^{2/3}}{4e \Vert Z_{\cdot} \Vert_{\psi_1, 0}} \r),
\end{align*}
i.e.\ we can set $\zeta_n = C_3\log^{3/2}(n)$ with a large enough $C_3 > 0$ (depending only on $\Vert Z_{\cdot} \Vert_{\psi_1, 0}$) and have $\p(\mc Z_n) \to 1$.
Using similar arguments and Bernstein's inequality (see e.g.\ Theorem~2.8.1 of \cite{vershynin2020}), 
we have $\p(\mc E_n) \to 1$ with $\omega_n \asymp \log(n)$.

\end{document}